\documentclass[10pt,twocolumn,letterpaper]{article}

\usepackage{wacv}      

\usepackage{times}
\usepackage{epsfig}
\usepackage{graphicx}
\usepackage{amsmath}
\usepackage{amssymb}
\usepackage{amsthm}
\usepackage{mathtools}
\usepackage{booktabs}
\usepackage{multirow}
\usepackage{array}
\usepackage{enumitem}
\usepackage{xcolor}
\usepackage{microtype}
\usepackage{algorithm}
\usepackage{algpseudocode}

\newtheorem{proposition}{Proposition}

\newcommand{\R}{\mathbb{R}}

\newcommand{\Bcal}{\mathcal{B}}
\newcommand{\Acal}{\mathcal{A}}

\newcommand{\FP}{\mathrm{FP}}

\newcommand{\INT}{\mathrm{INT8}}
\newcommand{\FPh}{\mathrm{FP16}}

\newcommand{\ECE}{\mathrm{ECE}}
\newcommand{\AURC}{\mathrm{AURC}}
\newcommand{\Dice}{\mathrm{Dice}}

\newcommand{\softmax}{\mathrm{softmax}}

\newcommand{\Rel}{\mathcal{D}_{\mathrm{rel}}}

\newcommand{\Unc}{\mathcal{U}}

\setlist[itemize]{leftmargin=*,nosep}
\setlist[enumerate]{leftmargin=*,nosep}

\definecolor{wacvblue}{rgb}{0.21,0.49,0.74}
\usepackage[pagebackref,breaklinks,colorlinks,allcolors=wacvblue]{hyperref}
\usepackage{multirow}
\providecommand{\best}[1]{\textbf{\underline{#1}}}
\providecommand{\second}[1]{\textbf{#1}}
\def\wacvPaperID{3654} 
\def\confName{WACV}
\def\confYear{2027}

\title{Reliability-Preserving Compression\\for Deployable Medical Image Segmentation}

\author{Sagar Kumar\\
Indian Institute of Science, India\\
{\tt\small sagarkumar1@iisc.ac.in}
\and
Shailja Sharma\\
Indian Institute of Science (IISc), India\\
{\tt\small sshailja@iisc.ac.in}
\and
Tirthajit Baruah\\
Indian Institute of Science (IISc), India\\
{\tt\small tirthajitb@iisc.ac.in}
\and
Punit Rathore\\
Indian Institute of Science (IISc), India\\
{\tt\small prathore@iisc.ac.in}
}

\begin{document}
\maketitle
\begin{abstract}
Post-training compression of pretrained medical segmentation models commonly uses quantization, pruning, or distillation to reduce model size and computational cost. 
Existing medical segmentation quantization studies mainly evaluate Dice, model size, memory, and latency, but this is incomplete for safety-critical use because compressed models should also preserve uncertainty signals that indicate unreliable predictions. 
We study this problem in compressed MRI/CT segmentation and formalize why Dice can remain stable while uncertainty changes. 
Motivated by this failure mode, we propose score-guided aggregate-validated mixed precision (SGAV-MP), a post-training compression framework that selects layer precisions under reliability constraints. 
SGAV-MP ranks quantization actions with a compact benefit-risk score and accepts them only after full-model reliability validation on a calibration set. 
Across five 2D backbones and three 3D benchmarks, SGAV-MP preserves FP32-level Dice and high uncertainty-map agreement, achieving $3.59\times$--$7.8\times$ compression in 2D and $4.73\times$--$7.82\times$ in 3D. We also report backend-supported TensorRT deployment measurements separately to distinguish runtime execution from reliability-preserving storage compression.

\end{abstract}   

\section{Introduction}
Medical image segmentation has been dominated by U-Net-style architectures and their extensions. U-Net-style CNNs, nnU-Net, transformer-based architectures, and promptable medical segmentation models, have achieved strong performance across medical imaging modalities~\cite{ronneberger2015unet,isensee2021nnunet,hatamizadeh2022swinunetr,ma2024segment}. However, these models often operate on high-resolution images or volumetric scans. This makes deployment costly in memory, latency, and computation. This motivates compression for practical use, while actual runtime gains depend on backend and hardware support.

Model compression reduces deployment cost through pruning, quantization, and distillation~\cite{han2015deepcompression,jacob2018quantization,hinton2015distilling}. Post-training quantization (PTQ) is especially attractive because it can compress a pretrained model using only a small calibration set and without full retraining~\cite{nagel2020adaround,li2021brecq}. This is useful in medical settings, where data access, annotation, and retraining can be expensive. Recent medical segmentation quantization methods, including EfficientQ~\cite{zhang2024efficientq}, MedPTQ~\cite{qu2026medptq}, and QMedSAM~\cite{lu2024qmedsam}, report strong compression or deployment-related gains while preserving segmentation accuracy. However, their evaluation is mainly based on hard-mask accuracy and efficiency metrics such as Dice, HD95, model size, memory, and latency.

\begin{table}[t]
\centering
\caption{Representative cases where INT4 quantization may preserve usable Dice while degrading uncertainty and calibration. UQ Corr. denotes uncertainty-map correlation with the FP32 model.}
\label{tab:intro_gap_examples}
\setlength{\tabcolsep}{5.5pt}
\resizebox{\linewidth}{!}{
\begin{tabular}{cccccc}
\toprule
\textbf{Dataset} & \textbf{Model} &
\textbf{Dice} &
\textbf{ECE} &
\textbf{Brier} &
\textbf{UQ Corr.} \\
\midrule

PROMISE12 & EfficientNet-B0 
& $0.8892 \rightarrow 0.882$
& $0.0982 \rightarrow 0.112$
& $0.196 \rightarrow 0.22$
& $1.000 \rightarrow 0.428$ \\

PROMISE12 & ResNet50
& $0.886 \rightarrow 0.881$
& $0.103 \rightarrow 0.114$
& $0.205 \rightarrow 0.229$
& $1.000 \rightarrow 0.324$ \\

MosMedPlus & ResNet50
& $0.785 \rightarrow 0.651$
& $0.206 \rightarrow 0.365$
& $0.412 \rightarrow 0.730$
& $1.000 \rightarrow 0.119$ \\
\bottomrule
\end{tabular}
}
\end{table}
Accuracy-only evaluation is incomplete for safety-critical segmentation. A compressed model should not only predict a mask. It should also indicate when the prediction may be unreliable. Predictive uncertainty and calibration are important for ambiguous boundaries, low-contrast lesions, artifacts, small structures, domain shift, and out-of-distribution anatomy~\cite{mehrtash2020calibration,jungo2019assessing,huang2024reviewuq}. Table~\ref{tab:intro_gap_examples} shows the practical gap. INT4 quantization can leave hard-mask performance usable in some cases, while uncertainty correlation and calibration degrade substantially.

This paper asks:\begin{quote}
\emph{Can a compressed medical segmenter preserve not only segmentation accuracy, but also uncertainty, calibration, and failure-detection behavior?}
\end{quote}

We address this question with score-guided aggregate-validated mixed precision (SGAV-MP), a post-training mixed-precision quantization framework for reliability-preserving compression. 
SGAV-MP ranks layer-precision actions using compression benefit, layer importance, and connection-aware risk, but accepts an action only if the complete compressed model satisfies aggregate reliability constraints on a calibration set. 
We report compression as the main objective and separately report backend-supported FP32/FP16/INT8 deployment-related measurements for representative 2D and 3D settings; mixed INT4/INT2 policies are treated as storage-compression policies unless directly supported by the backend.

Our contributions are:
\begin{itemize}
\item We show that compressed medical segmenters can preserve Dice while degrading calibration, uncertainty-map fidelity, and failure-detection behavior.
\item We propose SGAV-MP, a post-training mixed-precision quantization method with full-model reliability validation before accepting compression actions.
\item We evaluate reliability-preserving compression across medical segmentation datasets and models using segmentation, efficiency, calibration, uncertainty-preservation, failure-detection, and backend-supported INT8 deployment-related metrics.
\end{itemize}

\section{Related Work}

Medical image segmentation is commonly based on U-Net-style models and their extensions. 
U-Net~\cite{ronneberger2015unet}, 3D U-Net~\cite{cicek20163dunet}, and nnU-Net~\cite{isensee2021nnunet} remain strong biomedical segmentation baselines, while transformer and foundation-style models such as SwinUNETR~\cite{hatamizadeh2022swinunetr} and VISTA3D~\cite{he2025vista3d} further improve representation capacity. 
However, their memory and inference requirements motivate compression for practical use.

Model compression reduces parameter storage and can reduce deployment cost through quantization, pruning, and distillation. 
Mixed-precision quantization assigns higher precision to sensitive layers and lower precision to robust layers. 
HAWQ-V2~\cite{dong2020hawqv2} uses Hessian trace information, while Chen et al.~\cite{chen2021mckp} formulate precision assignment as a constrained multiple-choice knapsack problem. 
Other PTQ methods use calibration data and hardware-aware constraints~\cite{pandey2023practical}, and AdaRound~\cite{nagel2020adaround} learns adaptive weight rounding from calibration data. 
In medical segmentation, EfficientQ~\cite{zhang2024efficientq} and MedPTQ~\cite{qu2026medptq} show that PTQ can preserve segmentation accuracy while improving compression or backend-supported efficiency. 
These works mainly optimize accuracy and efficiency; we study whether compressed models also preserve uncertainty and reliability.

Uncertainty quantification estimates prediction reliability using Monte Carlo dropout~\cite{gal2016dropout}, deep ensembles~\cite{lakshminarayanan2017deepensembles}, probabilistic segmentation~\cite{kohl2018probunet}, and uncertainty distillation such as DUDES~\cite{landgraf2023dudes}. 
In medical segmentation, calibration and uncertainty support failure prediction, out-of-distribution detection, and reliability assessment~\cite{mehrtash2020calibration,huang2024reviewuq}. 
We connect this reliability perspective with post-training compression by evaluating whether quantized segmenters preserve calibration, uncertainty maps, and failure-detection behavior.
\section{Application Setting and Problem Setup}
\label{sec:setting_problem}

We study post-training compression of an already-trained medical image segmentation model under compression and backend-related constraints such as model size, memory, latency, and throughput. We distinguish parameter-storage compression from backend-supported runtime execution, since actual speedup depends on hardware and kernel support. Let $x:\Omega\rightarrow\R^{C_{\mathrm{in}}}$ be an input image defined on the pixel or voxel domain $\Omega$, and let $y:\Omega\rightarrow\{1,\ldots,C\}$ be the corresponding segmentation map with $C$ classes. A full-precision model $f_{\theta}^{\FP}$ produces logits $z_v^{\FP}=f_{\theta}^{\FP}(x)_v\in\R^C$ and probabilities $p_v^{\FP}=\softmax(z_v^{\FP})$ at each location $v\in\Omega$.

\begin{equation}
\begin{gathered}
q=(q_1,\ldots,q_\ell,\ldots,q_L),
\qquad
q_\ell\in\Bcal,\\
\Bcal=\{\mathrm{FP32},\FPh,\INT,\mathrm{INT4}, \mathrm{INT2}\}.
\end{gathered}
\label{eq:precision_policy}
\end{equation}
where $L$ is the number of quantizable layers or blocks. The compressed model produces probabilities
\begin{equation}
p^{q}_{v}=\softmax(f_{\theta}^{q}(x)_v),
\qquad v\in\Omega .
\label{eq:probabilities_compact}
\end{equation}

Standard compression minimizes storage or deployment-related cost while preserving hard-mask accuracy:
\begin{equation}
\begin{aligned}
\min_q \quad & \mathcal{J}(q)\\
\text{s.t.}\quad
& \Dice(f_\theta^q)
\geq
\Dice(f_\theta^{\FP})-\varepsilon_D .
\end{aligned}
\label{eq:standard_compression}
\end{equation}
Here $\mathcal{J}(q)$ denotes model size, latency, memory, or a weighted compression cost. This objective is incomplete because Dice is computed from hard segmentation masks and does not measure calibration or uncertainty reliability.

We instead require reliability-preserving compression:
\begin{equation}
\begin{aligned}
\Dice(f_\theta^q) &\geq \Dice(f_\theta^{\FP})-\varepsilon_D,\\
\ECE(f_\theta^q) &\leq \ECE(f_\theta^{\FP})+\varepsilon_C,\\
\rho(\Unc_q,\Unc_{\FP}) &\geq \tau_U,\\
\AURC(f_\theta^q) &\leq \AURC(f_\theta^{\FP})+\varepsilon_R .
\end{aligned}
\label{eq:reliability_constraints}
\end{equation}
Here $\Unc_q:\Omega\rightarrow\R$ is the uncertainty map of the compressed model, $\ECE$ measures confidence calibration~\cite{guo2017calibration}, $\rho(\cdot,\cdot)$ measures rank agreement between compressed and full-precision uncertainty maps, and $\AURC$ measures selective risk-coverage behavior~\cite{geifman2017selective}. The constants $\varepsilon_D$, $\varepsilon_C$, and $\varepsilon_R$ define allowable Dice, calibration, and risk-coverage degradation, while $\tau_U$ defines the minimum acceptable uncertainty-map agreement.

The calibration set $\mathcal{X}=\{(x_i,y_i)\}_{i=1}^{N_c}$ is disjoint from the held-out test split and is used only for range estimation, SGAV-MP scoring, and aggregate validation. 
No weights are updated during policy selection, and all final metrics are reported on the held-out test set. 
Calibration-size sensitivity is reported in the supplementary material.

\section{Reliability Gap Under Quantization}
\label{sec:reliability_gap}

We now explain why preserving segmentation accuracy under quantization does not necessarily preserve uncertainty reliability. In the previous section, a compressed model $f_\theta^q$ was defined by a precision policy $q$. For a pixel or voxel $v\in\Omega$, let $z_v\in\R^C$ denote the full-precision logits and let
\begin{equation}
z^q_v=z_v+\epsilon_v
\label{eq:perturbed_logits}
\end{equation}
denote the compressed logits, where $\epsilon_v$ is the effective perturbation induced by quantization. The corresponding probability vectors are
\begin{equation}
p_v=\softmax(z_v),
\qquad
p^q_v=\softmax(z^q_v).
\label{eq:softmax_probs}
\end{equation}
The hard segmentation label is obtained from $\arg\max_c z_{v,c}$, while uncertainty measures such as predictive entropy depend on the full probability vector:
\begin{equation}
H(p_v)=-\sum_{c=1}^{C}p_{v,c}\log p_{v,c}.
\label{eq:entropy_basic}
\end{equation}
Therefore, hard-mask stability and uncertainty stability are different requirements.

\begin{proposition}[Stable hard label can still change entropy]
\label{prop:entropy_gap}
Consider a two-class logit vector $z=(a,0)$ with $a>0$ where the first coordinate corresponds to class $1$ and the second to class $0$.
Let the quantized logits be 
\begin{equation}
z^q=(a-\delta,\delta),
\qquad
0<\delta<\frac{a}{2}.
\end{equation}
Then
\begin{equation}
\arg\max_c z_c
=
\arg\max_c z^q_c
=
1,
\end{equation}
but
\begin{equation}
H(\softmax(z^q))>H(\softmax(z)).
\end{equation}
\end{proposition}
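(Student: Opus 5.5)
The plan is to reduce everything to the logit margin. Both claims depend only on the gap between the two logits, because the two-class softmax is a function of the difference of coordinates.

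\textbf{Hard label.} For $z=(a,0)$ the margin is $m=a>0$, so the argmax is class $1$. For $z^q=(a-\delta,\delta)$ the margin is $m_q=a-2\delta$. This is positive precisely because $\delta<a/2$, so $\arg\max_c z^q_c=1$ as well. Since $\delta>0$, we also have $0<m_q<m$: quantization shrinks the margin without flipping its sign.

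\textbf{Entropy.} Write the two-class softmax of a logit pair with margin $t$ as $(\sigma(t),1-\sigma(t))$, where $\sigma(t)=1/(1+e^{-t})$ is the logistic function. The entropy is then
$h(t)=-\sigma(t)\log\sigma(t)-(1-\sigma(t))\log(1-\sigma(t))$.
Thus $H(\softmax(z))=h(a)$ and $H(\softmax(z^q))=h(a-2\delta)$, and it suffices to show that $h$ is strictly decreasing on $(0,\infty)$.

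I would argue this in two composition steps.
\begin{itemize}
\item The binary entropy $\phi(p)=-p\log p-(1-p)\log(1-p)$ satisfies $\phi'(p)=\log\frac{1-p}{p}$, which is negative for $p>1/2$. So $\phi$ is strictly decreasing on $(1/2,1)$.
\item $\sigma$ is strictly increasing and maps $(0,\infty)$ into $(1/2,1)$.
\end{itemize}
Alternatively, one can compute directly: $\log\frac{1-\sigma(t)}{\sigma(t)}=-t$ and $\sigma'(t)>0$, so $h'(t)=-t\,\sigma'(t)<0$ for $t>0$.

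\textbf{Conclusion.} Applying $h$ to $0<a-2\delta<a$ gives $h(a-2\delta)>h(a)$, which is the claimed strict entropy increase.

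\textbf{Main obstacle.} The only real step is the monotonicity of entropy in the margin; the rest is bookkeeping. The point needing care is that the margin stays in $(0,\infty)$, where $h$ is monotone. This is exactly what the hypothesis $\delta<a/2$ provides. Without it, the margin could become negative, where entropy is symmetric rather than monotone in $t$ (indeed $h(-t)=h(t)$), and the argmax would flip.
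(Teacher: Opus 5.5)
Your proof is correct and takes the same route as the paper: reduce the two-class softmax to the logit gap, note that the gap shrinks from $a$ to $a-2\delta$ while staying positive, and conclude that entropy rises. The paper only asserts that a smaller positive gap makes the distribution less peaked, whereas you prove the strict monotonicity via $h'(t)=-t\,\sigma'(t)<0$ for $t>0$.
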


\begin{proof}
Since $0<\delta<a/2$, we have $a-\delta>\delta$, so the argmax label is unchanged. For two classes, the softmax distribution depends only on the logit gap. Quantization reduces the gap from $a$ to $a-2\delta<a$, making the distribution less peaked and increasing binary entropy. Thus, the hard prediction is preserved while uncertainty changes.
\end{proof}

This motivates reliability-preserving compression: a compressed model should preserve not only Dice, but also calibration, uncertainty-map structure, and failure-detection behavior.

\section{SGAV-MP: Scoring and Validation}
\label{sec:method}

Given a trained FP32 segmentation model, a calibration set, and a compression objective, SGAV-MP selects a mixed-precision policy in a post-training setting. 
The goal is to obtain a compact model that preserves segmentation accuracy, uncertainty-map structure, and failure-detection behavior under explicit validation constraints.

\subsection{Compression as Reliability-Constrained Set Selection}
Let $\Acal$ be the candidate action set, where each action $a=(\ell,b)$ assigns layer or block $\ell$ to precision $b\in\Bcal$. For a selected action set $A\subseteq\Acal$, let $q_A$ be the induced precision policy and $q_{\emptyset}$ be the full-precision policy. We define
\begin{equation}
F(A)=\mathcal{J}(q_{\emptyset})-\mathcal{J}(q_A),
\qquad
R(A)=\Rel(q_A),
\label{eq:set_functions}
\end{equation}
where $F(A)$ is the compression benefit and $R(A)$ is the aggregate reliability damage. Reliability-aware compression is then
\begin{equation}
\begin{aligned}
\max_{A\subseteq\Acal}\quad & F(A)\\
\mathrm{s.t.}\quad & R(A)\leq \tau_{\mathrm{rel}} .
\end{aligned}
\label{eq:reliability_constrained_set}
\end{equation}
The search space grows exponentially with the number of quantizable layers, so SGAV-MP uses score-guided greedy selection followed by aggregate full-model reliability validation.

\subsection{Reliability-Constrained Action Scoring}
\label{subsec:rel_action_scoring}
To score whether a candidate policy preserves the reliability behavior of the FP32 model, we define the score-level degradation
\begin{equation}
\Rel(q)
=
\alpha_D\Delta_D(q)
+
\alpha_C\Delta_C(q)
+
\alpha_U\Delta_U(q)
+
\alpha_R\Delta_R(q),
\label{eq:aggregate_reliability_degradation}
\end{equation}
where $\alpha_D,\alpha_C,\alpha_U,\alpha_R\geq0$ are weighting coefficients. The degradation terms are
\begin{align}
\Delta_D(q)
&=
\max\{0,\Dice(f^{\FP})-\Dice(f^q)\}, 
\label{eq:delta_dice_compact}
\\
\Delta_C(q)
&=
\max\{0,\ECE(f^q)-\ECE(f^{\FP})\}, 
\label{eq:delta_ece_compact}
\\
\Delta_U(q)
&=
1-\rho(\Unc_q,\Unc_{\FP}), 
\label{eq:delta_unc_compact}
\\
\Delta_R(q)
&=
\max\{0,\AURC(f^q)-\AURC(f^{\FP})\}. 
\label{eq:delta_aurc_compact}
\end{align}
Thus, $D_{\mathrm{rel}}(q)$ measures degradation in segmentation accuracy, calibration, uncertainty-map agreement, and risk--coverage behavior relative to the full-precision model.
For each candidate action $a=(\ell,b)$, SGAV-MP estimates a compression benefit and a reliability risk. The compression benefit is
\begin{equation}
U_{\ell b}^{(t)}
=
\mathcal{J}(q^{(t)})
-
\mathcal{J}(q^{(t)}\oplus(\ell,b)),
\label{eq:state_deployment_utility}
\end{equation}
where $\mathcal{J}(\cdot)$ denotes model size, memory, latency, energy, or a weighted cost when the corresponding backend measurement is available. When model size is used,
\begin{equation}
U_{\ell b}
=
N_\ell(32-r_b),
\label{eq:size_utility_compact}
\end{equation}
where $N_\ell$ is the number of parameters in layer $\ell$ and $r_b$ is the bit-width of precision $b$.

The action risk is estimated using a plug-in importance score. In the main implementation, we use empirical reliability impact:
\begin{equation}
I_{\ell b}^{\mathrm{rel}}
=
\Rel(q^{(\ell,b)}),
\label{eq:empirical_reliability_importance_compact}
\end{equation}
where $q^{(\ell,b)}$ denotes the policy in which only layer $\ell$ is assigned precision $b$ and all other layers remain full precision. More generally, SGAV-MP can use any normalized sensitivity estimator:
\begin{equation}
I_{\ell b}
=
\eta_{\mathrm{rel}}\widehat{I}_{\ell b}^{\mathrm{rel}}
+
\eta_H\widehat{I}_{\ell b}^{H}
+
\eta_A\widehat{I}_{\ell b}^{A},
\label{eq:general_plugin_importance_compact}
\end{equation}
where $\widehat{I}_{\ell b}^{\mathrm{rel}}$ is the normalized empirical reliability impact, $\widehat{I}_{\ell b}^{H}$ may denote a Hessian- or Fisher-style sensitivity score, and $\widehat{I}_{\ell b}^{A}$ may denote an activation-perturbation score. In our experiments, we use only the empirical term and set $\eta_H=\eta_A=0$ to avoid additional computation.

To account for architectural sensitivity, we further define a connection-aware risk
\begin{equation}
M_\ell
=
\lambda_K K_\ell
+
\lambda_{\nu}\nu_\ell,
\label{eq:connection_risk_compact}
\end{equation}
where
\begin{equation}
K_\ell
=
\frac{d_\ell^{in}+d_\ell^{out}}
{\max_j(d_j^{in}+d_j^{out})+\epsilon}
\label{eq:degree_risk_compact}
\end{equation}
is the normalized graph degree of layer $\ell$, and
\begin{equation}
\nu_{\ell}
=
\begin{cases}
1, & \text{if  $\ell$ participates in skip or multi-scale fusion},\\
0, & \text{otherwise}.
\end{cases}
\label{eq:skip_indicator_compact}
\end{equation}

The final priority score for action $(\ell,b)$ is
\begin{equation}
S_{\ell b}
=
\frac{
\widehat{U}_{\ell b}
}
{
\widehat{I}_{\ell b}
+
\lambda_M M_\ell
+
\epsilon
}.
\label{eq:sgav_score_compact}
\end{equation}
Here $\widehat{U}_{\ell b}$ and $\widehat{I}_{\ell b}$ are normalized over candidate actions. A high score indicates that the action provides large compression benefit with low estimated reliability damage and structural risk.
The score is only used to rank candidate actions. SGAV-MP does not assume that layer-wise reliability impacts are additive. After an action is proposed by Eq.~\eqref{eq:sgav_score_compact}, the complete mixed-precision model is evaluated on the calibration set and accepted only if it satisfies the aggregate reliability constraints in Eq.~\eqref{eq:reliability_constraints}.

\subsection{Pareto Filtering and Score-Guided Priority}
Before ranking actions, we remove actions that are strictly dominated in the utility-risk plane. For two actions $a_1$ and $a_2$, action $a_1$ dominates $a_2$ if
\begin{equation}
U(a_1)\geq U(a_2),
\qquad
I(a_1)\leq I(a_2),
\label{eq:dominance}
\end{equation}
with at least one strict inequality. Pareto filtering reduces the number of aggregate validations while keeping the ordering rule simple.

\paragraph{Feasibility note.}
Because SGAV-MP accepts an action only after evaluating the complete candidate policy on $\mathcal{X}$, every accepted policy satisfies the calibration-set constraints in Eq.~\eqref{eq:reliability_constraints}. This does not guarantee reliability on unseen test data, but it prevents the selected policy from relying only on additive independent layer scores during calibration-time selection. 

\begin{algorithm}[t]
\caption{SGAV-MP}
\label{alg:sgav}
\begin{algorithmic}[1]
\Require FP32 model $f_\theta^{\FP}$, calibration set $\mathcal{X}$, action set $\Acal$, thresholds $\varepsilon_D,\varepsilon_C,\varepsilon_R,\tau_U$
\State Evaluate $f_\theta^{\FP}$ on $\mathcal{X}$ and store reference metrics.
\ForAll{$a=(\ell,b)\in\Acal$}
\State Estimate $I_{\ell b}$, $U_{\ell b}$, and $S_{\ell b}$.
\EndFor
\State Remove dominated actions using Eq.~\eqref{eq:dominance}.
\State Initialize $q^{(0)}\leftarrow(\mathrm{FP32},\ldots,\mathrm{FP32})$, $\mathcal{R}\leftarrow\emptyset$, $t\leftarrow0$.
\While{$\Acal\setminus\mathcal{R}\neq\emptyset$}
\State Select $a^*=(\ell^*,b^*)\leftarrow\arg\max_{a\in\Acal\setminus\mathcal{R}}S_a$.
\State Form $q_c\leftarrow q^{(t)}\oplus a^*$ and evaluate $f_\theta^{q_c}$ on $\mathcal{X}$.
\If{$q_c$ satisfies Eq.~\eqref{eq:reliability_constraints}}
\State $q^{(t+1)}\leftarrow q_c$; $t\leftarrow t+1$
\State $\mathcal{R}\leftarrow\mathcal{R}\cup\{(\ell^*,b):(\ell^*,b)\in\Acal\}$
\Else
\State $\mathcal{R}\leftarrow\mathcal{R}\cup\{a^*\}$
\EndIf
\EndWhile
\State \Return final policy $q^{(t)}$
\end{algorithmic}
\end{algorithm}
\section{Experimental Protocol}
\label{sec:experiments}

\subsection{Datasets and Deployment Scenarios}
\label{subsec:datasets}

We evaluate reliability-preserving compression on two 2D and three 3D medical segmentation settings. 
For 2D, we use PROMISE12~\cite{litjens2014promise12} for prostate MRI segmentation and MosMedPlus, derived from MosMedData~\cite{morozov2020mosmeddata}, for COVID-19 lung-infection CT segmentation~\cite{shan2023coarse}. 
For 3D, we use SegResNet on AMOS22~\cite{ji2022amos} for abdominal multi-organ segmentation, BraTS~\cite{menze2015brats} for brain-tumor segmentation, and BTCV~\cite{landman2015btcv} for abdominal CT multi-organ segmentation. 
These datasets cover 2D MRI/CT and volumetric MRI/CT segmentation, allowing us to test whether compression-induced reliability changes persist across modality, anatomy, and dimensionality.

For all experiments, calibration data are disjoint from the held-out test split and are used only for quantization-range estimation, SGAV-MP scoring, and aggregate validation; all final metrics are reported on the held-out test set.
For the 2D experiments, we use the original MedSegBench partitions without modifying the split: MosMedPlus has $1910/272/547$ train/calibration/test images of size $512\times512\times3$, and PROMISE12 has $1031/147/295$ images of size $512\times512$. 
The MedSegBench validation split is used as calibration data for the main 2D results. 
For 3D, we use one fixed volume-level split for all methods and ablations: AMOS22 $168/24/48$, BraTS $1134/162/325$, and BTCV $21/3/6$ train/calibration/test volumes. 
In the default SGAV-MP setting, score computation uses $10\%$ of the calibration data for efficiency; calibration-size sensitivity using $10\%$, $25\%$, $50\%$, and $100\%$ is reported in the supplement.

We treat compression as the main objective and report deployment-related measurements separately. 
For representative 2D and 3D deployment settings, we measure backend-supported GPU execution, including latency, peak memory, model size, and compression ratio. 
For SGAV-MP policies containing mixed INT4 or INT2 actions, we report storage compression and reliability metrics without claiming runtime speedup unless directly supported by the backend.

\subsection{Models}
\label{subsec:models}

For the 2D experiments, we evaluate representative CNN-based segmentation models instantiated with five encoder families: ResNet18, ResNet50~\cite{he2016resnet}, MobileNetV2~\cite{sandler2018mobilenetv2}, EfficientNet-B0~\cite{tan2019efficientnet}, and DenseNet121~\cite{huang2017densenet}. 
These models cover residual, lightweight mobile, compound-scaled, and densely connected architectures, allowing us to test whether reliability degradation under quantization is architecture-dependent.

For the 3D experiments, we use SegResNet, a residual encoder--decoder segmentation model widely used for volumetric medical segmentation and related to the BraTS-winning residual segmentation design~\cite{myronenko2018segresnet}. For the separate 3D TensorRT deployment check, we use UNet3D on AMOS22 because the MedPTQ/TensorRT path is directly compatible with this backbone, while SegResNet deployment is limited in our setting by GroupNorm and residual-path execution issues.
The 3D study evaluates whether SGAV-MP remains effective when compression is applied to volumetric models with higher memory and inference cost.

\subsection{Baselines and Deployment}
\label{subsec:baselines}

We compare SGAV-MP with fixed-precision, PTQ, and mixed-precision baselines. 
FP32 is the reference model; FP16, INT8, and INT4 are uniform fixed-bit baselines. 
We also include AdaRound~\cite{nagel2020adaround}, Hessian Greedy, MCKP, and EfficientQ~\cite{zhang2024efficientq} to compare against calibration-based rounding, sensitivity-guided selection, constrained mixed-precision assignment, and medical PTQ. EfficientQ2/4 are uniform weight-only EfficientQ baselines at 2/4 bits, run with the released code under the same calibration/test protocol.

MedPTQ is discussed as related medical PTQ, but direct reproduction is limited by code/export/backend reproducibility in our setting and by INT4 deployment support.

Tables~\ref{tab:amos22_sgav_results} and ~\ref{tab:sgav_deployment_compact} report backend-supported TensorRT measurements for representative 3D and 2D settings. All TensorRT measurements use an NVIDIA RTX 6000 Ada Generation GPU (48 GB, compute capability sm$_{89}$) with TensorRT 10.3, supporting optimized FP32, FP16, and INT8 inference.

The 2D SGAV-MP deployment policies are restricted to FP16/INT8 execution, while mixed INT4/INT2 policies are reported as storage-compression results unless directly supported by the backend. 
For 3D deployment, we report an AMOS22 UNet3D TensorRT check because direct SegResNet reproduction with MedPTQ is limited in our setting by code/export/backend compatibility, GroupNorm and residual-path execution issues, and INT4 deployment support. 
MedPTQ also does not provide a directly usable INT2 or arbitrary mixed INT2/INT4 variant for our evaluated backbones/backend.

\subsection{Evaluation Metrics}
\label{subsec:metrics}

We evaluate compressed models using segmentation, compression, calibration, uncertainty-preservation, and failure-detection metrics. 
Segmentation quality is measured by Dice~\cite{dice1945measures,taha2015metrics}. 
Storage efficiency is reported using model size and compression ratio; for TensorRT-supported deployment variants, we also report latency and peak memory.
Calibration and probability quality are measured by ECE and Brier score~\cite{guo2017calibration,brier1950verification,mehrtash2020calibration}. 
Uncertainty preservation is measured by the rank correlation $\rho_U$ between predictive-entropy maps of the compressed and FP32 models~\cite{shannon1948mathematical,spearman1904proof}. 
Failure-detection behavior is evaluated using uncertainty-error AUROC and AURC~\cite{fawcett2006introduction,geifman2017selective}. 
SGAV-MP uses Dice, ECE, $\rho_U$, and AURC as aggregate acceptance constraints, while error-AUROC is reported as an additional diagnostic.

\noindent
Additional results and analyses are provided in the supplementary material due to page constraints.

\section{Results}
\label{sec:results}

\subsection{2D Reliability-Preserving Compression}
\label{subsec:results_2d}

Tables~\ref{tab:mri_promise_results} and~\ref{tab:ct_mosmed_results} report representative 2D ResNet50 results, with full five-backbone results in the supplement. 
On PROMISE12/ResNet50, SGAV-MP gives $7.29\times$ compression with Dice $0.8858$ and $\rho_U=0.9697$, compared with INT8 at $3.98\times$ and $\rho_U=0.9441$. 
On MosMedPlus/ResNet50, SGAV-MP gives $5.94\times$ compression with Dice $0.7848$ and $\rho_U=0.9392$, while uniform INT4 drops to Dice $0.6507$ and $\rho_U=0.1192$. 
EfficientQ gives higher nominal compression in some cases, but lower uncertainty agreement: EfficientQ4 gives $\rho_U=0.7282$ on PROMISE12/ResNet50 and $\rho_U=0.5303$ on MosMedPlus/ResNet50.

We report $\rho_U$ because absolute uncertainty magnitudes are small and model-dependent; representative MosMedPlus cross-calibration runs have average absolute uncertainty around $0.011$--$0.014$. 
Thus, $\rho_U$ captures preservation of FP32 uncertainty-map structure, while ECE, Brier, AURC, and error-AUROC report absolute reliability behavior.

Figure~\ref{fig:pareto_compression_uncertainty} shows SGAV-MP policies obtained by varying uncertainty tolerance. 
On MosMedPlus/ResNet50, policies range from $4.89\times$ compression with $\rho_U=0.970$ to $5.91$--$5.92\times$ with $\rho_U\approx0.94$--$0.95$. 
On DenseNet121, policies range from $4.86\times$ with $\rho_U=0.968$ to $6.23$--$6.25\times$ with $\rho_U\approx0.947$. 
Selected policies quantize about $62$--$129$ layers; build time, peak memory, repeated-run stability, calibration sweeps, paired-layer analysis, and full five-backbone results are provided in the supplement.

\begin{figure}[tb!]
\centering
\includegraphics[width=0.85\linewidth]{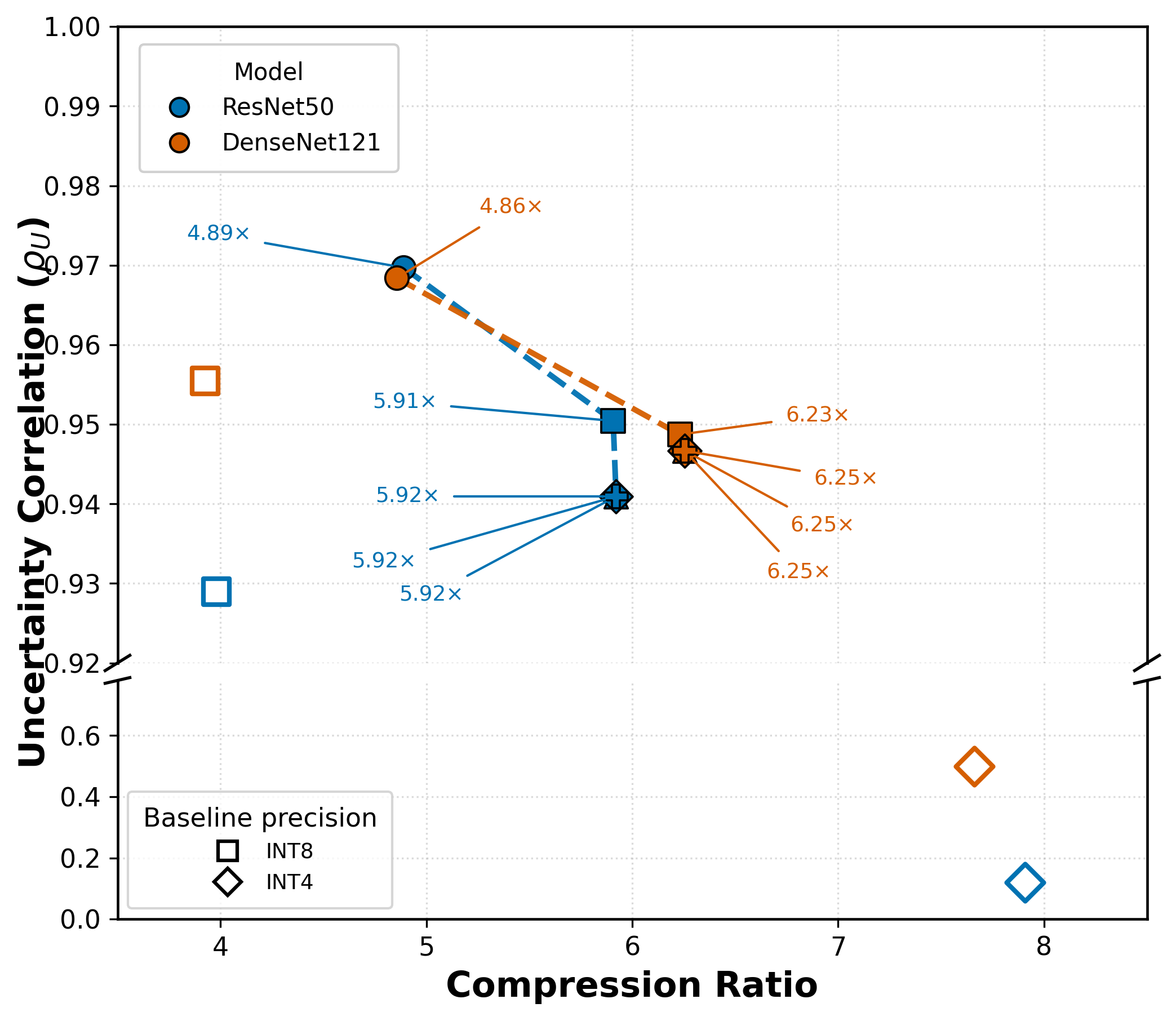}
\caption{
Compression--uncertainty Pareto analysis on MosMedPlus. 
INT8/INT4 are fixed points; SGAV-MP varies uncertainty tolerance with full-model validation. 
SGAV-MP improves compression beyond INT8 while preserving uncertainty correlation, whereas INT4 degrades it. 
The broken y-axis shows both INT4 and the SGAV-MP frontier.
}
\label{fig:pareto_compression_uncertainty}
\end{figure}

\begin{table*}[tb!]
\centering
\caption{
MRI segmentation results on PROMISE12/ResNet50. 
FP32 is the reference and is not ranked. 
Best compressed value is bold-underlined and second best is bold. 
 For SGAV-MP, Layers / Build / Peak GPU are $62$ / $2450$ s / $19915$ MB for ResNet50.
}
\label{tab:mri_promise_results}
\scriptsize
\setlength{\tabcolsep}{4pt}
\resizebox{\textwidth}{!}{
\begin{tabular}{llccccccc}
\toprule
\textbf{Model} & \textbf{Method} &
\textbf{Dice $\uparrow$} &
\textbf{ECE $\downarrow$} &
\textbf{Brier $\downarrow$} &
\textbf{AURC $\downarrow$} &
\textbf{Error AUROC $\uparrow$} &
\textbf{UQ Corr. $\uparrow$} &
\textbf{Comp. Ratio $\uparrow$} \\
\midrule

\multirow{9}{*}{ResNet50}
& FP32           & 0.8857 & 0.1025 & 0.2052 & 0.00539 & 0.5881 & 1.0000 & 1.00 \\
& INT8           & 0.8855 & 0.1032 & 0.2067 & 0.00534 & 0.5879 & 0.9441 & 3.98 \\
& INT4           & 0.8810 & 0.1144 & 0.2293 & \second{0.00524} & 0.5848 & 0.3240 & \second{7.91} \\
& MCKP           & 0.8857 & \second{0.1026} & 0.2056 & 0.00540 & 0.5880 & \best{0.9945} & 3.96 \\
& Hessian Greedy & 0.8838 & 0.1101 & 0.2206 & 0.00542 & 0.5835 & 0.4547 & 7.90 \\
& AdaRound       & \best{0.8864} & 0.1027 & \second{0.2053} & 0.00542 & 0.5869 & 0.7906 & 3.88 \\
& EfficientQ2    & 0.8829 & 0.1060 & 0.2125 & \best{0.00496} & \best{0.6046} & 0.2193 & \best{15.60} \\
& EfficientQ4    & \second{0.8859} & \best{0.1021} & \best{0.2044} & 0.00525 & \second{0.5898} & 0.7282 & \second{7.91} \\
& SGAV-MP        & 0.8858 & \second{0.1026} & 0.2055 & 0.00539 & 0.5883 & \second{0.9697} & 7.29 \\

\bottomrule
\end{tabular}
}
\end{table*}

\begin{table*}[tb!]
\centering
\caption{
CT segmentation results on MosMedPlus/ResNet50. 
FP32 is the reference and is not ranked. 
Best compressed value is bold-underlined and second best is bold. 
For SGAV-MP, Layers / Build / Peak GPU are $62$ / $510.1$ s / $4719.9$ MB.
}
\label{tab:ct_mosmed_results}
\scriptsize
\setlength{\tabcolsep}{4pt}
\resizebox{\textwidth}{!}{
\begin{tabular}{llccccccc}
\toprule
\textbf{Model} & \textbf{Method} &
\textbf{Dice $\uparrow$} &
\textbf{ECE $\downarrow$} &
\textbf{Brier $\downarrow$} &
\textbf{AURC $\downarrow$} &
\textbf{Error AUROC $\uparrow$} &
\textbf{UQ Corr. $\uparrow$} &
\textbf{Comp. Ratio $\uparrow$} \\
\midrule

\multirow{9}{*}{ResNet50}
& FP32           & 0.7850 & 0.2064 & 0.4123 & 0.00464 & 0.5829 & 1.0000 & 1.00 \\
& INT8           & \second{0.7849} & 0.2059 & 0.4114 & 0.00465 & \second{0.5846} & 0.9290 & 3.98 \\
& INT4           & 0.6507 & 0.3653 & 0.7304 & 0.00653 & 0.5458 & 0.1192 & \second{7.91} \\
& MCKP           & \best{0.7850} & 0.2067 & 0.4128 & \second{0.00464} & 0.5829 & \best{0.9936} & 3.97 \\
& Hessian Greedy & 0.7717 & 0.2137 & 0.4270 & 0.00491 & 0.5784 & 0.2549 & 7.89 \\
& AdaRound       & 0.7835 & 0.2089 & 0.4163 & \best{0.00463} & 0.5845 & 0.7667 & 3.96 \\
& EfficientQ2    & 0.6765 & \best{0.1569} & \best{0.3134} & 0.02215 & 0.5801 & 0.2312 & \best{15.60} \\
& EfficientQ4    & 0.7823 & \second{0.2008} & \second{0.4011} & 0.00478 & \best{0.5851} & 0.5303 & \second{7.91} \\
& SGAV-MP        & 0.7848 & 0.2075 & 0.4144 & \second{0.00464} & 0.5825 & \second{0.9392} & 5.94 \\

\bottomrule
\end{tabular}
}
\end{table*}

\begin{table*}[tb!]
\centering
\caption{
3D SegResNet results on AMOS22 and BraTS. 
FP32 is the reference and is not ranked. 
Best compressed value is bold-underlined and second best is bold. 
 For SGAV-MP, build time / peak GPU are $8875$ s / $5981$ MB on AMOS22 and $28252.6$ s / $3430.9$ MB on BraTS.
}
\label{tab:segresnet_3d_results}
\scriptsize
\setlength{\tabcolsep}{4pt}
\resizebox{\textwidth}{!}{
\begin{tabular}{llccccccc}
\toprule
\textbf{Dataset} & \textbf{Precision} &
\textbf{Dice $\uparrow$} &
\textbf{ECE $\downarrow$} &
\textbf{Brier $\downarrow$} &
\textbf{AURC $\downarrow$} &
\textbf{Err-AUROC $\uparrow$} &
\textbf{UC $\uparrow$} &
\textbf{CR $\uparrow$} \\
\midrule

\multirow{5}{*}{AMOS22}
& FP32    & 0.7778 & 0.0527 & 0.1755 & 0.00043 & 0.8719 & 1.0000 & 1.00 \\
& FP16    & \second{0.7778} & 0.0527 & 0.1755 & \second{0.00043} & 0.8719 & \best{1.0000} & 2.00 \\
& INT8    & \best{0.7780} & \second{0.0525} & \second{0.1750} & \best{0.00042} & \second{0.8722} & \second{0.9997} & 4.00 \\
& INT4    & 0.7450 & 0.0703 & 0.2109 & 0.00054 & 0.8603 & 0.9121 & \best{7.98} \\
& SGAV-MP & 0.7717 & \best{0.0478} & \best{0.1679} & 0.00053 & \best{0.8806} & 0.9857 & \second{7.72} \\

\midrule

\multirow{5}{*}{BraTS}
& FP32    & 0.7334 & 0.0866 & 0.2095 & 0.00013 & 0.8723 & 1.0000 & 1.00 \\
& FP16    & \second{0.7334} & 0.0866 & 0.2095 & \best{0.00013} & 0.8723 & \best{1.0000} & 2.00 \\
& INT8    & \best{0.7335} & \best{0.0854} & \best{0.2063} & \second{0.00015} & \second{0.8751} & \second{0.9986} & 4.00 \\
& INT4    & 0.6458 & 0.1196 & 0.2769 & 0.00039 & 0.8471 & 0.5348 & \best{7.98} \\
& SGAV-MP & 0.7293 & \second{0.0858} & \second{0.2068} & 0.00017 & \best{0.8757} & 0.9755 & \second{7.82} \\

\bottomrule
\end{tabular}
}
\end{table*}

\subsection{3D Results}
\label{subsec:results_3d}

Table~\ref{tab:segresnet_3d_results} evaluates SGAV-MP on 3D SegResNet across AMOS22 and BraTS. 
The trend is consistent with 2D results: INT8 is a strong fixed baseline, while SGAV-MP approaches low-bit compression in some settings with explicit reliability validation. 
On AMOS22, SGAV-MP reaches $7.72\times$ compression with $\rho_U=0.9857$ and improves ECE/Brier relative to FP32. 
On BraTS, it obtains $7.82\times$ compression with Dice $0.7293$ and $\rho_U=0.9755$, while INT4 degrades Dice and uncertainty agreement. 
BTCV results, including the restricted no-INT2 variant, are reported in the supplementary material.

Uniform INT4 gives high nominal compression but is unreliable in volumetric segmentation; on BraTS, $\rho_U$ drops to $0.5348$ and Dice from $0.7334$ to $0.6458$. 
Thus, storage reduction alone is insufficient without preserving calibration, uncertainty maps, and risk--coverage behavior. 
The selected 3D SGAV-MP runs require build times of about $8875$--$28252$ s and peak GPU memory of $3.4$--$6.0$ GB. 
Overall, these results support aggregate-validated mixed precision for compressing volumetric segmenters while avoiding the reliability collapse of aggressive uniform quantization.

\begin{table*}[tbh!]
\centering
\caption{
Ablation study on PROMISE12/ResNet50. 
$D$, $U$, $R$, and $C$ denote Dice, uncertainty, AURC, and ECE degradation terms. 
NC removes the connection term. The ablation reruns policy selection under fixed score-term settings. Best compressed value is bold-underlined and second best is bold.
}
\label{tab:ablation_promise12}
\scriptsize
\setlength{\tabcolsep}{2.5pt}
\resizebox{\textwidth}{!}{
\begin{tabular}{lccccc|cccccccccc}
\toprule
\textbf{Variant} &
$\boldsymbol{\alpha_D}$ & $\boldsymbol{\alpha_U}$ & $\boldsymbol{\alpha_R}$ & $\boldsymbol{\alpha_C}$ & $\boldsymbol{\lambda_M}$ &
\textbf{Dice $\uparrow$} & \textbf{ECE $\downarrow$} & \textbf{Brier $\downarrow$} & \textbf{AURC $\downarrow$} &
\textbf{Err-AUROC $\uparrow$} & \textbf{U-Corr. $\uparrow$} & \textbf{CR $\uparrow$} &
\textbf{LQ} & \textbf{Time(s)} & \textbf{Mem(MB)} \\
\midrule
Dice              & 1 & 0 & 0 & 0 & 1 & \best{0.8857} & \best{0.1026} & \best{0.2055} & 0.0054 & \second{0.5883} & \second{0.9608} & 5.07 & 32 & 2444.6 & 19914.2 \\
Dice only (NC)    & 1 & 0 & 0 & 0 & 0 & \best{0.8857} & \best{0.1026} & \best{0.2055} & 0.0054 & \second{0.5883} & \second{0.9608} & 5.07 & 32 & 2440.4 & 19914.2 \\
Dice+UQ+risk      & 1 & 1 & 1 & 0 & 1 & \second{0.8856} & 0.1029 & 0.2061 & 0.0054 & \best{0.5884} & 0.9531 & \best{7.29} & 62 & 2601.2 & 19912.4 \\
UQ                & 0 & 1 & 0 & 0 & 1 & 0.8853 & 0.1037 & 0.2077 & 0.0054 & 0.5882 & \second{0.9545} & \second{7.17} & 63 & 2455.5 & 19913.9 \\
UQ only (NC)      & 0 & 1 & 0 & 0 & 0 & 0.8853 & 0.1037 & 0.2077 & 0.0054 & 0.5882 & \second{0.9545} & \second{7.17} & 63 & 2442.6 & 19913.9 \\
Risk+ECE          & 0 & 0 & 1 & 1 & 1 & \best{0.8857} & \best{0.1026} & \best{0.2055} & 0.0054 & \second{0.5883} & \second{0.9608} & 5.07 & 32 & 2443.6 & 19914.7 \\
Risk+ECE (NC)     & 0 & 0 & 1 & 1 & 0 & \best{0.8857} & \best{0.1026} & \best{0.2055} & 0.0054 & \second{0.5883} & \second{0.9608} & 5.07 & 32 & 2536.2 & 19914.7 \\
Dice+risk+ECE     & 1 & 0 & 1 & 1 & 1 & \best{0.8857} & \best{0.1026} & \best{0.2055} & 0.0054 & \second{0.5883} & \second{0.9608} & 5.07 & 32 & 2454.8 & 19913.4 \\
Full SGAV-MP (NC) & 1 & 1 & 1 & 1 & 0 & \second{0.8856} & \second{0.1027} & \second{0.2057} & 0.0054 & \second{0.5883} & 0.9500 & 7.16 & 55 & 2692.5 & 19914.7 \\
Full SGAV-MP      & 1 & 1 & 1 & 1 & 1 & \second{0.8856} & \best{0.1026} & \best{0.2055} & 0.0054 & \second{0.5883} & \best{0.9697} & \best{7.29} & 55 & 2450.7 & 19914.7 \\
\bottomrule
\end{tabular}
}
\end{table*}

\subsection{Ablation and Cross-Dataset Calibration}
\label{subsec:ablation_drift}

Table~\ref{tab:ablation_promise12} reports the ablation on PROMISE12/ResNet50. Full SGAV-MP reaches $7.29\times$ compression with $\rho_U=0.9697$, compared with $7.17\times$/$0.9545$ for UQ-only. Removing the connection-aware term reduces the full policy to $7.16\times$/$0.9500$, while Dice+UQ+risk reaches the same $7.29\times$ compression but lower $\rho_U=0.9531$. Thus, uncertainty-aware scoring drives much of the compression gain, while the complete score provides the strongest reliability-preserving operating point in this ablation.

Cross-dataset calibration is reported in Table~\ref{tab:sgavmp_mosmed_models}, where SGAV-MP is calibrated on PROMISE12 and evaluated on MosMedPlus. 
Despite this shift, Dice remains close to the MosMedPlus FP32/SGAV-MP range, $\rho_U$ stays high ($0.942$--$0.968$), and compression remains $4.06$--$5.73\times$. 
This suggests limited overfitting to one calibration split, although stronger domain shift still depends on FP32 reliability and calibration representativeness. 
Additional calibration-size sweeps, paired-layer analysis, and qualitative examples are provided in the supplement.
\subsection{Deployment-Related Measurements}
\label{subsec:deployment_results}

Table~\ref{tab:amos22_sgav_results} gives a representative 3D AMOS22 TensorRT deployment check using UNet3D. 
MedPTQ provides the largest backend-supported compression and lowest latency ($3.90\times$, $0.843$ ms), while SGAV-MP gives $1.58\times$ compression and preserves Dice ($0.7274$) and uncertainty agreement ($\rho_U=0.9972$) with slightly better ECE/Brier. 
This supports our deployment framing: runtime efficiency depends on backend support, while SGAV-MP focuses on reliability-preserving compression.

Table~\ref{tab:sgav_deployment_compact} reports TensorRT measurements for representative 2D models using FP32, uniform INT8, and a TensorRT-compatible SGAV-MP policy. 
The SGAV-MP deployment policy is restricted to FP16/INT8 execution, while INT4/INT2 policies are reported separately as storage-compression results. 
Uniform INT8 gives larger weight compression, while SGAV-MP gives intermediate compression with similar Dice and higher uncertainty agreement. 
On MosMedPlus/EfficientNet-B0, SGAV-MP gives $2.25\times$ compression, $0.791$ ms latency, and $\rho_U=0.9833$, compared with uniform INT8 at $3.81\times$, $0.903$ ms, and $\rho_U=0.9600$. 
On PROMISE12/ResNet50, SGAV-MP gives $2.53\times$, $0.594$ ms, and $\rho_U=0.9698$, compared with uniform INT8 at $4.08\times$, $0.659$ ms, and $\rho_U=0.9426$.

\begin{table*}[tb!]
\centering
\caption{SGAV-MP is calibrated on PROMISE12 and evaluated on MosMedPlus to assess sensitivity to calibration-domain shift}
\label{tab:sgavmp_mosmed_models}
\scriptsize
\setlength{\tabcolsep}{4pt}
\resizebox{\textwidth}{!}{
\begin{tabular}{lccccccccccc}
\toprule
\textbf{Model} &
\textbf{Dice $\uparrow$} &
\textbf{ECE $\downarrow$} &
\textbf{Brier $\downarrow$} &
\textbf{AURC $\downarrow$} &
\textbf{Error AUROC $\uparrow$} &
\textbf{UQ Corr. $\uparrow$} &
\textbf{Abs. Unc.} &
\textbf{Comp. Ratio $\uparrow$} &
\textbf{Layers Quant.} &
\textbf{Build Time (s)} &
\textbf{Peak GPU (MB)} \\
\midrule

ResNet18
& 0.779 & 0.212 & 0.423 & 0.004 & 0.586 & 0.954 & 0.014 & 4.153 & 29 & 1261.513 & 12775.755 \\

MobileNetV2
& 0.784 & 0.199 & 0.398 & 0.004 & 0.603 & 0.942 & 0.012 & 4.065 & 43 & 2441.863 & 10874.005 \\

EfficientNet-B0
& 0.788 & 0.192 & 0.383 & 0.004 & 0.606 & 0.952 & 0.012 & 5.734 & 77 & 3614.041 & 11168.406 \\

DenseNet121
& 0.791 & 0.196 & 0.392 & 0.004 & 0.620 & 0.968 & 0.013 & 4.545 & 120 & 5355.454 & 15615.749 \\

ResNet50
& 0.785 & 0.206 & 0.411 & 0.005 & 0.583 & 0.948 & 0.011 & 5.454 & 61 & 2537.617 & 16074.700 \\

\bottomrule
\end{tabular}
}
\end{table*}

\begin{table*}[tb!]
\centering
\caption{
Representative 3D AMOS22 TensorRT deployment measurements using the UNet3D. 
FP32 is the backend reference and is not ranked. 
Best compressed value is bold-underlined and second best is bold. 
Build time / peak GPU are $35.7$ s / $152$ MB for FP16, $19.7$ s / $170$ MB for MedPTQ, and $30.7$ s / $224$ MB for SGAV-MP.
}
\label{tab:amos22_sgav_results}
\scriptsize
\setlength{\tabcolsep}{4pt}
\resizebox{\textwidth}{!}{
\begin{tabular}{lccccccccc}
\toprule
\textbf{Method} &
\textbf{Dice $\uparrow$} &
\textbf{ECE $\downarrow$} &
\textbf{Brier $\downarrow$} &
\textbf{AURC $\downarrow$} &
\textbf{Err-AUROC $\uparrow$} &
\textbf{U-Corr. $\uparrow$} &
\textbf{Lat.(ms) $\downarrow$} &
\textbf{Mem.(MB) $\downarrow$} &
\textbf{CR $\uparrow$} \\
\midrule
FP32
& 0.7273 & 0.0658 & 0.2039 & 0.0007 & 0.8828 & 1.0000 & 2.186 & 141.50 & 1.00 \\

FP16
& \best{0.7274} & \second{0.0657} & \second{0.2036} & 0.0007 & \best{0.8828} & \best{1.0000} & \second{0.862} & \best{70.76} & \second{2.00} \\

MedPTQ
& 0.7270 & 0.0661 & 0.2046 & 0.0007 & 0.8823 & 0.9946 & \best{0.843} & \second{88.95} & \best{3.90} \\

SGAV-MP
& \best{0.7274} & \best{0.0654} & \best{0.2033} & 0.0007 & \second{0.8825} & \second{0.9972} & 1.580 & 136.95 & 1.58 \\
\bottomrule
\end{tabular}
}
\end{table*}

\begin{table*}[tb!]
\centering
\caption{
Representative 2D TensorRT deployment results. 
CR denotes compression ratio, UC denotes uncertainty-map correlation, and Con-Mem denotes TensorRT context memory. 
FP32 is the reference and is not ranked. 
Best value is bold-underlined and second best is bold.
}
\label{tab:sgav_deployment_compact}
\scriptsize
\setlength{\tabcolsep}{3pt}
\resizebox{\textwidth}{!}{
\begin{tabular}{lllccccccccccccc}
\toprule
\textbf{Dataset} &
\textbf{Model} &
\textbf{Precision} &
\textbf{CR $\uparrow$} &
\textbf{Lat. $\downarrow$} &
\textbf{Con-Mem $\downarrow$} &
\textbf{Dice $\uparrow$} &
\textbf{ECE $\downarrow$} &
\textbf{Brier $\downarrow$} &
\textbf{AURC $\downarrow$} &
\textbf{Err-AUROC $\uparrow$} &
\textbf{UC $\uparrow$} &
\textbf{Weights $\downarrow$} &
\textbf{Total Mem. $\downarrow$} \\
\midrule

\multirow{3}{*}{MosMedPlus}
& \multirow{3}{*}{EfficientNet-B0}
& FP32
& 1.000 & 1.331  & 94.563
& 0.7882 & 0.1915 & 0.3825 & 0.0041 & 0.6067 & 1
& 35.04 & 129.60 \\

& & SGAV-MP
& \second{2.250} & \best{0.791}  & \best{34.718}
& \best{0.7882} & \best{0.1914} & \best{0.3823} & 0.0041 & \best{0.6067} & \best{0.9833}
& \second{15.58} & \second{50.30} \\

& & Uniform INT8
& \best{3.812} & \second{0.903}  & \second{37.125}
& \second{0.7881} & \second{0.1917} & \second{0.3828} & 0.0041 & \second{0.6065} & \second{0.9600}
& \best{9.20} & \best{46.33} \\

\midrule

\multirow{3}{*}{PROMISE12}

& \multirow{3}{*}{ResNet50}
& FP32
& 1.000 & 0.982  & 88.438
& 0.8857 & 0.1026 & 0.2055 & 0.0023 & 0.5883 & 1
& 94.11 & 182.55 \\

& & SGAV-MP
& \second{2.527} & \best{0.594}  & \best{37.125}
& 0.8857 & 0.1026 & 0.2055 & 0.0023 & 0.5883 & \best{0.9698}
& \second{37.25} & \second{74.18} \\

& & Uniform INT8
& \best{4.082} & \second{0.659}  & \second{40.813}
& 0.8857 & 0.1026 & 0.2055 & 0.0023 & 0.5883 & \second{0.9426}
& \best{23.06} & \best{65.81} \\

\bottomrule
\end{tabular}
}
\end{table*}
\section{Discussion}
\label{sec:discussion}

SGAV-MP shows that medical segmentation compression should be evaluated beyond Dice and model size. 
Across 2D and 3D experiments, fixed low-bit baselines can preserve usable Dice while changing calibration, uncertainty maps, or risk--coverage behavior. 
This motivates aggregate validation: layer scores rank candidate precision changes, but acceptance is decided by evaluating the complete compressed model on a calibration set. 
The Pareto analysis also shows that reliability tolerance defines a compression--uncertainty frontier rather than a single fixed INT8 or INT4 point.

SGAV-MP is compression-oriented and does not imply automatic runtime speedup. 
We separate storage compression from backend-supported measurements: TensorRT FP32/FP16/INT8 results are reported for representative 2D and 3D settings, while mixed INT4/INT2 policies are reported only by storage compression and reliability metrics unless directly supported by the backend. 
Limitations include dependence on the FP32 model's reliability, use of labeled calibration data, sensitivity to calibration representativeness, nonzero search cost, and the need for broader evaluation on foundation-style segmenters, additional modalities, and real mixed INT4/INT2 deployment backends.

\section{Conclusion}
\label{sec:conclusion}

We studied reliability-preserving post-training compression for medical image segmentation. 
Dice-preserving quantization can still degrade uncertainty, calibration, and failure-detection behavior, motivating SGAV-MP, a score-guided mixed-precision method with aggregate full-model reliability validation. 
Across 2D MRI/CT and 3D volumetric benchmarks, SGAV-MP preserves FP32-like segmentation and uncertainty behavior while achieving storage compression. 
Overall, compressed medical segmenters should be evaluated using reliability signals in addition to accuracy and compression ratio.

{
    \small
    \bibliographystyle{ieeenat_fullname}
    \bibliography{main}
}

\end{document}


\maketitle

This supplementary material provides the complete quantitative results, ablation results, additional qualitative examples, and implementation details that could not be included in the main paper due to page constraints.


\section{Implementation Details}
\label{sec:supp_implementation}

SGAV-MP uses fixed stopping criteria across all datasets and models. During the mixed-precision search, a candidate quantization action is accepted only if the complete compressed model satisfies the calibration-set reliability constraints. As illustrated in Figure~\hyperref[fig:subadditive_layer_effects]{\ref*{fig:subadditive_layer_effects}}, the measured joint uncertainty degradation can be lower than the summed layer-wise degradation, reflecting sub-additive overlapping layer effects. The allowed degradations are $3$ percentage points (pp) for Dice uncertainty-map correlation and AURC. Equivalently, an accepted candidate model must satisfy
\[
\Dice(f^q) \geq \Dice(f^{\FP}) - 0.03,
\enspace
\rho_U(f^q,f^{\FP}) \geq \rho_U(f^{\FP},f^{\FP}) - 0.03,
\enspace
\AURC(f^q) \leq \AURC(f^{\FP}) + 0.03 .
\]
Since $\rho_U(f^{\FP},f^{\FP})=1$, the uncertainty-map correlation constraint is equivalent to $\rho_U(f^q,f^{\FP}) \geq 0.97$. These tolerances can be varied to control the compression--reliability trade-off, as examined in the tolerance-sensitivity results discussed in section~\ref{sec:tol_sen}. The quantization loop stops when no remaining candidate action satisfies all three criteria.

\begin{figure}[t]
\centering
\includegraphics[width=0.4\columnwidth]{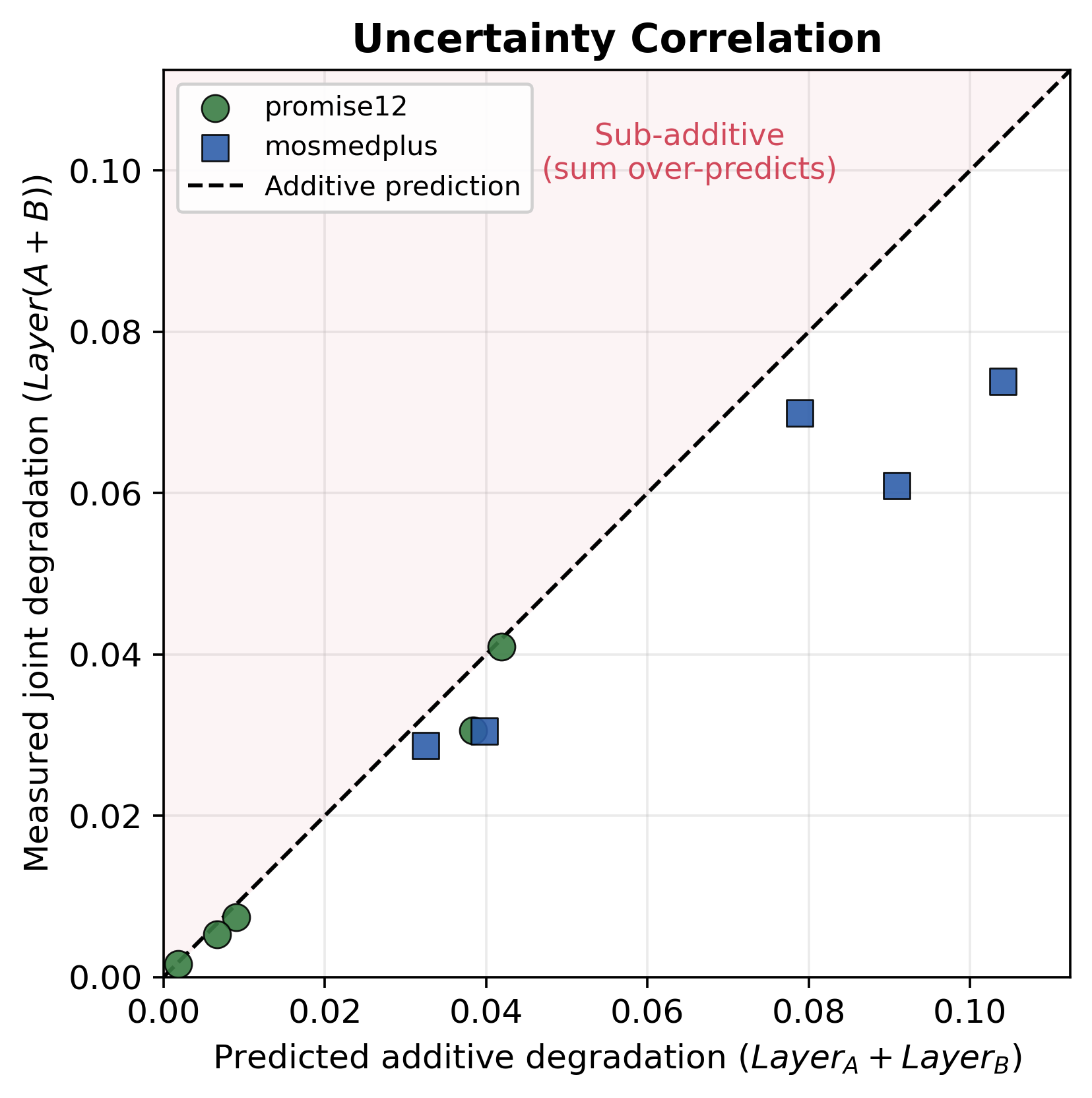}
\caption{
Paired-layer uncertainty degradation. Points below the diagonal indicate that the measured joint degradation is lower than the summed layer-wise degradation, demonstrating sub-additive overlapping layer effects.
}
\label{fig:subadditive_layer_effects}
\end{figure}

\section{Additional Quantitative Results}

\label{sec:supp_quantitative_results}

\subsection{Full Segmentation and Compression Results}

Table~\hyperref[tab:supp_3d_full]{\ref*{tab:supp_3d_full}} reports the 3D SegResNet results on AMOS22, BraTS, and BTCV. FP16 and INT8 remain close to FP32 across the reported metrics, whereas uniform INT4 provides higher compression at the cost of larger degradation. On AMOS22, INT4 increases compression from $1.0$ to $7.98$, while Dice drops from $0.7778$ to $0.7450$ and UQ correlation from $1.0000$ to $0.9121$; on BraTS, UQ correlation drops to $0.5348$. In contrast, SGAV-MP achieves compression ratios of $5.48$, $7.76$, and $4.73$ on AMOS22, BraTS, and BTCV, respectively, while retaining UQ correlations of $0.9962$, $0.9864$, and $0.9984$. SGAV-MP-2 further increases compression on AMOS22 and BraTS, with UQ correlation decreasing to $0.9857$ and $0.9755$. Overall, SGAV-MP provides a higher-compression alternative to uniform INT8 while preserving the reported uncertainty measures more closely than uniform INT4.

Table~\hyperref[tab:supp_promise_full]{\ref*{tab:supp_promise_full}} compares SGAV-MP with FP32, uniform INT8/INT4, MCKP, Hessian Greedy, and AdaRound across five encoders on PROMISE12 from the MedSegBench Database. SGAV-MP-4 maintains Dice close to FP32 while achieving $3.32$--$4.56\times$ compression and UQ correlation near the $0.97$ acceptance threshold. Allowing INT2 through SGAV-MP-2 increases compression, e.g., from $4.11$ to $6.51$ for ResNet18 and from $4.56$ to $7.02$ for ResNet50, with UQ correlation decreasing to $0.9707$ and $0.9698$. The relaxed SGAV-MP-2R further increases compression to $7.77\times$ and $7.29\times$, with UQ correlations of $0.9260$ and $0.9697$, respectively. Thus, the SGAV-MP variants expose a compression--uncertainty trade-off: relaxing the constraints enables higher compression while producing progressively larger changes in UQ correlation.

Table~\hyperref[tab:supp_mosmed_full]{\ref*{tab:supp_mosmed_full}} reports the corresponding CT segmentation results on MosMedPlus. Uniform INT8 produces compression ratios close to $4\times$ while generally retaining Dice near FP32, but the associated uncertainty correlation varies across architectures. The degradation becomes substantially larger for INT4 in several models: for example, MobileNetV2 decreases from a Dice of $0.7845$ to $0.6322$ and its UQ correlation decreases from $1.0000$ to $0.1261$, while ResNet50 decreases from $0.7850$ to $0.6507$ with UQ correlation of $0.1192$. SGAV-MP-4 is more conservative on this dataset, with compression ratios ranging from $2.75$ to $3.77$ and UQ correlations between $0.9845$ and $0.9853$. The less constrained SGAV-MP-2 configurations increase compression for all five architectures, reaching $4.89\times$ for ResNet50 and $4.86\times$ for DenseNet121, while UQ correlation decreases to approximately $0.97$. The relaxed SGAV-MP-2R configurations provide additional compression, with the largest reported value of $6.21\times$ for DenseNet121, while UQ correlation decreases further. Importantly, Dice remains numerically very close to the FP32 value for the SGAV variants even when compression is increased. Thus, the results indicate that the effect of increasing compression is more visible in the uncertainty-correlation metric than in Dice for the SGAV configurations reported in this table.

\subsection{Acceptance-Tolerance Sensitivity}
\label{sec:tol_sen}

Tables~\hyperref[tab:rho_tolerance_promise12]{\ref*{tab:rho_tolerance_promise12}} and~\hyperref[tab:rho_tolerance_mosmedplus]{\ref*{tab:rho_tolerance_mosmedplus}} evaluate the effect of varying the uncertainty-map correlation tolerance $\rho_{\mathrm{tol}}$ in SGAV-MP-2, alongside fixed-precision FP32, INT8, INT4, and INT2 references. On PROMISE12, increasing $\rho_{\mathrm{tol}}$ generally allows higher compression for several architectures while reducing UQ correlation; for example, ResNet18 increases from $6.5016\times$ to $7.4655\times$ compression as $\rho_{\mathrm{tol}}$ increases from $0.03$ to $0.05$, while UQ correlation decreases from $0.9699$ to $0.9556$. DenseNet121 shows greater sensitivity, with UQ correlation decreasing from $0.9693$ to $0.8767$ across the sweep. On MosMedPlus, a similar trend is observed for MobileNetV2 and EfficientNet-B0, where increasing $\rho_{\mathrm{tol}}$ from $0.03$ to $0.15$ reduces UQ correlation from $0.9695$ to $0.8806$ and from $0.9682$ to $0.8757$, respectively, while compression increases. In contrast, Dice remains comparatively stable across the SGAV-MP-2 configurations on both datasets. These results directly illustrate the role of the stopping criterion defined above: relaxing $\rho_{\mathrm{tol}}$ permits candidates with greater uncertainty degradation to be accepted, thereby exposing a controllable compression--uncertainty trade-off whose magnitude depends on the encoder and dataset.

Table~\hyperref[tab:tolerance_ablation]{\ref*{tab:tolerance_ablation}} examines the effect of varying the uncertainty-correlation tolerance $\tau_{\rho_U}$ and risk tolerance $\tau_R$ on MosMedPlus with MobileNetV2. The results show that changing $\tau_R$ while keeping $\tau_{\rho_U}$ fixed produces identical values for all reported metrics and the compression ratio. For example, at $\tau_{\rho_U}=0.015$, all three tested values of $\tau_R$ produce a compression ratio of $2.82$, UQ correlation of $0.9851$, and the same Dice, ECE, Brier, AURC, and Error-AUROC values. In contrast, increasing $\tau_{\rho_U}$ from $0.015$ to $0.020$ increases the compression ratio from $2.82$ to $3.11$, while UQ correlation decreases from $0.9851$ to $0.9800$. Increasing it further to $0.030$ raises the compression ratio to $3.73$ and decreases UQ correlation to $0.9695$. Dice changes only from $0.7844$ to $0.7843$, while the remaining reported metrics also exhibit only small numerical changes. Hence, within the tested settings, the uncertainty-correlation tolerance is the parameter associated with the observed changes in the selected compression policy, whereas varying the risk tolerance does not alter the resulting policy or metrics.

\subsection{Calibration-Data Sensitivity}

Table~\hyperref[tab:sgav_calibration_sweep_mosmed]{\ref*{tab:sgav_calibration_sweep_mosmed}} studies the effect of calibration-set size on SGAV-MP-2 across five encoders on MosMedPlus. The reported segmentation and uncertainty metrics remain largely stable from $10\%$ to $100\%$ calibration, while build time and peak GPU memory increase substantially. For example, DenseNet121 shows Dice variation of only $0.7915$--$0.7913$ and UQ correlation of $0.9678$--$0.9689$, whereas build time increases from $1056.5$~s to $8628.2$~s. Similar metric stability is observed across the other encoders, with no consistent improvement from using more calibration data. Thus, within the evaluated settings, smaller calibration fractions achieve metrics comparable to full calibration at substantially lower search cost.

Table~\hyperref[tab:sgav_calibration_sweep_promise12]{\ref*{tab:sgav_calibration_sweep_promise12}} reports the corresponding sweep on PROMISE12. Dice remains stable across calibration fractions, ranging from $0.8943$--$0.8962$ for DenseNet121 and $0.8946$--$0.8949$ for MobileNetV2, while UQ correlation shows similarly small variations. In contrast, calibration cost increases substantially; for ResNet50, build time rises from $311.6$~s at $10\%$ calibration to $2448.0$~s at $100\%$, with peak GPU memory increasing from $2856.2$~MB to $19912.4$~MB. Overall, increasing the calibration fraction does not yield a consistent improvement in the reported metrics, but substantially increases the computational cost.

Table~\hyperref[tab:sgav_3d_calibration_sweep]{\ref*{tab:sgav_3d_calibration_sweep}} extends the calibration-data analysis to 3D datasets with SGAV-MP-2 pipeline on AMOS22, BraTS, and BTCV. On AMOS22, increasing calibration from $10\%$ to $100\%$ changes Dice from $0.7748$ to $0.7697$ and the compression ratio from $7.34$ to $7.72$, while UQ correlation changes from $0.9754$ to $0.9857$. BraTS exhibits even smaller Dice variation between the $10\%$ and $50\%$ settings, followed by a lower value at $75\%$ and a return to $0.7293$ at $100\%$. BTCV shows a similar non-monotonic pattern, with Dice ranging from $0.6022$ to $0.6067$. In contrast to the relatively small changes in the evaluation metrics, build time increases markedly with calibration size. For AMOS22, build time increases from $3001.3$ s at $10\%$ to $31755.7$ s at $100\%$, while BraTS increases from $3979.7$ s to $40445.6$ s. Peak GPU memory also increases, although the magnitude differs across datasets. The results therefore do not show a monotonic improvement in segmentation or uncertainty metrics with additional calibration data. Instead, the primary consistent effect of increasing the calibration fraction in this sweep is increased calibration/search cost.

\subsection{Scoring-Term Ablation}

Table~\hyperref[tab:ablation]{\ref*{tab:ablation}} evaluates the contribution of the individual SGAV-MP-2 scoring terms on MosMedPlus with ResNet18 and PROMISE12 with ResNet50. On MosMedPlus, using Dice alone produces a compression ratio of only $1.37$, while adding UQ and risk terms increases the ratio to $4.29$ with UQ correlation increasing from $0.9497$ to $0.9620$. The UQ-only configuration also reaches a compression ratio of $4.15$, whereas the Risk+ECE configuration remains much more conservative at $1.28$. The full SGAV-MP-2 configuration achieves a compression ratio of $4.23$, with Dice $0.7792$, ECE $0.2121$, Brier $0.4237$, AURC $0.0044$, and UQ correlation $0.9532$. On PROMISE12, the Dice-only and Risk+ECE configurations both yield a compression ratio of $5.07$, while incorporating UQ and risk increases compression to $7.29$ for the Dice+UQ+risk configuration. The full SGAV-MP-2 configuration reaches $7.29$ with UQ correlation $0.9697$. Comparing the connection and no-connection variants, the reported evaluation metrics and compression ratios are identical within each corresponding configuration, while the quantization time changes; for example, full SGAV-MP-2 takes $1966.2$ s with the connection term and $2160.1$ s without it on MosMedPlus. Thus, the ablation demonstrates that changing the scoring terms can substantially alter the selected compression level, whereas the reported effect of the connection term itself is not reflected in the final metrics or compression ratio in these two experiments.

\subsection{Cross-Seed Variability}

Table~\hyperref[tab:var_2d]{\ref*{tab:var_2d}} reports mean and standard deviation over three random calibration seeds for SGAV-MP-2 using $10\%$ calibration data. On MosMedPlus, the reported variability is small across all five architectures. For example, DenseNet121 achieves a Dice of $0.7914\pm0.0000$ and compression ratio of $6.2494\pm0.0076$, while ResNet50 achieves $0.7850\pm0.0001$ Dice and $5.9456\pm0.0413$ compression. UQ correlation varies somewhat more, with EfficientNet-B0 reporting $0.9248\pm0.0040$ and DenseNet121 reporting $0.9480\pm0.0022$. PROMISE12 shows similarly small variation for most models, although EfficientNet-B0 exhibits larger variation in compression ratio, $5.1868\pm2.1845$, and UQ correlation, $0.9492\pm0.0255$. ResNet18 and ResNet50 show essentially zero variation for several metrics in the three-seed evaluation. Overall, the reported standard deviations indicate that the segmentation metrics are relatively stable across the tested calibration seeds, while some uncertainty and compression quantities exhibit greater seed-to-seed variation for particular architectures. This provides a measure of reproducibility for the SGAV-MP-2 policies under the stated three-seed calibration protocol.

\subsection{TensorRT Deployment}

Table~\hyperref[tab:sgav_deployment_mosmedplus]{\ref*{tab:sgav_deployment_mosmedplus}} evaluates the deployed SGAV-MP-2 2D models on MosMedPlus against FP32 and uniform INT8 TensorRT baselines. Across all five architectures, SGAV-MP-2 produces lower compression ratios than uniform INT8 but achieves higher measured speedups and higher uncertainty correlation. For DenseNet121, SGAV-MP-2 achieves a compression ratio of $2.869\times$, a speedup of $1.933\times$, and UQ correlation of $0.9067$, compared with $4.153\times$, $1.702\times$, and $0.8447$ for uniform INT8. The same pattern occurs for EfficientNet-B0, MobileNetV2, ResNet18, and ResNet50: SGAV-MP-2 consistently has lower compression but higher speedup than the corresponding uniform INT8 configuration, while its UQ correlation is also higher. Segmentation metrics remain extremely close between the two compressed configurations. For example, EfficientNet-B0 has Dice $0.7882$ for SGAV-MP-2 and $0.7881$ for INT8, while MobileNetV2 has Dice $0.7843$ and $0.7844$, respectively. SGAV-MP-2 also reduces total peak memory substantially relative to FP32, although uniform INT8 generally produces the smallest total memory footprint. Thus, the deployment measurements show a consistent trade-off: the SGAV-MP-2 policies use more model weight memory than uniform INT8 but achieve higher TensorRT speedup and retain higher reported uncertainty correlation.

Table~\hyperref[tab:sgav_deployment_promise12]{\ref*{tab:sgav_deployment_promise12}} shows the same deployment comparison on PROMISE12. The results again show that SGAV-MP-2 achieves lower compression ratios than uniform INT8 while producing higher measured speedups and higher uncertainty correlation for every architecture. For DenseNet121, SGAV-MP-2 achieves $2.869\times$ compression, $1.930\times$ speedup, and UQ correlation of $0.9678$, compared with $4.153\times$, $1.709\times$, and $0.9442$ for uniform INT8. For ResNet18, SGAV-MP-2 achieves $2.635\times$ compression and $1.606\times$ speedup, compared with $4.012\times$ and $1.455\times$ for INT8, while UQ correlation is $0.9699$ versus $0.9418$. Dice and the other reported quality metrics remain very close between the two compressed configurations; for ResNet50, for example, both configurations report Dice $0.8857$ and identical ECE, Brier, AURC, and Error-AUROC values. As with MosMedPlus, uniform INT8 generally provides the lower total memory footprint, whereas SGAV-MP-2 provides the higher speedup and uncertainty correlation among the two compressed policies. The deployment results therefore demonstrate that the reliability-oriented mixed-precision policy changes the compression--runtime--uncertainty trade-off rather than simply maximizing the compression ratio.

Table~\hyperref[tab:amos22_sgav_results]{\ref*{tab:amos22_sgav_results}} summarizes the AMOS22 TensorRT evaluation using the reported reference, FP32, FP16, MedPTQ, and SGAV-MP-2 configurations. The FP32 and FP16 configurations reproduce the reference quality metrics closely, with FP16 reducing latency from $2.1862$ ms to $0.8621$ ms while reducing total memory from $141.50$ MB to $70.76$ MB and achieving a compression ratio of $1.99$. MedPTQ further increases compression to $3.90\times$ and reduces latency to $0.8428$ ms, but its UQ correlation decreases to $0.9946$. SGAV-MP-2 reports Dice $0.7274$, ECE $0.0654$, Brier $0.2033$, AURC $0.0007$, and UQ correlation $0.9972$, while achieving a latency of $1.5800$ ms and compression ratio of $1.58$. Relative to the FP32 configuration, SGAV-MP-2 therefore preserves the reported segmentation and uncertainty metrics closely while reducing latency and model memory, although its compression ratio and latency are both less favorable than the MedPTQ values in this table. SGAV-MP-2 also has a higher reported TensorRT GPU-memory value than FP16 and MedPTQ. These measurements indicate that the SGAV-MP-2 deployment point differs from MedPTQ primarily in the balance between compression, latency, memory, and preservation of the reported uncertainty correlation. We note that SGAV-MP-2 also includes a policy selection time of 2.7 hrs in the calibration time before deployment on Tensor RT. 

\begin{table}[tb!]
\centering
\caption{Full 3D SegResNet results on AMOS22, BraTS, and BTCV. Arrows indicate whether higher ($\uparrow$) or lower ($\downarrow$) values are preferred. SGAV-MP-4 and SGAV-MP-2 denote mixed-precision policies with minimum allowed precisions of 4-bit and 2-bit, respectively.}
\label{tab:supp_3d_full}
\fontsize{7.1}{7.6}\selectfont
\setlength{\tabcolsep}{2.8pt}
\resizebox{\textwidth}{!}{
\begin{tabular}{llccccccc}
\toprule
\textbf{Dataset} & \textbf{Precision} &
\textbf{Dice $\uparrow$} &
\textbf{ECE $\downarrow$} &
\textbf{Brier $\downarrow$} &
\textbf{AURC $\downarrow$} &
\textbf{Error AUROC $\uparrow$} &
\textbf{UQ Corr. $\uparrow$} &
\textbf{Comp. Ratio $\uparrow$} \\
\midrule

\multirow{6}{*}{AMOS22}
& FP32    & 0.7777 & 0.05267 & 0.1754 & 0.0004 & 0.8719 & 1.0000 & 1.000 \\
& FP16    & 0.7777 & 0.05267 & 0.1754 & 0.0004 & 0.8719 & 1.0000 & 1.9992 \\
& INT8    & 0.7779 & 0.05254 & 0.1749 & 0.0004 & 0.8722 & 0.9997 & 3.9957 \\
& INT4    & 0.7449 & 0.0703 & 0.2108 & 0.0005 & 0.8602 & 0.9120 & 7.9800 \\
& SGAV-MP-4 & 0.7774 & 0.0529 & 0.1760 & 0.0004 & 0.8714 & 0.9961 & 5.4762 \\
& SGAV-MP-2 & 0.7696 & 0.0465 & 0.1652 & 0.0005 & 0.8823 & 0.9857 & 7.7200 \\

\midrule

\multirow{6}{*}{BraTS}
& FP32    & 0.7333 & 0.0865 & 0.2095 & 0.0001 & 0.8723 & 1.000 & 1.000 \\
& FP16    & 0.7333 & 0.0865 & 0.2094 & 0.0001 & 0.8723 & 1.0000 & 1.9992 \\
& INT8    & 0.7334 & 0.08536 & 0.2062 & 0.0001 & 0.8750 & 0.9985 & 3.9957 \\
& INT4    & 0.6458 & 0.1196 & 0.2769 & 0.0003 & 0.8470 & 0.5347 & 7.9800 \\
& SGAV-MP-4 & 0.7298 & 0.0845 & 0.2048 & 0.0001 & 0.8755 & 0.9863 & 7.7556 \\
& SGAV-MP-2 & 0.7292 & 0.0857 & 0.2067 & 0.0001& 0.8756 & 0.9755 & 7.8200 \\

\midrule

\multirow{6}{*}{BTCV}
& FP32    & 0.6093 & 0.1634 & 0.3882 & 0.0015 & 0.7650 & 1.000 & 1.000 \\
& FP16    & 0.6093 & 0.1634 & 0.3882 & 0.0015 & 0.7650 & 1.000 & 1.999 \\
& INT8    & 0.6086 & 0.1643 & 0.3903 & 0.0015 & 0.7629 & 0.999 & 3.9957 \\
& INT4    & 0.5719 & 0.1769 & 0.4222 & 0.0026 & 0.7513 & 0.8442 & 7.9800 \\
& SGAV-MP-4 & 0.6081 & 0.1644 & 0.3909 & 0.0015 & 0.7593 & 0.9984 & 4.7328 \\
& SGAV-MP-2 & 0.6035 & 0.1689 & 0.4029 & 0.0016 & 0.7403 & 0.9791 & 3.9100 \\

\bottomrule
\end{tabular}}
\end{table}


\begin{table}[tb!]
\centering
\caption{Full MRI segmentation results on PROMISE12 from the MedSegBench Database. Arrows indicate whether higher ($\uparrow$) or lower ($\downarrow$) values are preferred. SGAV-MP-4 and SGAV-MP-2 denote mixed-precision policies with minimum allowed precisions of 4-bit and 2-bit, respectively; SGAV-MP-2R uses relaxed degradation tolerances(7.5\% as opposed to 3\% allowed degradation in other configurations).}
\label{tab:supp_promise_full}
\fontsize{7.1}{7.6}\selectfont
\setlength{\tabcolsep}{2.8pt}
\resizebox{\textwidth}{!}{
\begin{tabular}{llccccccc}
\toprule
\textbf{Model} & \textbf{Method} &
\textbf{Dice $\uparrow$} &
\textbf{ECE $\downarrow$} &
\textbf{Brier $\downarrow$} &
\textbf{AURC $\downarrow$} &
\textbf{Error AUROC $\uparrow$} &
\textbf{UQ Corr. $\uparrow$} &
\textbf{Comp. Ratio $\uparrow$} \\
\midrule

\multirow{10}{*}{ResNet18}
& FP32           & 0.9000 & 0.0878 & 0.1763 & 0.0084 & 0.7169 & 1.0000 & 1.00 \\
& INT8           & 0.9000 & 0.0881 & 0.1768 & 0.0084 & 0.7163 & 0.9989 & 3.99 \\
& INT4           & 0.8941 & 0.1042 & 0.2093 & 0.0069 & 0.6888 & 0.6670 & 7.95 \\
& MCKP           & 0.9000 & 0.0877 & 0.1760 & 0.0084 & 0.7170 & 0.9996 & 3.98 \\
& Hessian Greedy & 0.8995 & 0.0122 & 0.0247 & 0.0054 & 0.7009 & 0.8022 & 7.94 \\
& AdaRound       & 0.8998 & 0.0892 & 0.1787 & 0.0083 & 0.7189 & 0.9812 & 3.98 \\
& SGAV-MP-4       & 0.9002 & 0.0875 & 0.1756 & 0.0083 & 0.7149 & 0.9891 & 4.11 \\
& SGAV-MP-2      & 0.8999 & 0.0863 & 0.1733 & 0.0084 & 0.7165 & 0.9707 & 6.51 \\
& SGAV-MP-2R     & 0.8998 & 0.0871 & 0.1747 & 0.0084 & 0.7183 & 0.9260 & 7.77 \\
\midrule

\multirow{10}{*}{MobileNetV2}
& FP32           & 0.8946 & 0.0881 & 0.1764 & 0.0056 & 0.6117 & 1.0000 & 1.00 \\
& INT8           & 0.8942 & 0.0893 & 0.1786 & 0.0056 & 0.6121 & 0.9197 & 3.94 \\
& INT4           & 0.8533 & 0.1676 & 0.3356 & 0.0087 & 0.6101 & 0.3723 & 7.71 \\
& MCKP           & 0.8946 & 0.0880 & 0.1761 & 0.0057 & 0.6115 & 0.9887 & 3.93 \\
& Hessian Greedy & 0.8615 & 0.1513 & 0.3030 & 0.0079 & 0.6063 & 0.4718 & 7.70 \\
& AdaRound       & 0.8945 & 0.0885 & 0.1768 & 0.0060 & 0.6152 & 0.8065 & 3.93 \\
& SGAV-MP-4      & 0.8945 & 0.0882 & 0.1764 & 0.0056 & 0.6116 & 0.9852 & 3.73 \\
& SGAV-MP-2         & 0.8948 & 0.0878 & 0.1756 & 0.0056 & 0.6117 & 0.9699 & 4.82 \\
& SGAV-MP-2R     & 0.8949 & 0.0879 & 0.1759 & 0.0056 & 0.6116 & 0.9275 & 5.72 \\
\midrule

\multirow{10}{*}{EfficientNet-B0}
& FP32           & 0.8892 & 0.0982 & 0.1969 & 0.0072 & 0.6444 & 1.0000 & 1.00 \\
& INT8           & 0.8890 & 0.0992 & 0.1988 & 0.0073 & 0.6445 & 0.9876 & 3.92 \\
& INT4           & 0.8824 & 0.1125 & 0.2255 & 0.0074 & 0.6549 & 0.4288 & 7.62 \\
& MCKP           & 0.8893 & 0.0982 & 0.1969 & 0.0070 & 0.6444 & 0.9967 & 3.89 \\
& Hessian Greedy & 0.8850 & 0.1023 & 0.2050 & 0.0077 & 0.6403 & 0.5453 & 5.30 \\
& AdaRound       & 0.8856 & 0.1101 & 0.2201 & 0.0081 & 0.6384 & 0.8019 & 3.89 \\
& SGAV-MP-4      & 0.8889 & 0.0993 & 0.1991 & 0.0073 & 0.6427 & 0.9880 & 4.53 \\
& SGAV-MP-2         & 0.8893 & 0.1006 & 0.2017 & 0.0073 & 0.6417 & 0.9691 & 6.71 \\
& SGAV-MP-2R     & 0.8888 & 0.0999 & 0.2002 & 0.0072 & 0.6422 & 0.9234 & 7.27 \\
\midrule

\multirow{10}{*}{DenseNet121}
& FP32           & 0.8976 & 0.0963 & 0.1930 & 0.0069 & 0.6291 & 1.0000 & 1.00 \\
& INT8           & 0.8975 & 0.0967 & 0.1937 & 0.0070 & 0.6292 & 0.9939 & 3.93 \\
& INT4           & 0.8888 & 0.1130 & 0.2265 & 0.0073 & 0.6164 & 0.6807 & 7.66 \\
& MCKP           & 0.8977 & 0.0962 & 0.1927 & 0.0070 & 0.6293 & 0.9985 & 3.89 \\
& Hessian Greedy & 0.8976 & 0.0960 & 0.1924 & 0.0069 & 0.6300 & 0.7980 & 7.65 \\
& AdaRound       & 0.8860 & 0.1147 & 0.2293 & 0.0078 & 0.6309 & 0.7343 & 3.88 \\
& SGAV-MP-4        & 0.8976 & 0.0956 & 0.1914 & 0.0070 & 0.6291 & 0.9855 & 3.32 \\
& SGAV-MP-2         & 0.8947 & 0.1006 & 0.2015 & 0.0070 & 0.6301 & 0.9693 & 3.18 \\
& SGAV-MP-2R     & 0.8922 & 0.1035 & 0.2073 & 0.0072 & 0.6315 & 0.9242 & 3.49 \\
\midrule

\multirow{10}{*}{ResNet50}
& FP32           & 0.8857 & 0.1025 & 0.2052 & 0.0054 & 0.5881 & 1.0000 & 1.00 \\
& INT8           & 0.8855 & 0.1032 & 0.2067 & 0.0053 & 0.5879 & 0.9441 & 3.98 \\
& INT4           & 0.8810 & 0.1144 & 0.2293 & 0.0052 & 0.5848 & 0.3240 & 7.91 \\
& MCKP           & 0.8857 & 0.1026 & 0.2056 & 0.0054 & 0.5880 & 0.9945 & 3.96 \\
& Hessian Greedy & 0.8838 & 0.1101 & 0.2206 & 0.0054 & 0.5835 & 0.4547 & 7.90 \\
& AdaRound       & 0.8864 & 0.1027 & 0.2053 & 0.0054 & 0.5869 & 0.7906 & 3.88 \\
& SGAV-MP-4        & 0.8857 & 0.1026 & 0.2055 & 0.0054 & 0.5881 & 0.9850 & 4.56 \\
& SGAV-MP-2      & 0.8857 & 0.1026 & 0.2055 & 0.0054 & 0.5883 & 0.9698 & 7.02 \\
& SGAV-MP-2R     & 0.8856 & 0.1029 & 0.2060 & 0.0053 & 0.5884 & 0.9697 & 7.29 \\
\bottomrule
\end{tabular}}
\end{table}

\begin{table}[tb!]
\centering
\caption{Full CT segmentation results on MosMedPlus. Arrows indicate whether higher ($\uparrow$) or lower ($\downarrow$) values are preferred. SGAV-MP-4 and SGAV-MP-2 denote mixed-precision policies with minimum allowed precisions of 4-bit and 2-bit, respectively; SGAV-MP-2R uses relaxed degradation tolerances.(7.5\% as opposed to 3\% allowed degradation in other configurations)}
\label{tab:supp_mosmed_full}
\fontsize{7.1}{7.6}\selectfont
\setlength{\tabcolsep}{2.8pt}
\resizebox{\textwidth}{!}{
\begin{tabular}{llccccccc}
\toprule
\textbf{Model} & \textbf{Method} &
\textbf{Dice $\uparrow$} &
\textbf{ECE $\downarrow$} &
\textbf{Brier $\downarrow$} &
\textbf{AURC $\downarrow$} &
\textbf{Error AUROC $\uparrow$} &
\textbf{UQ Corr. $\uparrow$} &
\textbf{Comp. Ratio $\uparrow$} \\
\midrule

\multirow{10}{*}{ResNet18}
& FP32           & 0.7792 & 0.2121 & 0.4237 & 0.0044 & 0.5866 & 1.0000 & 1.00 \\
& INT8           & 0.7790 & 0.2116 & 0.4227 & 0.0044 & 0.5870 & 0.9490 & 3.99 \\
& INT4           & 0.6834 & 0.2076 & 0.4145 & 0.0075 & 0.6045 & 0.0901 & 7.95 \\
& MCKP           & 0.7791 & 0.2122 & 0.4240 & 0.0044 & 0.5869 & 0.9865 & 3.98 \\
& Hessian Greedy & 0.4395 & 0.0000 & 0.0000 & 0.0032 & 0.6488 & 0.6341 & 7.92 \\
& AdaRound       & 0.7778 & 0.2169 & 0.4332 & 0.0044 & 0.5859 & 0.8827 & 3.98 \\
& SGAV-MP-4        & 0.7794 & 0.2119 & 0.4234 & 0.0044 & 0.5869 & 0.9853 & 2.75 \\
& SGAV-MP-2      & 0.7793 & 0.2125 & 0.4245 & 0.0044 & 0.5857 & 0.9700 & 4.24 \\
& SGAV-MP-2R     & 0.7791 & 0.2118 & 0.4231 & 0.0044 & 0.5869 & 0.9469 & 4.30 \\

\midrule

\multirow{10}{*}{MobileNetV2}
& FP32           & 0.7845 & 0.1985 & 0.3965 & 0.0042 & 0.6029 & 1.0000 & 1.00 \\
& INT8           & 0.7839 & 0.1960 & 0.3914 & 0.0043 & 0.6029 & 0.6599 & 3.94 \\
& INT4           & 0.6322 & 0.2552 & 0.5103 & 0.0093 & 0.5849 & 0.1261 & 7.71 \\
& MCKP           & 0.7850 & 0.1984 & 0.3963 & 0.0042 & 0.6024 & 0.9503 & 3.93 \\
& Hessian Greedy & 0.7619 & 0.2221 & 0.4439 & 0.0046 & 0.5989 & 0.2633 & 7.69 \\
& AdaRound       & 0.7848 & 0.1978 & 0.3948 & 0.0042 & 0.6049 & 0.6898 & 3.93 \\
& SGAV-MP-4        & 0.7844 & 0.1988 & 0.3971 & 0.0042 & 0.6030 & 0.9851 & 2.82 \\
& SGAV-MP-2      & 0.7844 & 0.1989 & 0.3976 & 0.0042 & 0.6029 & 0.9694 & 3.72 \\
& SGAV-MP-2R     & 0.7842 & 0.1985 & 0.3967 & 0.0042 & 0.6031 & 0.9240 & 4.14 \\

\midrule

\multirow{10}{*}{EfficientNet-B0}
& FP32           & 0.7882 & 0.1915 & 0.3825 & 0.0041 & 0.6067 & 1.0000 & 1.00 \\
& INT8           & 0.7878 & 0.1929 & 0.3853 & 0.0041 & 0.6070 & 0.9262 & 3.92 \\
& INT4           & 0.7116 & 0.2770 & 0.5535 & 0.0060 & 0.5851 & 0.1656 & 7.62 \\
& MCKP           & 0.7881 & 0.1917 & 0.3828 & 0.0041 & 0.6069 & 0.9866 & 3.91 \\
& Hessian Greedy & 0.7744 & 0.2010 & 0.4015 & 0.0046 & 0.6041 & 0.2601 & 5.30 \\
& AdaRound       & 0.7792 & 0.2101 & 0.4195 & 0.0042 & 0.6020 & 0.4686 & 3.90 \\
& SGAV-MP-4        & 0.7879 & 0.1919 & 0.3832 & 0.0041 & 0.6070 & 0.9846 & 2.90 \\
& SGAV-2D        & 0.7880 & 0.1915 & 0.3827 & 0.0041 & 0.6064 & 0.9682 & 4.33 \\
& SGAV-MP-2R     & 0.7880 & 0.1915 & 0.3827 & 0.0041 & 0.6064 & 0.9234 & 4.33 \\

\midrule

\multirow{10}{*}{DenseNet121}
& FP32           & 0.7915 & 0.1963 & 0.3924 & 0.0039 & 0.6196 & 1.0000 & 1.00 \\
& INT8           & 0.7920 & 0.1944 & 0.3886 & 0.0039 & 0.6187 & 0.9555 & 3.93 \\
& INT4           & 0.7806 & 0.2186 & 0.4368 & 0.0042 & 0.6105 & 0.4988 & 7.66 \\
& MCKP           & 0.7916 & 0.1961 & 0.3921 & 0.0039 & 0.6196 & 0.9884 & 3.89 \\
& Hessian Greedy & 0.7863 & 0.2148 & 0.4294 & 0.0040 & 0.6175 & 0.6066 & 7.63 \\
& AdaRound       & 0.7838 & 0.2147 & 0.4287 & 0.0040 & 0.6166 & 0.6906 & 3.89 \\
& SGAV-MP-4        & 0.7915 & 0.1958 & 0.3915 & 0.0039 & 0.6197 & 0.9849 & 3.77 \\
& SGAV-2D        & 0.7913 & 0.1963 & 0.3925 & 0.0039 & 0.6201 & 0.9684 & 4.86 \\
& SGAV-MP-2R     & 0.7919 & 0.1926 & 0.3849 & 0.0039 & 0.6189 & 0.9238 & 6.21 \\

\midrule

\multirow{10}{*}{ResNet50}
& FP32           & 0.7850 & 0.2064 & 0.4123 & 0.0046 & 0.5829 & 1.0000 & 1.00 \\
& INT8           & 0.7849 & 0.2059 & 0.4114 & 0.0047 & 0.5846 & 0.9290 & 3.98 \\
& INT4           & 0.6507 & 0.3653 & 0.7304 & 0.0065 & 0.5458 & 0.1192 & 7.91 \\
& MCKP           & 0.7850 & 0.2067 & 0.4128 & 0.0046 & 0.5829 & 0.9936 & 3.97 \\
& Hessian Greedy & 0.7717 & 0.2137 & 0.4270 & 0.0049 & 0.5784 & 0.2549 & 7.89 \\
& AdaRound       & 0.7835 & 0.2089 & 0.4163 & 0.0046 & 0.5845 & 0.7667 & 3.96 \\
& SGAV-MP-4        & 0.7851 & 0.2066 & 0.4126 & 0.0046 & 0.5828 & 0.9845 & 3.65 \\
& SGAV-2D        & 0.7848 & 0.2075 & 0.4144 & 0.0046 & 0.5825 & 0.9697 & 4.89 \\
& SGAV-MP-2R     & 0.7852 & 0.2059 & 0.4113 & 0.0046 & 0.5842 & 0.9392 & 5.94 \\

\bottomrule
\end{tabular}}
\end{table}


\begin{table*}[t]
\centering
\caption{Sensitivity to the uncertainty-map correlation tolerance $\rho_{\mathrm{tol}}$ on PROMISE12. Each encoder is evaluated using FP32, INT8, INT4, INT2, and SGAV-MP-2. For SGAV-MP-2, $\rho_{\mathrm{tol}}\in\{0.03,0.05,0.075,0.10,0.15\}$}
\label{tab:rho_tolerance_promise12}
\scriptsize
\setlength{\tabcolsep}{3pt}
\resizebox{\textwidth}{!}{
\begin{tabular}{llccccccc}
\toprule
\textbf{Model} &
\textbf{Precision} &
\textbf{$\rho_{\mathrm{tol}}$} &
\textbf{Dice} &
\textbf{ECE} &
\textbf{Brier} &
\textbf{Error AUROC} &
\textbf{UQ Corr.} &
\textbf{Comp. Ratio} \\
\midrule

\multirow{9}{*}{ResNet18}
& FP32    & --    & 0.9000 & 0.0878 & 0.1763 & 0.7169 & 1.0000 & 1.00 \\
& INT8    & --    & 0.9000 & 0.0881 & 0.1768 & 0.7163 & 0.9989 & 3.99 \\
& INT4    & --    & 0.8941 & 0.1042 & 0.2093 & 0.6888 & 0.6670 & 7.95 \\
& INT2    & --    & 0.0000 & 1.0000 & 2.0000 & --     & -0.1089 & 15.80 \\
& SGAV-MP-2 & 0.030 & 0.9000 & 0.0873 & 0.1753 & 0.7162 & 0.9699 & 6.50 \\
& SGAV-MP-2 & 0.050 & 0.8998 & 0.0887 & 0.1779 & 0.7153 & 0.9556 & 7.46 \\
& SGAV-MP-2 & 0.075 & 0.8997 & 0.0881 & 0.1770 & 0.7148 & 0.9360 & 7.89 \\
& SGAV-MP-2 & 0.100 & 0.8997 & 0.0881 & 0.1770 & 0.7148 & 0.9360 & 7.89 \\
& SGAV-MP-2 & 0.150 & 0.8997 & 0.0881 & 0.1770 & 0.7148 & 0.9360 & 7.89 \\
\midrule

\multirow{9}{*}{MobileNetV2}
& FP32    & --    & 0.8946 & 0.0881 & 0.1764 & 0.6117 & 1.0000 & 1.00 \\
& INT8    & --    & 0.8942 & 0.0893 & 0.1786 & 0.6121 & 0.9197 & 3.93 \\
& INT4    & --    & 0.8533 & 0.1676 & 0.3356 & 0.6101 & 0.3723 & 7.70\\
& INT2    & --    & 0.0000 & 1.0000 & 2.0000 & --     & 0.0274 & 14.79 \\
& SGAV-MP-2 & 0.030 & 0.8948 & 0.0878 & 0.1756 & 0.6117 & 0.9699 & 4.81 \\
& SGAV-MP-2 & 0.050 & 0.8951 & 0.0880 & 0.1761 & 0.6116 & 0.9504 & 5.39 \\
& SGAV-MP-2 & 0.075 & 0.8949 & 0.0884 & 0.1769 & 0.6115 & 0.9301 & 5.41 \\
& SGAV-MP-2 & 0.100 & 0.8949 & 0.0884 & 0.1769 & 0.6115 & 0.9301 & 5.41 \\
& SGAV-MP-2 & 0.150 & 0.8949 & 0.0884 & 0.1769 & 0.6115 & 0.9301 & 5.41 \\
\midrule

\multirow{9}{*}{EfficientNet-B0}
& FP32    & --    & 0.8892 & 0.0982 & 0.1969 & 0.6444 & 1.0000 & 1.00 \\
& INT8    & --    & 0.8890 & 0.0992 & 0.1988 & 0.6445 & 0.9876 & 3.91 \\
& INT4    & --    & 0.8824 & 0.1125 & 0.2255 & 0.6549 & 0.4288 & 7.62 \\
& INT2    & --    & 0.0000 & 1.0000 & 1.9999 & --     & 0.0287 & 14.47 \\
& SGAV-MP-2 & 0.030 & 0.8893 & 0.1006 & 0.2017 & 0.6417 & 0.9691 & 6.70 \\
& SGAV-MP-2 & 0.050 & 0.8880 & 0.1013 & 0.2030 & 0.6395 & 0.9514 & 7.17 \\
& SGAV-MP-2 & 0.075 & 0.8880 & 0.1013 & 0.2030 & 0.6395 & 0.9514 & 7.17 \\
& SGAV-MP-2 & 0.100 & 0.8880 & 0.1013 & 0.2030 & 0.6395 & 0.9514 & 7.17 \\
& SGAV-MP-2 & 0.150 & 0.8880 & 0.1013 & 0.2030 & 0.6395 & 0.9514 & 7.17 \\
\midrule

\multirow{9}{*}{DenseNet121}
& FP32    & --    & 0.8976 & 0.0963 & 0.1930 & 0.6291 & 1.0000 & 1.00 \\
& INT8    & --    & 0.8975 & 0.0967 & 0.1937 & 0.6292 & 0.9939 & 3.92 \\
& INT4    & --    & 0.8888 & 0.1130 & 0.2265 & 0.6164 & 0.6807 & 7.66 \\
& INT2    & --    & 0.0000 & 1.0000 & 2.0000 & --     & 0.0049 & 14.61 \\
& SGAV-MP-2 & 0.030 & 0.8947 & 0.1006 & 0.2015 & 0.6301 & 0.9693 & 3.17 \\
& SGAV-MP-2 & 0.050 & 0.8924 & 0.1027 & 0.2056 & 0.6304 & 0.9493 & 3.69 \\
& SGAV-MP-2 & 0.075 & 0.8904 & 0.1083 & 0.2170 & 0.6320 & 0.9245 & 3.95 \\
& SGAV-MP-2 & 0.100 & 0.8866 & 0.1156 & 0.2316 & 0.6330 & 0.8998 & 3.59 \\
& SGAV-MP-2 & 0.150 & 0.8858 & 0.1189 & 0.2382 & 0.6319 & 0.8767 & 3.52 \\
\midrule

\multirow{9}{*}{ResNet50}
& FP32    & --    & 0.8857 & 0.1025 & 0.2052 & 0.5881 & 1.0000 & 1.00 \\
& INT8    & --    & 0.8855 & 0.1032 & 0.2067 & 0.5879 & 0.9441 & 3.98 \\
& INT4    & --    & 0.8810 & 0.1144 & 0.2293 & 0.5848 & 0.3240 & 7.90 \\
& INT2    & --    & 0.0000 & 1.0000 & 2.0000 & --     & 0.0066 & 15.60 \\
& SGAV-MP-2 & 0.030 & 0.8858 & 0.1026 & 0.2055 & 0.5883 & 0.9697 & 7.01 \\
& SGAV-MP-2 & 0.050 & 0.8858 & 0.1026 & 0.2055 & 0.5884 & 0.9571 & 7.28 \\
& SGAV-MP-2 & 0.075 & 0.8858 & 0.1027 & 0.2056 & 0.5883 & 0.9469 & 7.28 \\
& SGAV-MP-2 & 0.100 & 0.8858 & 0.1027 & 0.2056 & 0.5883 & 0.9469 & 7.28 \\
& SGAV-MP-2 & 0.150 & 0.8858 & 0.1027 & 0.2056 & 0.5883 & 0.9469 & 7.28 \\

\bottomrule
\end{tabular}}
\end{table*}

\begin{table*}[t]
\centering
\caption{Sensitivity to the uncertainty-map correlation tolerance $\rho_{\mathrm{tol}}$ on MosMedPlus. Each encoder is evaluated using FP32, INT8, INT4, INT2, and SGAV-MP-2. For SGAV-MP-2, $\rho_{\mathrm{tol}}\in\{0.03,0.05,0.075,0.10,0.15\}$}
\label{tab:rho_tolerance_mosmedplus}
\scriptsize
\setlength{\tabcolsep}{3pt}
\resizebox{\textwidth}{!}{
\begin{tabular}{lcccccccc}
\toprule
\textbf{Model} &
\textbf{Precision} &
\textbf{$\rho_{\mathrm{tol}}$} &
\textbf{Dice} &
\textbf{ECE} &
\textbf{Brier} &
\textbf{Error AUROC} &
\textbf{UQ Corr.} &
\textbf{Comp. Ratio} \\
\midrule

\multirow{9}{*}{ResNet18}
& FP32    & --    & 0.7792 & 0.2121 & 0.4237 & 0.5866 & 1.0000 & 1.00 \\
& INT8    & --    & 0.7790 & 0.2116 & 0.4227 & 0.5870 & 0.9490 & 3.99 \\
& INT4    & --    & 0.6834 & 0.2076 & 0.4145 & 0.6045 & 0.0901 & 7.96 \\
& INT2    & --    & 0.0000 & 1.0000 & 2.0000 & --     & 0.0236 & 15.80 \\
& SGAV-MP-2 & 0.030 & 0.7793 & 0.2124 & 0.4243 & 0.5857 & 0.9704 & 4.24 \\
& SGAV-MP-2 & 0.050 & 0.7792 & 0.2123 & 0.4242 & 0.5857 & 0.9568 & 4.28 \\
& SGAV-MP-2 & 0.075 & 0.7793 & 0.2122 & 0.4239 & 0.5857 & 0.9481 & 4.29 \\
& SGAV-MP-2 & 0.100 & 0.7793 & 0.2122 & 0.4239 & 0.5857 & 0.9481 & 4.29 \\
& SGAV-MP-2 & 0.150 & 0.7793 & 0.2122 & 0.4239 & 0.5857 & 0.9481 & 4.29 \\
\midrule

\multirow{9}{*}{MobileNetV2}
& FP32    & --    & 0.7845 & 0.1985 & 0.3965 & 0.6029 & 1.0000 & 1.00 \\
& INT8    & --    & 0.7839 & 0.1960 & 0.3914 & 0.6029 & 0.6599 & 3.93 \\
& INT4    & --    & 0.6322 & 0.2552 & 0.5103 & 0.5849 & 0.1261 & 7.71 \\
& INT2    & --    & 0.0000 & 1.0000 & 2.0000 & --     & --     & 14.79 \\
& SGAV-MP-2 & 0.030 & 0.7843 & 0.1989 & 0.3975 & 0.6027 & 0.9695 & 3.73 \\
& SGAV-MP-2 & 0.050 & 0.7841 & 0.1995 & 0.3988 & 0.6029 & 0.9495 & 3.99 \\
& SGAV-MP-2 & 0.075 & 0.7842 & 0.1985 & 0.3968 & 0.6032 & 0.9244 & 4.14 \\
& SGAV-MP-2 & 0.100 & 0.7840 & 0.1992 & 0.3981 & 0.6030 & 0.9079 & 4.15 \\
& SGAV-MP-2 & 0.150 & 0.7845 & 0.1988 & 0.3973 & 0.6029 & 0.8806 & 4.15 \\
\midrule

\multirow{9}{*}{EfficientNet-B0}
& FP32    & --    & 0.7882 & 0.1915 & 0.3825 & 0.6067 & 1.0000 & 1.00 \\
& INT8    & --    & 0.7878 & 0.1929 & 0.3853 & 0.6070 & 0.9262 & 3.92 \\
& INT4    & --    & 0.7116 & 0.2770 & 0.5535 & 0.5851 & 0.1656 & 7.62 \\
& INT2    & --    & 0.0000 & 1.0000 & 2.0000 & --     & --     & 14.47 \\
& SGAV-MP-2 & 0.030 & 0.7880 & 0.1915 & 0.3827 & 0.6064 & 0.9682 & 4.32 \\
& SGAV-MP-2 & 0.050 & 0.7880 & 0.1926 & 0.3847 & 0.6068 & 0.9464 & 5.71 \\
& SGAV-MP-2 & 0.075 & 0.7882 & 0.1934 & 0.3863 & 0.6065 & 0.9239 & 5.90 \\
& SGAV-MP-2 & 0.100 & 0.7883 & 0.1952 & 0.3899 & 0.6066 & 0.9044 & 5.91 \\
& SGAV-MP-2 & 0.150 & 0.7879 & 0.1954 & 0.3904 & 0.6066 & 0.8757 & 5.91 \\
\midrule

\multirow{9}{*}{DenseNet121}
& FP32    & --    & 0.7915 & 0.1963 & 0.3924 & 0.6196 & 1.0000 & 1.00 \\
& INT8    & --    & 0.7920 & 0.1944 & 0.3886 & 0.6187 & 0.9555 & 3.92 \\
& INT4    & --    & 0.7806 & 0.2186 & 0.4368 & 0.6105 & 0.4988 & 7.66 \\
& INT2    & --    & 0.0000 & 1.0000 & 2.0000 & --     & --     & 14.62 \\
& SGAV-MP-2 & 0.030 & 0.7913 & 0.1963 & 0.3925 & 0.6201 & 0.9684 & 4.85 \\
& SGAV-MP-2 & 0.050 & 0.7914 & 0.1946 & 0.3890 & 0.6201 & 0.9487 & 6.23 \\
& SGAV-MP-2 & 0.075 & 0.7915 & 0.1943 & 0.3884 & 0.6202 & 0.9467 & 6.25 \\
& SGAV-MP-2 & 0.100 & 0.7915 & 0.1943 & 0.3884 & 0.6202 & 0.9467 & 6.25 \\
& SGAV-MP-2 & 0.150 & 0.7915 & 0.1943 & 0.3884 & 0.6202 & 0.9467 & 6.25 \\
\midrule

\multirow{9}{*}{ResNet50}
& FP32    & --    & 0.7850 & 0.2064 & 0.4123 & 0.5829 & 1.0000 & 1.00 \\
& INT8    & --    & 0.7849 & 0.2059 & 0.4114 & 0.5846 & 0.9290 & 3.97 \\
& INT4    & --    & 0.6507 & 0.3653 & 0.7304 & 0.5458 & 0.1192 & 7.90 \\
& INT2    & --    & 0.0000 & 1.0000 & 2.0000 & --     & --     & 15.60 \\
& SGAV-MP-2 & 0.030 & 0.7848 & 0.2075 & 0.4144 & 0.5825 & 0.9697 & 4.88 \\
& SGAV-MP-2 & 0.050 & 0.7850 & 0.2067 & 0.4129 & 0.5826 & 0.9504 & 5.90 \\
& SGAV-MP-2 & 0.075 & 0.7850 & 0.2065 & 0.4125 & 0.5842 & 0.9409 & 5.92 \\
& SGAV-MP-2 & 0.100 & 0.7850 & 0.2065 & 0.4125 & 0.5842 & 0.9409 & 5.92 \\
& SGAV-MP-2 & 0.150 & 0.7850 & 0.2065 & 0.4125 & 0.5842 & 0.9409 & 5.92 \\

\bottomrule
\end{tabular}}
\end{table*}



\begin{table*}[tb!]
\scriptsize
\centering
\caption{Sensitivity of SGAV-MP to the uncertainty-correlation tolerance ($\tau_{\rho_U}$) and risk tolerance ($\tau_R$) on MosMedPlus with MobileNetV2. Dice tolerance is fixed across all settings. Arrows indicate whether higher ($\uparrow$) or lower ($\downarrow$) values are preferred.}
\label{tab:tolerance_ablation}

\resizebox{\linewidth}{!}{
\renewcommand{\arraystretch}{1.05}
\begin{tabular}{llcc|ccccccc}
\toprule
$\mathbf{\tau_{\rho_U}}$ &
$\mathbf{\tau_R}$ &
\textbf{Dice $\uparrow$} &
\textbf{ECE $\downarrow$} &
\textbf{Brier $\downarrow$} &
\textbf{AURC $\downarrow$} &
\textbf{Error AUROC $\uparrow$} &
\textbf{UQ Corr. $\uparrow$} &
\textbf{Comp. Ratio $\uparrow$} \\
\midrule

 0.015 & 0.020 &
0.7844 & 0.1988 & 0.3971 & 0.00423 & 0.6030 & 0.9851 & 2.82 \\

 0.015 & 0.030 &
0.7844 & 0.1988 & 0.3971 & 0.00423 & 0.6030 & 0.9851 & 2.82 \\

 0.015 & 0.015 &
0.7844 & 0.1988 & 0.3971 & 0.00423 & 0.6030 & 0.9851 & 2.82 \\

 0.020 & 0.020 &
0.7844 & 0.1989 & 0.3974 & 0.00423 & 0.6028 & 0.9800 & 3.11 \\

 0.020 & 0.030 &
0.7844 & 0.1989 & 0.3974 & 0.00423 & 0.6028 & 0.9800 & 3.11 \\

 0.020 & 0.015 &
0.7844 & 0.1989 & 0.3974 & 0.00423 & 0.6028 & 0.9800 & 3.11 \\

 0.030 & 0.020 &
0.7843 & 0.1990 & 0.3975 & 0.00423 & 0.6028 & 0.9695 & 3.73 \\

 0.030 & 0.015 &
0.7843 & 0.1990 & 0.3975 & 0.00423 & 0.6028 & 0.9695 & 3.73 \\

0.030 & 0.03 &
0.7843 & 0.1990 & 0.3975 & 0.00423  & 0.6028  & 0.9695 & 3.73 \\
\bottomrule

\end{tabular}
}
\end{table*}



\begin{table*}[tb!]
\centering
\caption{SGAV-2D calibration-data sweep on MosMedPlus. We vary the calibration-set size from 10\% to 100\%.}
\label{tab:sgav_calibration_sweep_mosmed}
\scriptsize
\setlength{\tabcolsep}{3pt}
\resizebox{\textwidth}{!}{
\begin{tabular}{lccccccccc}
\toprule
\textbf{Model} &
\textbf{Dice} &
\textbf{ECE} &
\textbf{Brier} &
\textbf{Error AUROC} &
\textbf{UQ Corr.} &
\textbf{Comp. Ratio} &
\textbf{Build Time (s)} &
\textbf{Peak GPU (MB)} &
\textbf{Calibration (\%)} \\
\midrule

densenet121
& 0.7915 & 0.1958 & 0.3915 & 0.6201 & 0.9686 & 4.5000 & 1056.5000 & 4361.6000 & 10\% \\
& 0.7915 & 0.1956 & 0.3910 & 0.6214 & 0.9678 & 4.5700 & 2396.7000 & 10597.5000 & 25\% \\
& 0.7914 & 0.1962 & 0.3922 & 0.6200 & 0.9689 & 4.5900 & 4691.0000 & 19714.7000 & 50\% \\
& 0.7914 & 0.1962 & 0.3922 & 0.6202 & 0.9680 & 5.0800 & 6540.0000 & 19716.4000 & 75\% \\
& 0.7913 & 0.1965 & 0.3926 & 0.6200 & 0.9697 & 4.7500 & 8628.2000 & 19713.9000 & 100\% \\

\cmidrule{1-10}

efficientnet-b0
& 0.7881 & 0.1915 & 0.3824 & 0.6064 & 0.9684 & 3.6400 & 697.6000 & 3364.3000 & 10\% \\
& 0.7881 & 0.1918 & 0.3830 & 0.6063 & 0.9695 & 3.9400 & 1636.8000 & 8199.2000 & 25\% \\
& 0.7880 & 0.1920 & 0.3835 & 0.6061 & 0.9692 & 3.7700 & 3188.4000 & 15269.8000 & 50\% \\
& 0.7880 & 0.1919 & 0.3834 & 0.6063 & 0.9688 & 4.2200 & 4490.7000 & 15269.6000 & 75\% \\
& 0.7881 & 0.1915 & 0.3824 & 0.6063 & 0.9676 & 4.1000 & 5880.1000 & 15267.2000 & 100\% \\

\cmidrule{1-10}

mobilenet\_v2
& 0.7843 & 0.1990 & 0.3977 & 0.6028 & 0.9705 & 3.4600 & 474.9000 & 3399.2000 & 10\% \\
& 0.7843 & 0.1989 & 0.3975 & 0.6029 & 0.9685 & 3.7600 & 1116.5000 & 8100.1000 & 25\% \\
& 0.7844 & 0.1989 & 0.3975 & 0.6028 & 0.9691 & 3.7600 & 2163.6000 & 14974.1000 & 50\% \\
& 0.7843 & 0.1989 & 0.3975 & 0.6028 & 0.9692 & 3.7600 & 3048.5000 & 14972.6000 & 75\% \\
& 0.7844 & 0.1990 & 0.3976 & 0.6028 & 0.9695 & 3.7200 & 3988.3000 & 14976.0000 & 100\% \\

\cmidrule{1-10}

resnet18
& 0.7792 & 0.2123 & 0.4241 & 0.5868 & 0.9725 & 4.2000 & 232.6000 & 4192.9000 & 10\% \\
& 0.7792 & 0.2124 & 0.4243 & 0.5868 & 0.9706 & 4.2300 & 545.5000 & 9543.3000 & 25\% \\
& 0.7793 & 0.2125 & 0.4245 & 0.5857 & 0.9700 & 4.2400 & 1057.4000 & 17370.0000 & 50\% \\
& 0.7793 & 0.2125 & 0.4245 & 0.5857 & 0.9700 & 4.2400 & 1497.6000 & 17370.3000 & 75\% \\
& 0.7793 & 0.2125 & 0.4245 & 0.5857 & 0.9700 & 4.2400 & 1941.4000 & 17371.6000 & 100\% \\

\cmidrule{1-10}

resnet50
& 0.7849 & 0.2074 & 0.4141 & 0.5825 & 0.9669 & 5.4500 & 510.1000 & 4719.9000 & 10\% \\
& 0.7849 & 0.2075 & 0.4144 & 0.5824 & 0.9697 & 4.9400 & 1155.1000 & 10992.2000 & 25\% \\
& 0.7849 & 0.2074 & 0.4142 & 0.5823 & 0.9693 & 5.1000 & 2277.5000 & 20172.7000 & 50\% \\
& 0.7849 & 0.2072 & 0.4139 & 0.5824 & 0.9693 & 5.2200 & 3166.0000 & 20172.7000 & 75\% \\
& 0.7848 & 0.2074 & 0.4144 & 0.5825 & 0.9697 & 4.8500 & 4156.1000 & 20171.3000 & 100\% \\

\bottomrule
\end{tabular}
}
\end{table*}

\begin{table*}[tb!]
\centering
\caption{SGAV-2D calibration-data sweep on PROMISE12. We vary the calibration-set size from 10\% to 100\%.}
\label{tab:sgav_calibration_sweep_promise12}
\scriptsize
\setlength{\tabcolsep}{3pt}
\resizebox{\textwidth}{!}{
\begin{tabular}{lccccccccc}
\toprule
\textbf{Model} &
\textbf{Dice} &
\textbf{ECE} &
\textbf{Brier} &
\textbf{Error AUROC} &
\textbf{UQ Corr.} &
\textbf{Comp. Ratio} &
\textbf{Build Time (s)} &
\textbf{Peak GPU (MB)} &
\textbf{Calibration (\%)} \\
\midrule

densenet121
& 0.8962 & 0.0987 & 0.1977 & 0.6314 & 0.9701 & 2.4800 & 768.6000 & 2506.8000 & 10\% \\
& 0.8943 & 0.1002 & 0.2007 & 0.6298 & 0.9678 & 2.2500 & 1716.1000 & 5809.0000 & 25\% \\
& 0.8946 & 0.1004 & 0.2010 & 0.6299 & 0.9693 & 3.2000 & 3312.3000 & 11355.5000 & 50\% \\
& 0.8950 & 0.1006 & 0.2016 & 0.6304 & 0.9691 & 3.1400 & 5010.4000 & 16758.9000 & 75\% \\
& 0.8947 & 0.1003 & 0.2010 & 0.6300 & 0.9693 & 3.2000 & 6319.4000 & 19455.5000 & 100\% \\

\cmidrule{1-10}

efficientnet-b0
& 0.8889 & 0.0989 & 0.1983 & 0.6442 & 0.9714 & 1.1300 & 531.7000 & 1916.3000 & 10\% \\
& 0.8874 & 0.1040 & 0.2085 & 0.6413 & 0.9690 & 1.8400 & 1233.2000 & 4466.7000 & 25\% \\
& 0.8896 & 0.0977 & 0.1959 & 0.6420 & 0.9674 & 2.7400 & 2256.5000 & 8751.4000 & 50\% \\
& 0.8892 & 0.1003 & 0.2011 & 0.6415 & 0.9678 & 6.8000 & 3338.4000 & 12924.7000 & 75\% \\
& 0.8894 & 0.1007 & 0.2018 & 0.6417 & 0.9690 & 6.7200 & 4338.3000 & 15009.6000 & 100\% \\

\cmidrule{1-10}

mobilenet\_v2
& 0.8947 & 0.0880 & 0.1762 & 0.6114 & 0.9728 & 4.9400 & 301.4000 & 1987.3000 & 10\% \\
& 0.8946 & 0.0882 & 0.1766 & 0.6112 & 0.9703 & 5.0700 & 689.7000 & 4465.9000 & 25\% \\
& 0.8947 & 0.0883 & 0.1767 & 0.6114 & 0.9688 & 4.5200 & 1293.8000 & 8633.0000 & 50\% \\
& 0.8949 & 0.0877 & 0.1755 & 0.6117 & 0.9692 & 4.9900 & 1909.4000 & 12687.5000 & 75\% \\
& 0.8946 & 0.0882 & 0.1766 & 0.6112 & 0.9703 & 5.0700 & 2446.8000 & 14714.8000 & 100\% \\

\cmidrule{1-10}

resnet18
& 0.8999 & 0.0864 & 0.1735 & 0.7164 & 0.9732 & 6.2000 & 181.0000 & 2095.2000 & 10\% \\
& 0.9000 & 0.0873 & 0.1753 & 0.7162 & 0.9699 & 6.5000 & 450.3000 & 4922.2000 & 25\% \\
& 0.9000 & 0.0873 & 0.1752 & 0.7160 & 0.9696 & 6.6700 & 793.1000 & 9676.7000 & 50\% \\
& 0.9000 & 0.0873 & 0.1752 & 0.7162 & 0.9698 & 6.5800 & 1171.9000 & 14302.7000 & 75\% \\
& 0.8999 & 0.0863 & 0.1732 & 0.7166 & 0.9703 & 6.6700 & 1622.8000 & 16615.7000 & 100\% \\

\cmidrule{1-10}

resnet50
& 0.8857 & 0.1026 & 0.2055 & 0.5883 & 0.9697 & 7.29 & 311.6000 & 2856.2000 & 10\% \\
& 0.8857 & 0.1026 & 0.2055 & 0.5883 & 0.9698 & 7.0200 & 662.4000 & 6175.6000 & 25\% \\
& 0.8857 & 0.1026 & 0.2055 & 0.5883 & 0.9696 & 7.0400 & 1282.8000 & 11758.1000 & 50\% \\
& 0.8857 & 0.1027 & 0.2056 & 0.5883 & 0.9694 & 7.0600 & 2006.7000 & 17195.8000 & 75\% \\
& 0.8857 & 0.1026 & 0.2055 & 0.5883 & 0.9698 & 7.0200 & 2448.0000 & 19912.4000 & 100\% \\

\bottomrule
\end{tabular}
}
\end{table*}


\begin{table*}[tb!]
\centering
\caption{SGAV-3D calibration-data sweep. We vary the calibration-set size from 10\% to 100\%.}
\label{tab:sgav_3d_calibration_sweep}
\scriptsize
\setlength{\tabcolsep}{3pt}
\resizebox{\textwidth}{!}{
\begin{tabular}{lccccccccc}
\toprule
\textbf{Dataset} & \textbf{Dice} & \textbf{ECE} & \textbf{Brier} & \textbf{Error AUROC} & \textbf{UQ Corr.} & \textbf{Comp. Ratio} & \textbf{Build Time (s)} & \textbf{Peak GPU (MB)} & \textbf{Calibration (\%)} \\
\midrule
\multirow{5}{*}{amos22} & 0.774802 & 0.051078 & 0.170533 & 0.870996 & 0.975412 & 7.34 & 3001.3 & 4969.7 & 10\% \\
 & 0.773393 & 0.058091 & 0.186753 & 0.866346 & 0.985757 & 7.71 & 8875.0 & 5981.1 & 25\% \\
 & 0.771851 & 0.046531 & 0.164489 & 0.882115 & 0.983576 & 7.07 & 16134.5 & 7323.9 & 50\% \\
 & 0.770784 & 0.045027 & 0.161547 & 0.882807 & 0.973290 & 7.63 & 22489.8 & 7572.2 & 75\% \\
 & 0.769682 & 0.046551 & 0.165200 & 0.882369 & 0.967289 & 7.72 & 31755.7 & 7692.4 & 100\% \\
\midrule
\multirow{5}{*}{brats} & 0.729437 & 0.085182 & 0.205751 & 0.876435 & 0.975602 & 7.74 & 3979.7 & 3389.4 & 10\% \\
 & 0.729437 & 0.085182 & 0.205751 & 0.876435 & 0.975602 & 7.74 & 9894.1 & 3399.0 & 25\% \\
 & 0.729592 & 0.085425 & 0.206436 & 0.876037 & 0.975147 & 7.74 & 18711.4 & 3400.8 & 50\% \\
 & 0.725994 & 0.085028 & 0.204008 & 0.879187 & 0.962708 & 7.81 & 28252.6 & 3430.5 & 75\% \\
 & 0.729295 & 0.085792 & 0.206792 & 0.875689 & 0.975560 & 7.82 & 40445.6 & 3430.9 & 100\% \\
\midrule
\multirow{5}{*}{btcv} & 0.606658 & 0.166443 & 0.396616 & 0.749121 & 0.980512 & 3.41 & 2542.5 & 4439.1 & 10\% \\
 & 0.606658 & 0.166443 & 0.396616 & 0.749121 & 0.980512 & 3.41 & 2595.7 & 4439.3 & 25\% \\
 & 0.602191 & 0.168776 & 0.402788 & 0.742528 & 0.977392 & 3.43 & 4144.2 & 6635.3 & 50\% \\
 & 0.604441 & 0.167289 & 0.398825 & 0.747743 & 0.976725 & 3.95 & 4256.9 & 6635.3 & 75\% \\
 & 0.603555 & 0.168923 & 0.402949 & 0.740394 & 0.979184 & 3.91 & 6069.2 & 6636.4 & 100\% \\
\bottomrule
\end{tabular}
}
\end{table*}


\begin{table*}[tb!]
\centering

\caption{
Ablation study of SGAV-MP scoring terms across datasets and encoder architectures. Here, $D$, $U$, $R$, and $C$ denote Dice degradation, uncertainty-map correlation degradation, AURC degradation, and calibration/ECE degradation, respectively. ECE denotes expected calibration error, AURC denotes area under the risk--coverage curve, UQ Corr. denotes uncertainty-map correlation, and CR denotes compression ratio. LQ denotes the number of quantized layers. Time denotes the total quantization build time in seconds, and Mem denotes peak GPU memory in MB. ``NC'' denotes the No-Connection variant, where the connection term is removed; the connection term is included in all other configurations with a weight of $0.001$.
}
\label{tab:ablation}
\scriptsize
\setlength{\tabcolsep}{2.2pt}
\resizebox{\textwidth}{!}{
{
\begin{tabular}{p{1.35cm}lccccc|cccccccccc}
\toprule
\textbf{Dataset / Model} & \textbf{Variant} & $\boldsymbol{\alpha_D}$ & $\boldsymbol{\alpha_U}$ & $\boldsymbol{\alpha_R}$ & $\boldsymbol{\alpha_C}$ & $\boldsymbol{\lambda_M}$ & \textbf{Dice $\uparrow$} & \textbf{ECE $\downarrow$} & \textbf{Brier $\downarrow$} & \textbf{AURC $\downarrow$} & \textbf{Err-AUROC $\uparrow$} & \textbf{U-Corr. $\uparrow$} & \textbf{CR $\uparrow$} & \textbf{LQ.} & \textbf{Time-(s)} & \textbf{Mem-(MB)} \\
\midrule
\multirow{10}{*}{\shortstack{\textbf{MosMedPlus}\\\textbf{ResNet18}}} & Dice  & 1 & 0 & 0 & 0 & 1 & 0.7793 & 0.2128 & 0.4250 & 0.0044 & 0.5872 & 0.9497 & 1.3701 & 5 & 1939.1165 & 17369.0645 \\
 & Dice only (NC) & 1 & 0 & 0 & 0 & 0 & 0.7793 & 0.2128 & 0.4250 & 0.0044 & 0.5872 & 0.9497 & 1.3701 & 5 & 1951.1475 & 17369.0645 \\
 & Dice+UQ+risk & 1 & 1 & 1 & 0 & 1 & 0.7793 & 0.2124 & 0.4244 & 0.0044 & 0.5856 & 0.9620 & 4.2858 & 29 & 2076.8998 & 17369.1035 \\
 & UQ & 0 & 1 & 0 & 0 & 1 & 0.7791 & 0.2115 & 0.4226 & 0.0044 & 0.5870 & 0.9618 & 4.1487 & 29 & 1957.4125 & 17369.6895 \\
 & UQ only (NC) & 0 & 1 & 0 & 0 & 0 & 0.7791 & 0.2115 & 0.4226 & 0.0044 & 0.5870 & 0.9618 & 4.1487 & 29 & 1929.8008 & 17369.6895 \\
 & Risk+ECE & 0 & 0 & 1 & 1 & 1 & 0.7794 & 0.2121 & 0.4238 & 0.0044 & 0.5865 & 0.9489 & 1.2757 & 3 & 1956.6690 & 17370.7051 \\
 & Risk+ECE (NC) & 0 & 0 & 1 & 1 & 0 & 0.7794 & 0.2121 & 0.4238 & 0.0044 & 0.5865 & 0.9489 & 1.2757 & 3 & 2031.4573 & 17370.7051 \\
 & Dice+risk+ECE & 1 & 0 & 1 & 1 & 1 & 0.7794 & 0.2121 & 0.4238 & 0.0044 & 0.5865 & 0.9489 & 1.2757 & 3 & 1952.0159 & 17370.9355 \\
 & Full SGAV-MP (NC) & 1 & 1 & 1 & 1 & 0 & 0.7792 & 0.2121 & 0.4237 & 0.0044 & 0.5872 & 0.9532 & 4.2274 & 25 & 2160.1201 & 17372.0488 \\
& Full SGAV-MP & 1 & 1 & 1 & 1 & 1 & 0.7792 & 0.2121 & 0.4237 & 0.0044 & 0.5872 & 0.9532 & 4.2274 & 25 & 1966.1868 & 17372.2520 \\
\midrule
\multirow{10}{*}{\shortstack{\textbf{PROMISE12}\\\textbf{ResNet50}}} & Dice & 1 & 0 & 0 & 0 & 1 & 0.8857 & 0.1026 & 0.2055 & 0.0054 & 0.5883 & 0.9608 & 5.0733 & 32 & 2444.6112 & 19914.2070 \\
 & Dice only (NC) & 1 & 0 & 0 & 0 & 0 & 0.8857 & 0.1026 & 0.2055 & 0.0054 & 0.5883 & 0.9608 & 5.0733 & 32 & 2440.4465 & 19914.2070 \\
 & Dice+UQ+risk & 1 & 1 & 1 & 0 & 1 & 0.8856 & 0.1029 & 0.2061 & 0.0054 & 0.5884 & 0.9531 & 7.2889 & 62 & 2601.2081 & 19912.4180 \\
 & UQ & 0 & 1 & 0 & 0 & 1 & 0.8853 & 0.1037 & 0.2077 & 0.0054 & 0.5882 & 0.9545 & 7.1705 & 63 & 2455.5137 & 19913.8516 \\
 & UQ only (NC) & 0 & 1 & 0 & 0 & 0 & 0.8853 & 0.1037 & 0.2077 & 0.0054 & 0.5882 & 0.9545 & 7.1705 & 63 & 2442.6255 & 19913.8516 \\
 & Risk+ECE & 0 & 0 & 1 & 1 & 1 & 0.8857 & 0.1026 & 0.2055 & 0.0054 & 0.5883 & 0.9608 & 5.0733 & 32 & 2443.6274 & 19914.6719 \\
 & Risk+ECE (NC) & 0 & 0 & 1 & 1 & 0 & 0.8857 & 0.1026 & 0.2055 & 0.0054 & 0.5883 & 0.9608 & 5.0733 & 32 & 2536.1657 & 19914.6719 \\
 & Dice+risk+ECE & 1 & 0 & 1 & 1 & 1 & 0.8857 & 0.1026 & 0.2055 & 0.0054 & 0.5883 & 0.9608 & 5.0733 & 32 & 2454.7749 & 19913.3750 \\
 & Full SGAV-MP (NC) & 1 & 1 & 1 & 1 & 0 & 0.8856 & 0.1027 & 0.2057 & 0.0054 & 0.5883 & 0.9500 & 7.1600 & 55 & 2692.5017 & 19914.6719 \\
  & Full SGAV-MP & 1 & 1 & 1 & 1 & 1 & 0.8856 & 0.1027 & 0.2057 & 0.0054 & 0.5883 & 0.9697 & 7.29 & 55 & 2450.7307 & 19914.6719 \\
\bottomrule
\end{tabular}
}
}
\end{table*}
\begin{table*}[t]
\centering

\caption{Segmentation performance, uncertainty estimation, and compression results across models and datasets. Values are reported as mean $\pm$ standard deviation over three random seeds for SGAV-MP, using 10\% of the available calibration data with three different calibration-data seeds on the MedSegBench datasets.}
\label{tab:var_2d}
\resizebox{\textwidth}{!}{%
\begin{tabular}{llcccccccc}
\toprule
Dataset & Model &
Dice $\uparrow$ &
ECE $\downarrow$ &
Brier $\downarrow$ &
AURC $\downarrow$ &
Error AUROC $\uparrow$ &
Uncertainty Corr. $\uparrow$ &
Compression Ratio &
Layers Quantized \\
\midrule

\multirow{5}{*}{MoSMedPlus}
& DenseNet121
& $0.7914 \pm 0.0000$
& $0.1946 \pm 0.0005$
& $0.3891 \pm 0.0012$
& $0.0039 \pm 0.0000$
& $0.6205 \pm 0.0006$
& $0.9480 \pm 0.0022$
& $6.2494 \pm 0.0076$
& $129.00 \pm 0.00$ \\

& EfficientNet-B0
& $0.7879 \pm 0.0004$
& $0.1936 \pm 0.0006$
& $0.3867 \pm 0.0012$
& $0.0041 \pm 0.0000$
& $0.6066 \pm 0.0004$
& $0.9248 \pm 0.0040$
& $5.9185 \pm 0.0208$
& $87.00 \pm 1.00$ \\

& MobileNet-V2
& $0.7842 \pm 0.0001$
& $0.1987 \pm 0.0002$
& $0.3971 \pm 0.0003$
& $0.0042 \pm 0.0000$
& $0.6029 \pm 0.0002$
& $0.9238 \pm 0.0042$
& $4.1391 \pm 0.0046$
& $56.33 \pm 1.15$ \\

& ResNet18
& $0.7793 \pm 0.0000$
& $0.2122 \pm 0.0000$
& $0.4239 \pm 0.0000$
& $0.0044 \pm 0.0000$
& $0.5857 \pm 0.0000$
& $0.9481 \pm 0.0000$
& $4.2901 \pm 0.0000$
& $29.00 \pm 0.00$ \\

& ResNet50
& $0.7850 \pm 0.0001$
& $0.2064 \pm 0.0002$
& $0.4123 \pm 0.0003$
& $0.0046 \pm 0.0000$
& $0.5841 \pm 0.0001$
& $0.9392 \pm 0.0030$
& $5.9456 \pm 0.0413$
& $62.00 \pm 0.00$ \\

\midrule

\multirow{5}{*}{Promise12}
& DenseNet121
& $0.8902 \pm 0.0011$
& $0.1082 \pm 0.0013$
& $0.2168 \pm 0.0027$
& $0.0074 \pm 0.0001$
& $0.6314 \pm 0.0002$
& $0.9254 \pm 0.0010$
& $3.5934 \pm 0.4414$
& $72.33 \pm 7.09$ \\

& EfficientNet-B0
& $0.8889 \pm 0.0006$
& $0.0988 \pm 0.0011$
& $0.1981 \pm 0.0022$
& $0.0073 \pm 0.0001$
& $0.6405 \pm 0.0009$
& $0.9492 \pm 0.0255$
& $5.1868 \pm 2.1845$
& $72.67 \pm 16.17$ \\

& MobileNet-V2
& $0.8947 \pm 0.0002$
& $0.0885 \pm 0.0003$
& $0.1772 \pm 0.0007$
& $0.0056 \pm 0.0000$
& $0.6117 \pm 0.0002$
& $0.9307 \pm 0.0166$
& $5.4695 \pm 0.0596$
& $60.67 \pm 0.58$ \\

& ResNet18
& $0.8997 \pm 0.0000$
& $0.0881 \pm 0.0000$
& $0.1770 \pm 0.0000$
& $0.0083 \pm 0.0000$
& $0.7148 \pm 0.0000$
& $0.9360 \pm 0.0000$
& $7.8931 \pm 0.0000$
& $29.00 \pm 0.00$ \\

& ResNet50
& $0.8858 \pm 0.0000$
& $0.1027 \pm 0.0000$
& $0.2056 \pm 0.0000$
& $0.0053 \pm 0.0000$
& $0.5883 \pm 0.0000$
& $0.9469 \pm 0.0200$
& $7.2878 \pm 0.0001$
& $62.00 \pm 0.00$ \\

\bottomrule
\end{tabular}%
}
\end{table*}


\begin{table*}[tb!]
\centering
\caption{SGAV-2D deployment results on the MosMedPlus dataset. CR denotes compression ratio and UC denotes uncertainty correlation. Context Memory comprises activation memory and TensorRT kernel-internal scratch memory, while Total Peak Memory includes weight and Context Memory. Latency denotes median TensorRT inference latency, and Speedup is measured relative to the FP32 TensorRT baseline.}
\label{tab:sgav_deployment_mosmedplus}
\scriptsize
\setlength{\tabcolsep}{3pt}
\resizebox{\textwidth}{!}{%
\begin{tabular}{llcccccccccccc}
\toprule
\textbf{Model} &
\textbf{Precision} &
\textbf{CR} &
\textbf{Latency (ms)} &
\textbf{Speedup} &
\textbf{Con-Mem (MiB)} &
\textbf{Dice} &
\textbf{ECE} &
\textbf{Brier} &
\textbf{AURC} &
\textbf{Err-AUROC} &
\textbf{UC.} &
\textbf{Weights (MiB)} &
\textbf{Total Mem. (MiB)} \\
\midrule

densenet121 & FP32
& 1.000 & 3.1805 & 1.000 & 107.6250
& 0.7915 & 0.1963 & 0.3925 & 0.0039 & 0.6196 & 0.9992
& 77.54 & 185.17 \\

& SGAV-MP
& 2.869 & 1.6455 & 1.933 & 54.2656
& 0.7913 & 0.1964 & 0.3925 & 0.0039 & 0.6195 & 0.9067
& 27.02 & 81.29 \\

& Uniform INT8
& 4.153 & 1.8689 & 1.702 & 57.5156
& 0.7914 & 0.1958 & 0.3913 & 0.0039 & 0.6198 & 0.8447
& 18.67 & 76.19 \\

\cmidrule{1-14}

efficientnet-b0 & FP32
& 1.000 & 1.3310 & 1.000 & 94.5625
& 0.7882 & 0.1915 & 0.3825 & 0.0041 & 0.6067 & 0.9983
& 35.04 & 129.60 \\

& SGAV-MP
& 2.250 & 0.7905 & 1.684 & 34.7178
& 0.7882 & 0.1914 & 0.3823 & 0.0041 & 0.6067 & 0.9833
& 15.58 & 50.30 \\

& Uniform INT8
& 3.812 & 0.9032 & 1.473 & 37.1250
& 0.7881 & 0.1917 & 0.3828 & 0.0041 & 0.6065 & 0.9600
& 9.20 & 46.33 \\

\cmidrule{1-14}

mobilenet\_v2 & FP32
& 1.000 & 0.8792 & 1.000 & 76.1250
& 0.7843 & 0.1990 & 0.3977 & 0.0043 & 0.6028 & 0.9978
& 13.62 & 89.75 \\

& SGAV-MP
& 2.407 & 0.5354 & 1.642 & 29.8438
& 0.7843 & 0.1989 & 0.3975 & 0.0043 & 0.6029 & 0.9685
& 5.66 & 35.50 \\

& Uniform INT8
& 3.679 & 0.5841 & 1.505 & 31.1875
& 0.7844 & 0.1989 & 0.3975 & 0.0043 & 0.6029 & 0.9392
& 3.70 & 34.50 \\

\cmidrule{1-14}

resnet18 & FP32
& 1.000 & 0.7568 & 1.000 & 81.6250
& 0.7792 & 0.2123 & 0.4241 & 0.0046 & 0.5868 & 0.9985
& 42.68 & 124.31 \\

& SGAV-MP
& 2.635 & 0.4691 & 1.613 & 31.9844
& 0.7792 & 0.2124 & 0.4243 & 0.0046 & 0.5868 & 0.9690
& 16.19 & 48.17 \\

& Uniform INT8
& 4.012 & 0.5187 & 1.459 & 35.1250
& 0.7793 & 0.2125 & 0.4245 & 0.0046 & 0.5857 & 0.9415
& 10.62 & 46.75 \\

\cmidrule{1-14}

resnet50 & FP32
& 1.000 & 1.1246 & 1.000 & 91.8125
& 0.7849 & 0.2074 & 0.4141 & 0.0044 & 0.5825 & 0.9971
& 94.11 & 185.92 \\

& SGAV-MP
& 2.527 & 0.6892 & 1.632 & 38.4375
& 0.7849 & 0.2075 & 0.4144 & 0.0044 & 0.5824 & 0.9697
& 37.25 & 75.69 \\

& Uniform INT8
& 4.082 & 0.7621 & 1.475 & 42.7500
& 0.7849 & 0.2074 & 0.4142 & 0.0044 & 0.5823 & 0.9431
& 23.06 & 65.81 \\

\bottomrule
\end{tabular}
}
\end{table*}


\begin{table*}[tb!]
\centering
\caption{SGAV-2D deployment results on the PROMISE12 dataset. CR denotes compression ratio and UC denotes uncertainty correlation. Context Memory comprises activation memory and TensorRT kernel-internal scratch memory, while Total Peak Memory includes weight and Context Memory. Latency denotes median TensorRT inference latency, and Speedup is measured relative to the FP32 TensorRT baseline.}
\label{tab:sgav_deployment_promise12}
\scriptsize
\setlength{\tabcolsep}{3pt}
\resizebox{\textwidth}{!}{%
\begin{tabular}{llcccccccccccc}
\toprule
\textbf{Model} &
\textbf{Precision} &
\textbf{CR} &
\textbf{Latency (ms)} &
\textbf{Speedup} &
\textbf{Con-Mem (MiB)} &
\textbf{Dice} &
\textbf{ECE} &
\textbf{Brier} &
\textbf{AURC} &
\textbf{Err-AUROC} &
\textbf{UC.} &
\textbf{Weights (MiB)} &
\textbf{Total Mem. (MiB)} \\
\midrule

densenet121 & FP32
& 1.000 & 2.8741 & 1.000 & 91.8750
& 0.8962 & 0.0987 & 0.1977 & 0.0020 & 0.6314 & 0.9989
& 77.54 & 169.41 \\

& SGAV-MP
& 2.869 & 1.4886 & 1.930 & 46.2656
& 0.8943 & 0.1002 & 0.2007 & 0.0021 & 0.6298 & 0.9678
& 27.02 & 73.29 \\

& Uniform INT8
& 4.153 & 1.6815 & 1.709 & 49.5156
& 0.8946 & 0.1004 & 0.2010 & 0.0021 & 0.6299 & 0.9442
& 18.67 & 67.81 \\

\cmidrule{1-14}

efficientnet-b0 & FP32
& 1.000 & 1.2042 & 1.000 & 84.8125
& 0.8889 & 0.0989 & 0.1983 & 0.0022 & 0.6442 & 0.9981
& 35.04 & 119.85 \\

& SGAV-MP
& 2.250 & 0.7236 & 1.664 & 31.2188
& 0.8874 & 0.1040 & 0.2085 & 0.0023 & 0.6413 & 0.9690
& 15.58 & 46.80 \\

& Uniform INT8
& 3.812 & 0.8165 & 1.475 & 33.6875
& 0.8896 & 0.0977 & 0.1959 & 0.0021 & 0.6420 & 0.9445
& 9.20 & 43.75 \\

\cmidrule{1-14}

mobilenet\_v2 & FP32
& 1.000 & 0.8236 & 1.000 & 70.1875
& 0.8947 & 0.0880 & 0.1762 & 0.0018 & 0.6114 & 0.9987
& 13.62 & 83.81 \\

& SGAV-MP
& 2.407 & 0.5032 & 1.638 & 27.7344
& 0.8946 & 0.0882 & 0.1766 & 0.0018 & 0.6112 & 0.9703
& 5.66 & 33.40 \\

& Uniform INT8
& 3.679 & 0.5598 & 1.471 & 29.4688
& 0.8947 & 0.0883 & 0.1767 & 0.0018 & 0.6114 & 0.9688
& 3.70 & 33.17 \\

\cmidrule{1-14}

resnet18 & FP32
& 1.000 & 0.6734 & 1.000 & 78.1875
& 0.8999 & 0.0864 & 0.1735 & 0.0017 & 0.7164 & 0.9988
& 42.68 & 120.87 \\

& SGAV-MP
& 2.635 & 0.4195 & 1.606 & 30.0625
& 0.9000 & 0.0873 & 0.1753 & 0.0017 & 0.7162 & 0.9699
& 16.19 & 46.25 \\

& Uniform INT8
& 4.012 & 0.4628 & 1.455 & 32.5625
& 0.9000 & 0.0873 & 0.1752 & 0.0017 & 0.7160 & 0.9418
& 10.62 & 43.75 \\

\cmidrule{1-14}

resnet50 & FP32
& 1.000 & 0.9815 & 1.000 & 88.4375
& 0.8857 & 0.1026 & 0.2055 & 0.0023 & 0.5883 & 0.9982
& 94.11 & 182.55 \\

& SGAV-MP
& 2.527 & 0.5942 & 1.652 & 37.1250
& 0.8857 & 0.1026 & 0.2055 & 0.0023 & 0.5883 & 0.9698
& 37.25 & 74.18 \\

& Uniform INT8
& 4.082 & 0.6587 & 1.491 & 40.8125
& 0.8857 & 0.1026 & 0.2055 & 0.0023 & 0.5883 & 0.9426
& 23.06 & 65.81 \\

\bottomrule
\end{tabular}
}
\end{table*}



\begin{table*}[htbp!]
\centering
\caption{AMOS22 Dataset: SGAV and TensorRT Evaluation Results. ECE: Expected Calibration Error; AURC: Area Under the Risk--Coverage curve; Err-AUROC: Error Area Under the Receiver Operating Characteristic curve; U-Corr.: Uncertainty Correlation; Lat.: median inference latency; Mem.: total memory; GPU: peak GPU memory; CR: compression ratio; Build: TensorRT engine build time.}
\label{tab:amos22_sgav_results}

\fontsize{9}{10}\selectfont
\setlength{\tabcolsep}{2.5pt}

\begin{tabular}{@{}lccccccccccr@{}}
\toprule
Prec. & Dice & ECE & Brier & AURC & E-AUROC & U-Corr. & Lat.(ms) & Mem.(MB) & GPU(MB) & CR & Build(s) \\
\midrule
Ref 
& 0.7274 & 0.0657 & 0.2036 & 0.0007 & 0.8828 & 1.00 
& -- & -- & -- & -- & -- \\

fp32 
& 0.7273 & 0.0658 & 0.2039 & 0.0007 & 0.8828 & 1.00 
& 2.1862 & 141.4966 & 222 & 1.00 & 16.5689 \\

fp16 
& 0.7274 & 0.0657 & 0.2036 & 0.0007 & 0.8828 & 1.00 
& 0.8621 & 70.7607 & 152 & 1.99 & 35.7283 \\

MedPTQ 
& 0.7270 & 0.0661 & 0.2046 & 0.0007 & 0.8823 & 0.9946 
& 0.8428 & 88.9481 & 170 & 3.90 & 19.7355 \\

SGAV-MP 
& 0.7274 & 0.0654 & 0.2033 & 0.0007 & 0.8825 & 0.9972 
& 1.5800 & 136.9484 & 224 & 1.58 & 30.7248 \\

\bottomrule
\end{tabular}
\end{table*}



\begin{figure}[tb!]
\centering
\setlength{\tabcolsep}{2.5pt}

\begin{tabular}{ccccc}
\scriptsize Input & \scriptsize Ground truth & \scriptsize FP32 mask & \scriptsize INT4 mask & \scriptsize SGAV-MP mask \\
\includegraphics[width=0.185\textwidth]{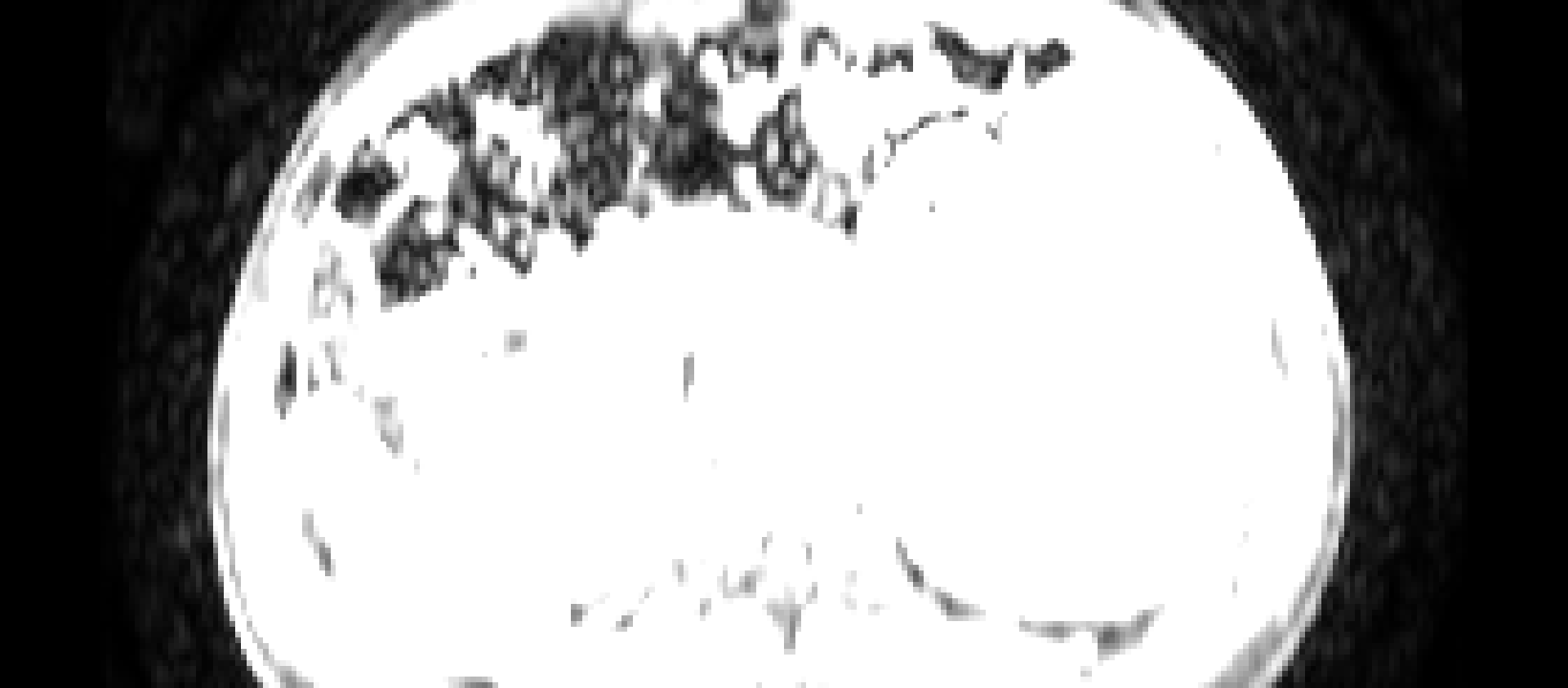} &
\includegraphics[width=0.185\textwidth]{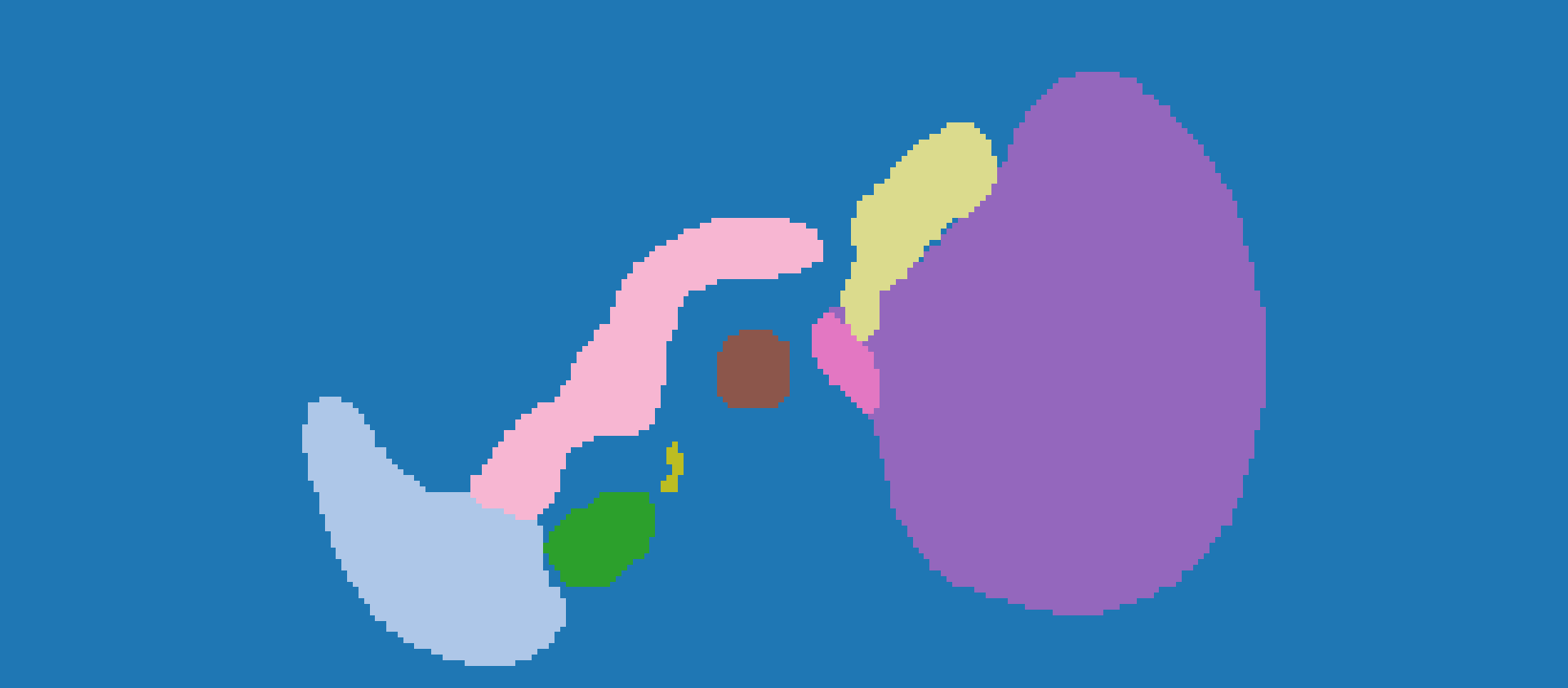} &
\includegraphics[width=0.185\textwidth]{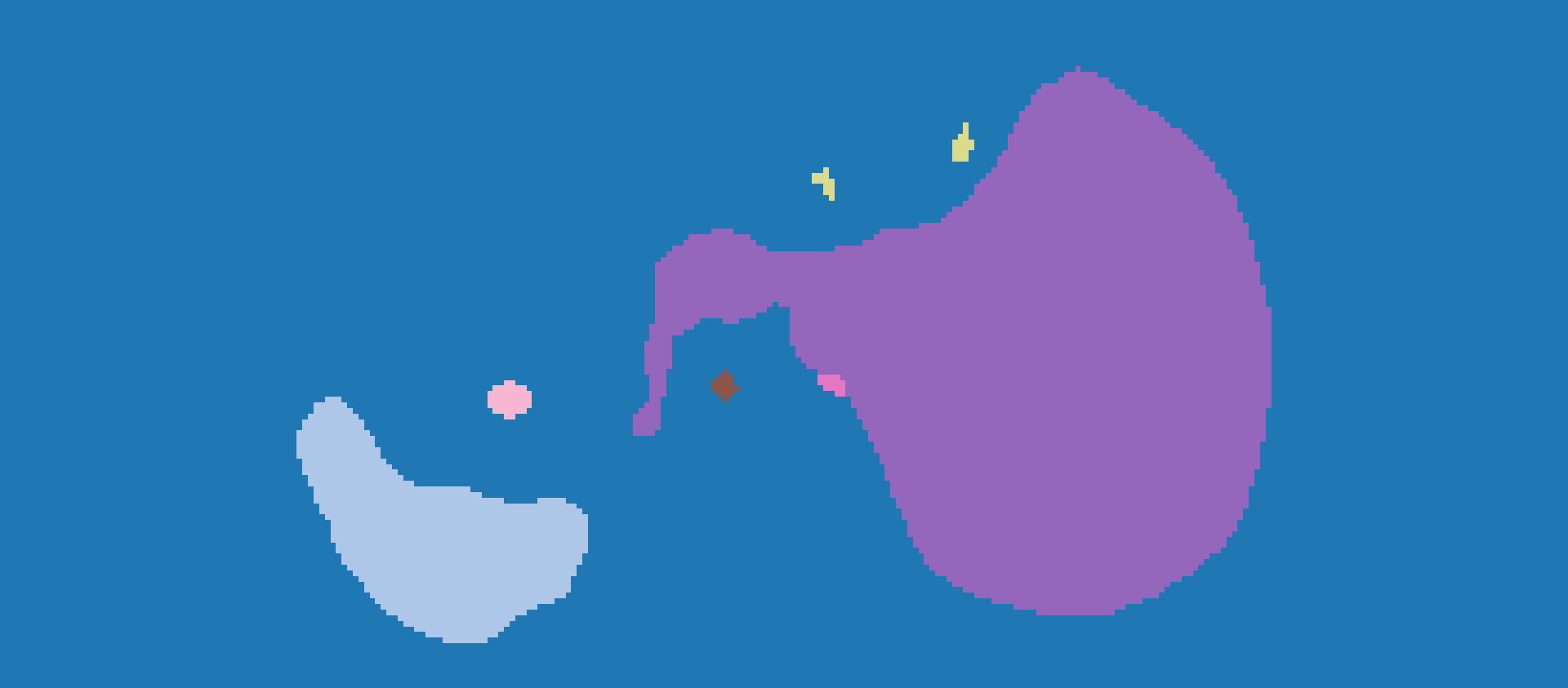} &
\includegraphics[width=0.185\textwidth]{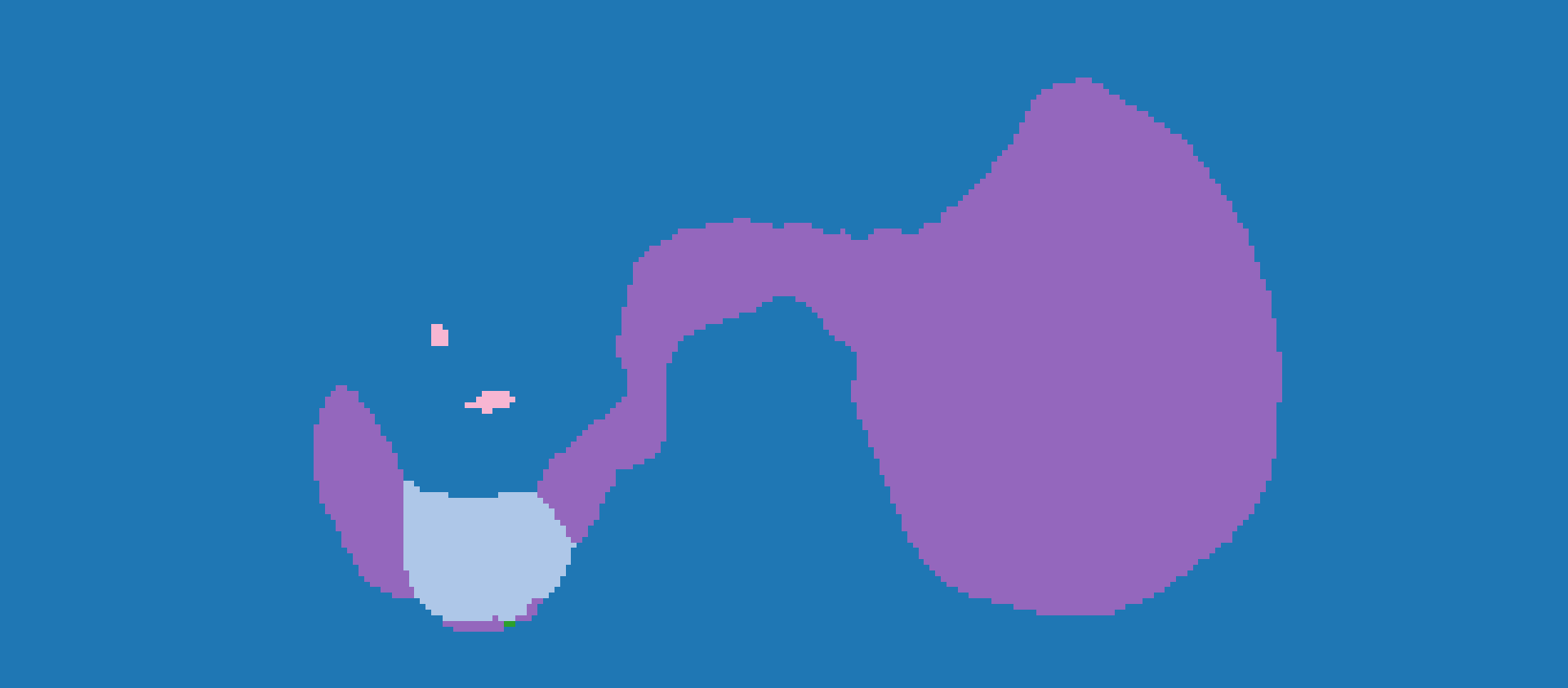} &
\includegraphics[width=0.185\textwidth]{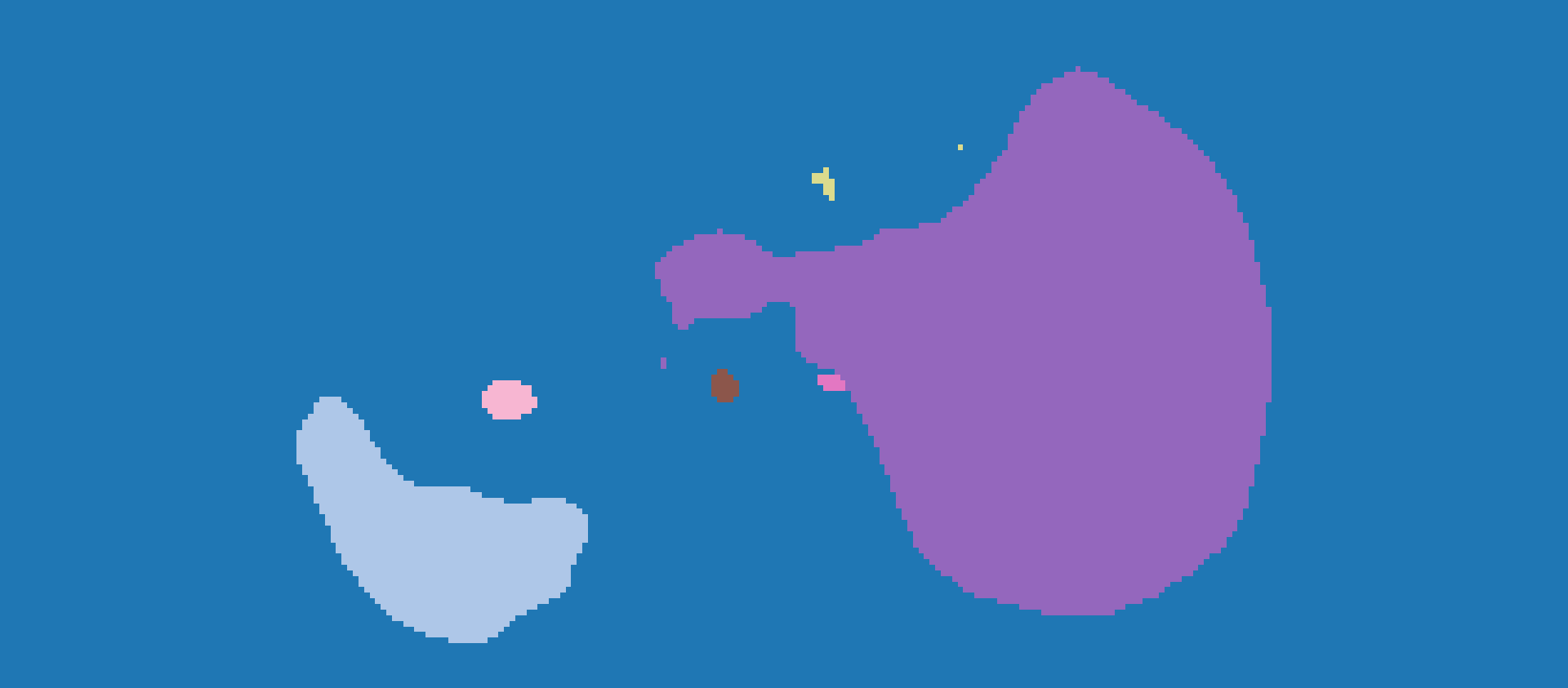} \\
\\[-0.45em]
\scriptsize SGAV-MP error & & \scriptsize FP32 uncertainty & \scriptsize INT4 uncertainty & \scriptsize SGAV-MP uncertainty \\
\includegraphics[width=0.185\textwidth]{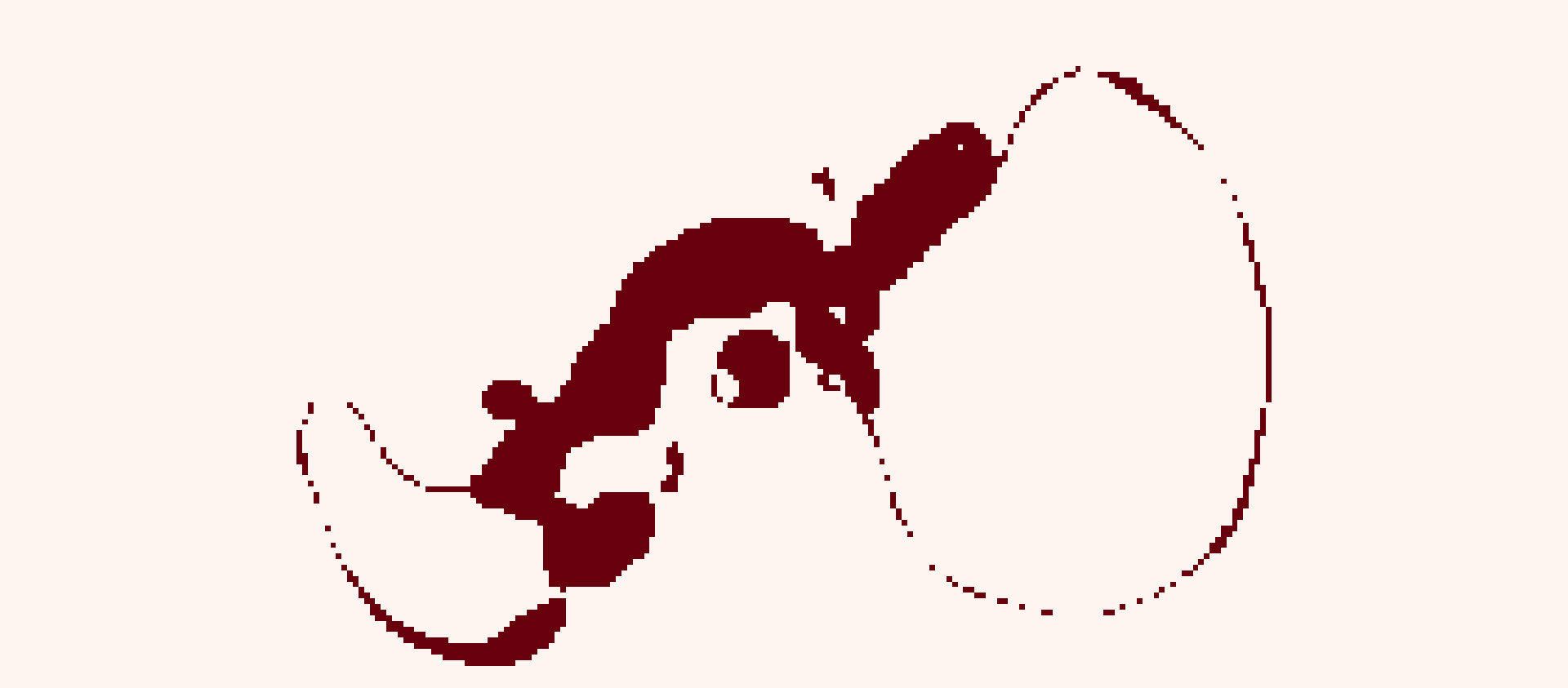} &
&
\includegraphics[width=0.185\textwidth]{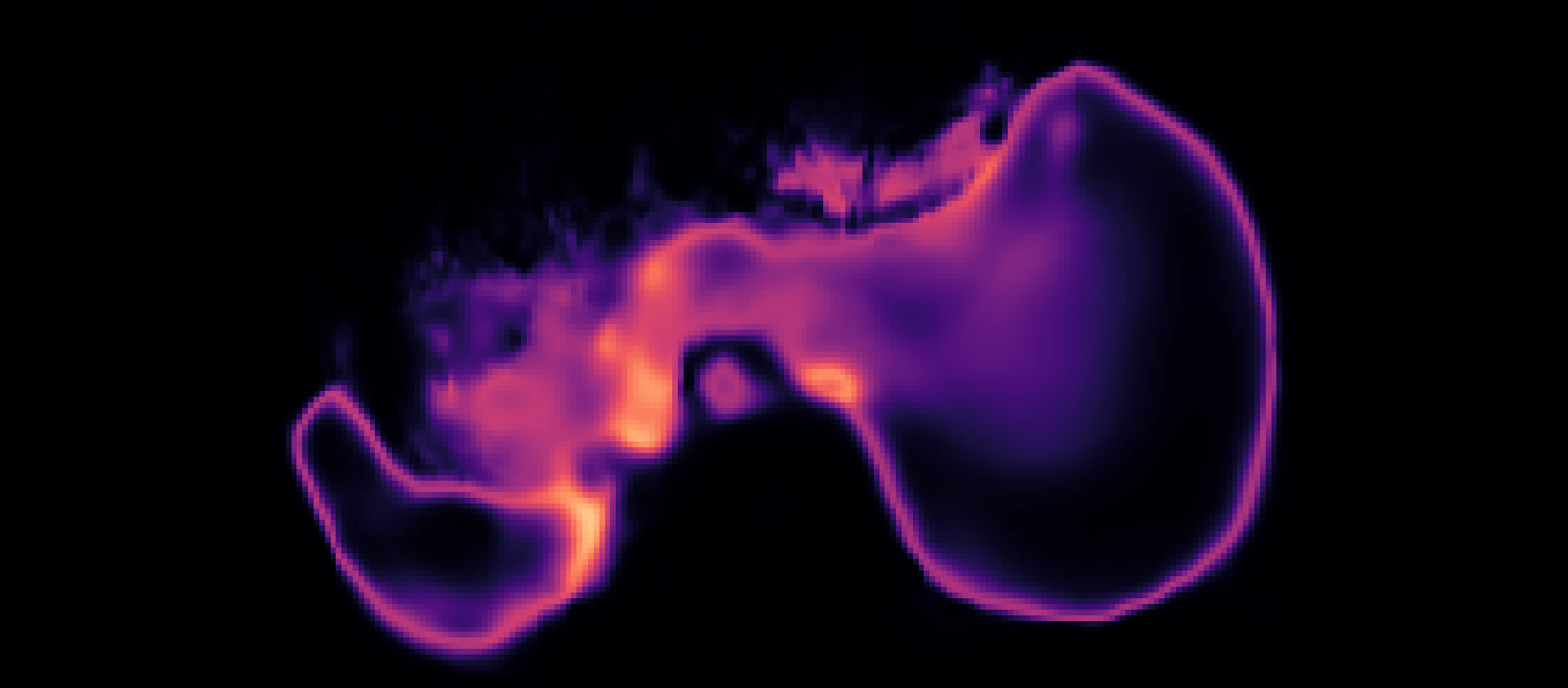} &
\includegraphics[width=0.185\textwidth]{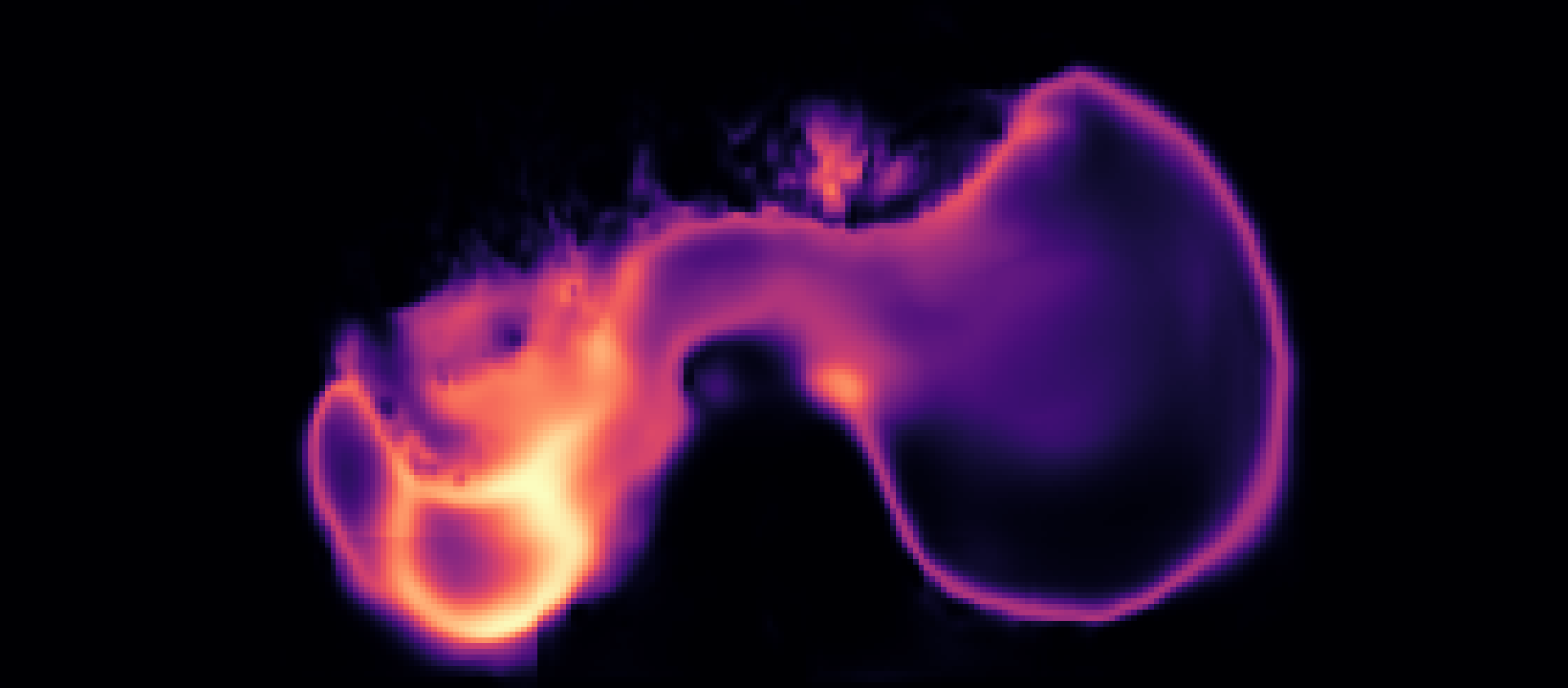} &
\includegraphics[width=0.185\textwidth]{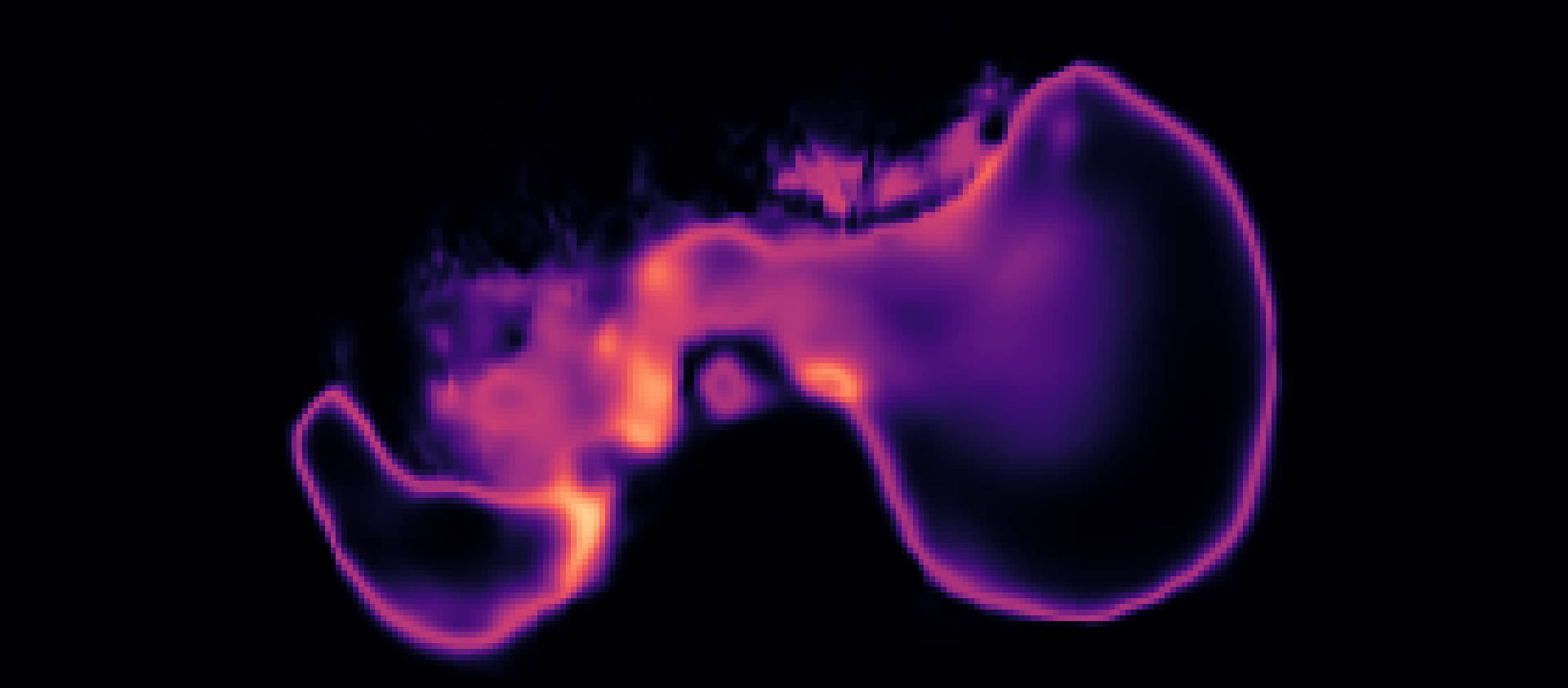}
\end{tabular}
\caption{Additional qualitative AMOS22 example using SegResNet, case 4, slice 124.}
\label{fig:supp_amos22_case4_z124}
\end{figure}

\begin{figure}[tb!]
\centering
\setlength{\tabcolsep}{1.5pt}
\begin{tabular}{ccccc}
\scriptsize Input & \scriptsize Ground truth & \scriptsize FP32 mask & \scriptsize INT4 mask & \scriptsize SGAV-MP mask \\
\includegraphics[width=0.185\textwidth]{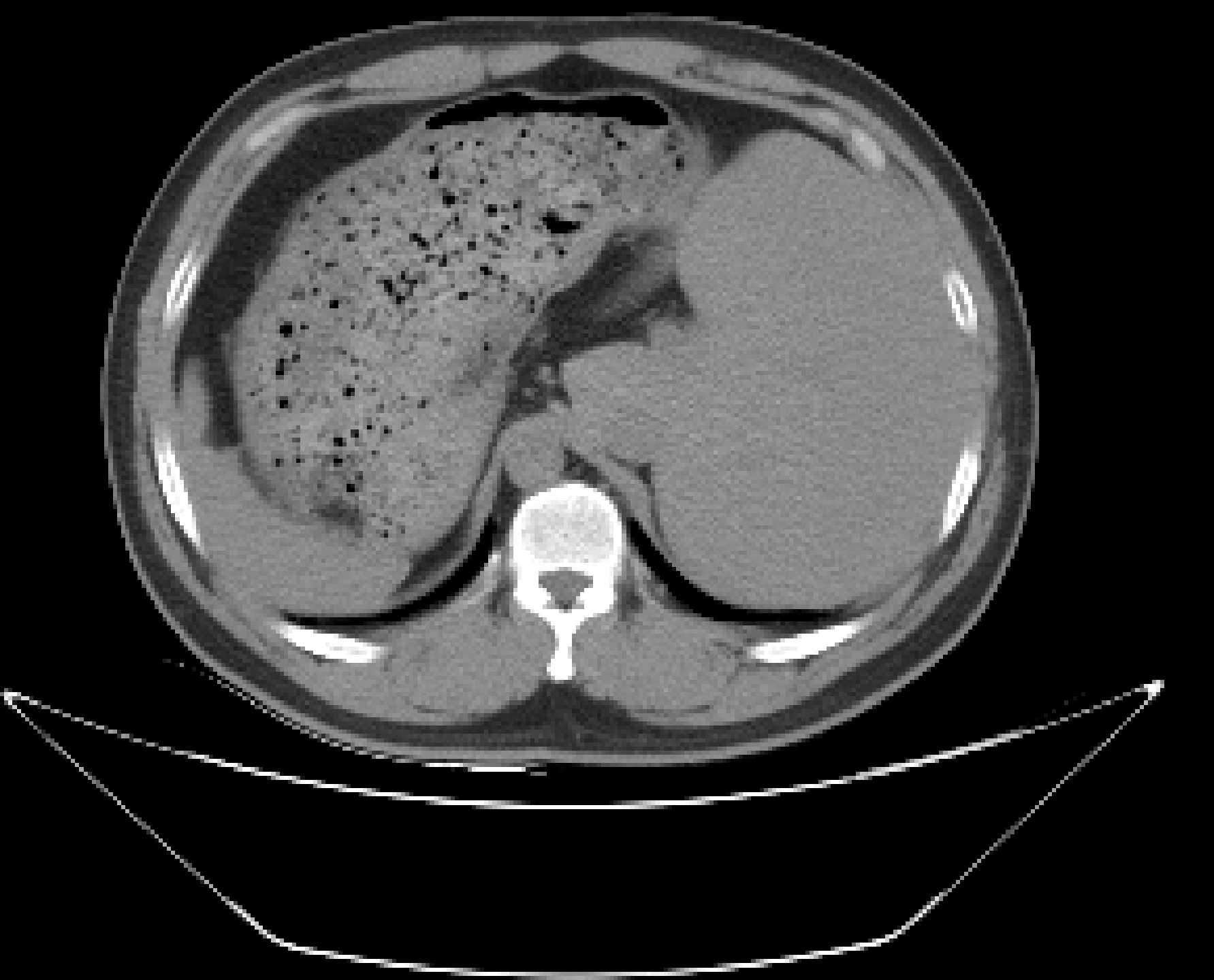} &
\includegraphics[width=0.185\textwidth]{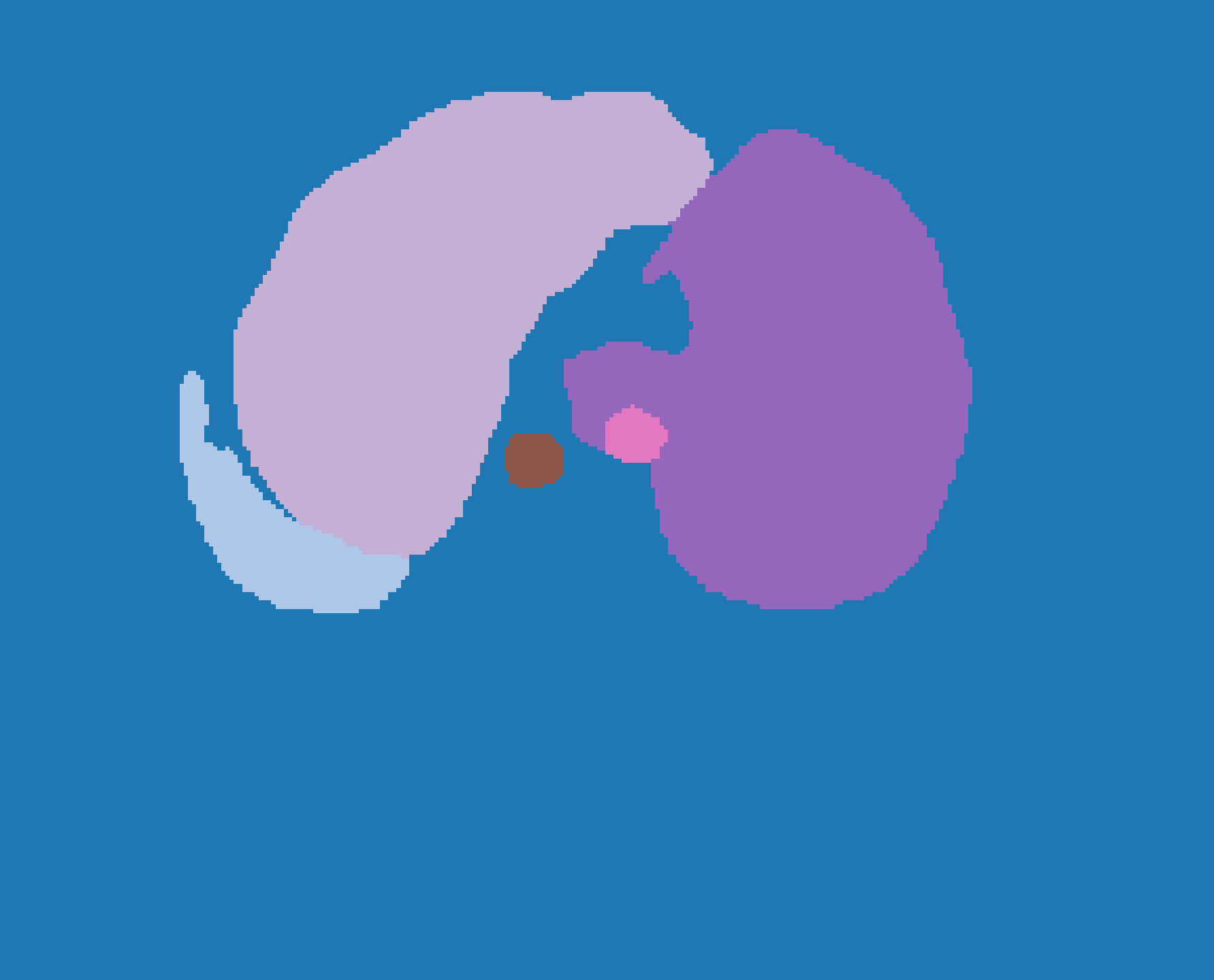} &
\includegraphics[width=0.185\textwidth]{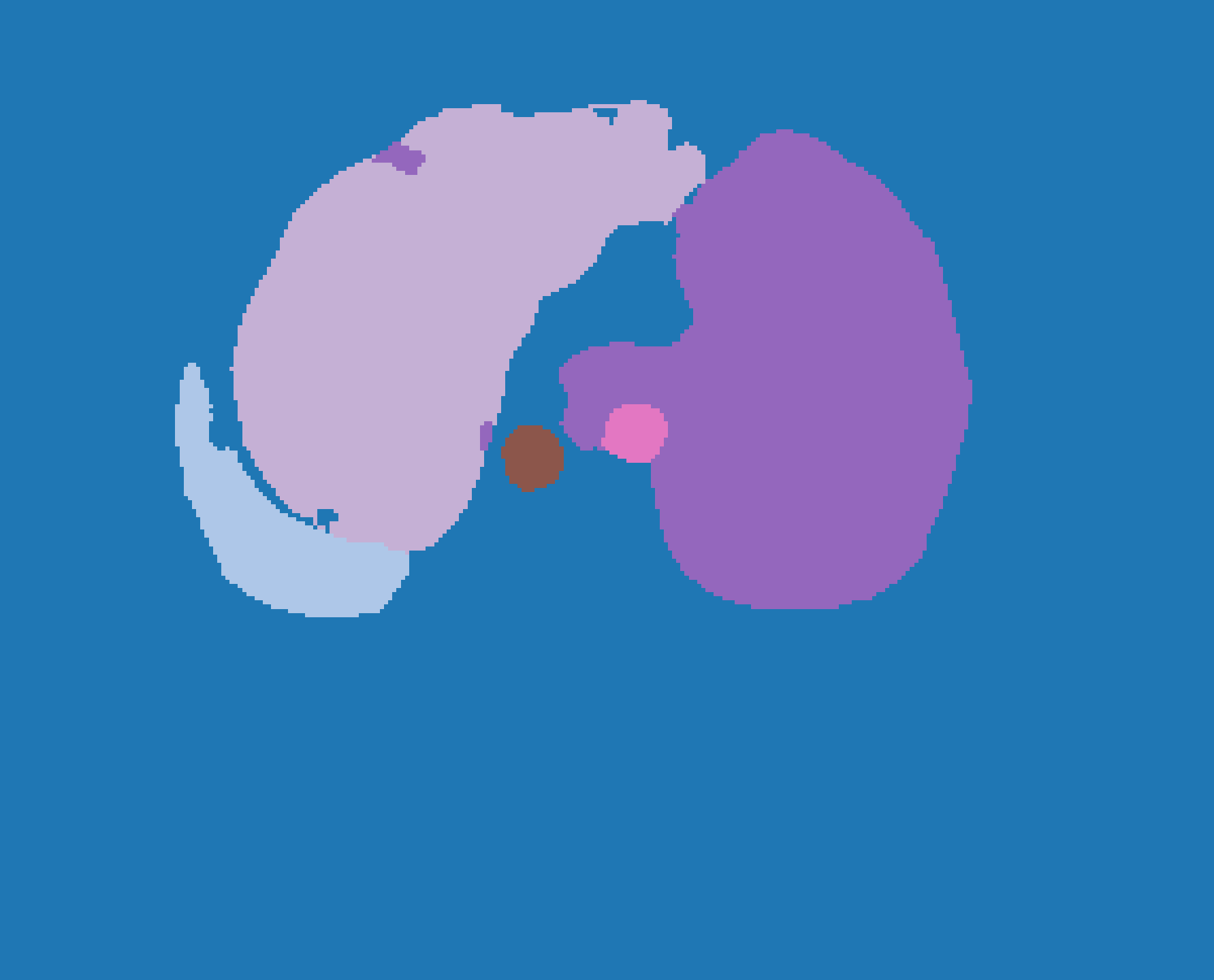} &
\includegraphics[width=0.185\textwidth]{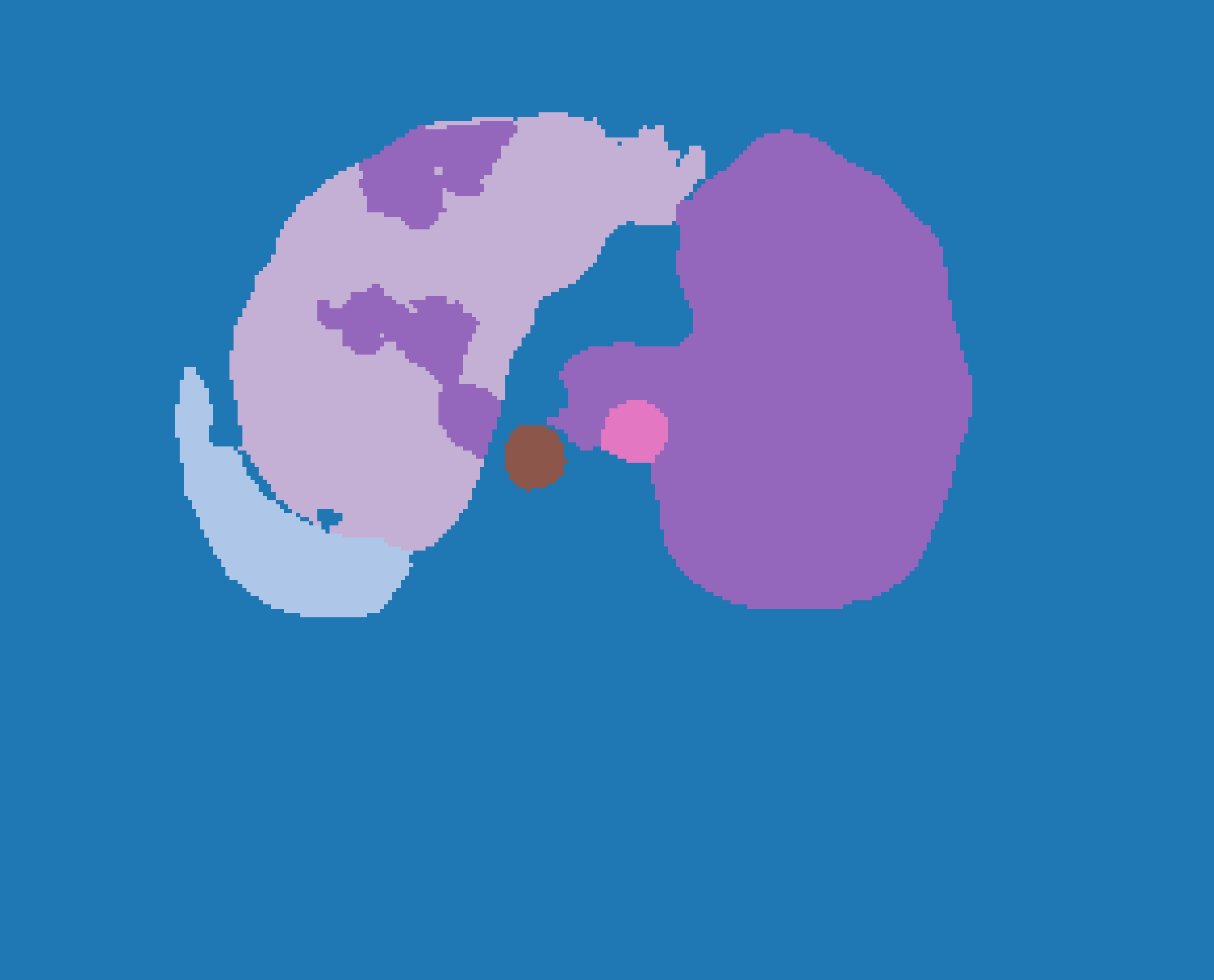} &
\includegraphics[width=0.185\textwidth]{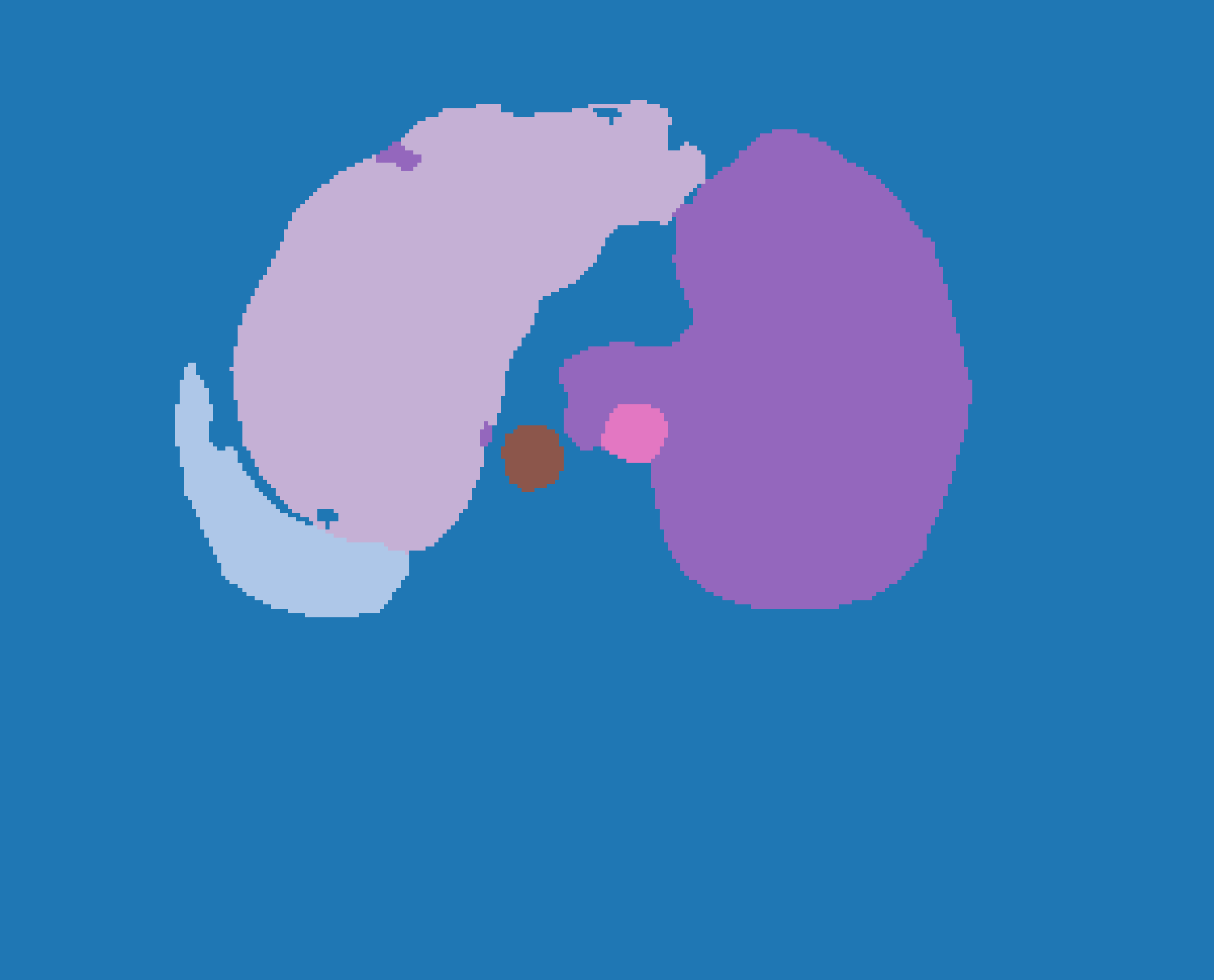} \\
\\[-0.45em]
\scriptsize SGAV-MP error & & \scriptsize FP32 uncertainty & \scriptsize INT4 uncertainty & \scriptsize SGAV-MP uncertainty \\
\includegraphics[width=0.185\textwidth]{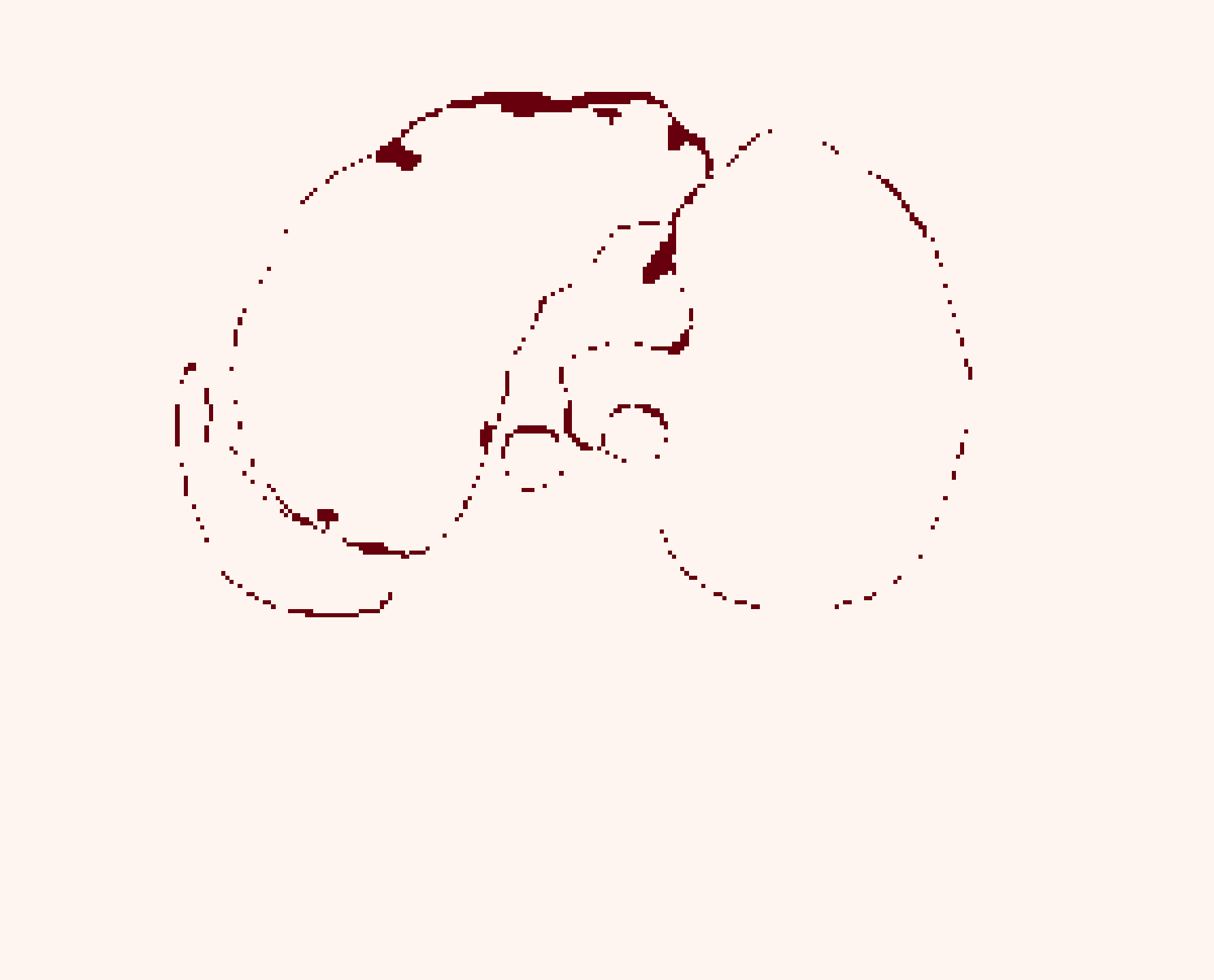} &
&
\includegraphics[width=0.185\textwidth]{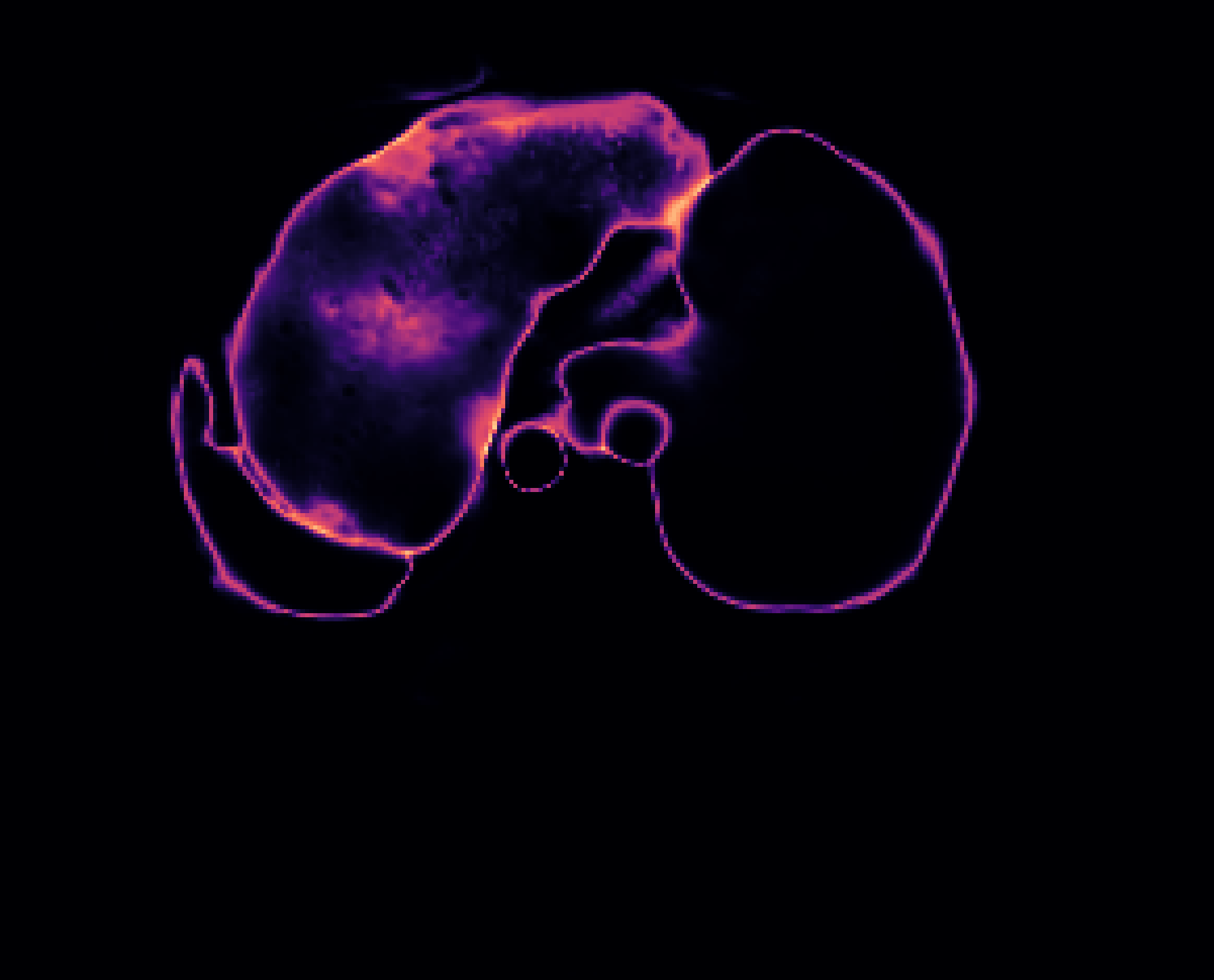} &
\includegraphics[width=0.185\textwidth]{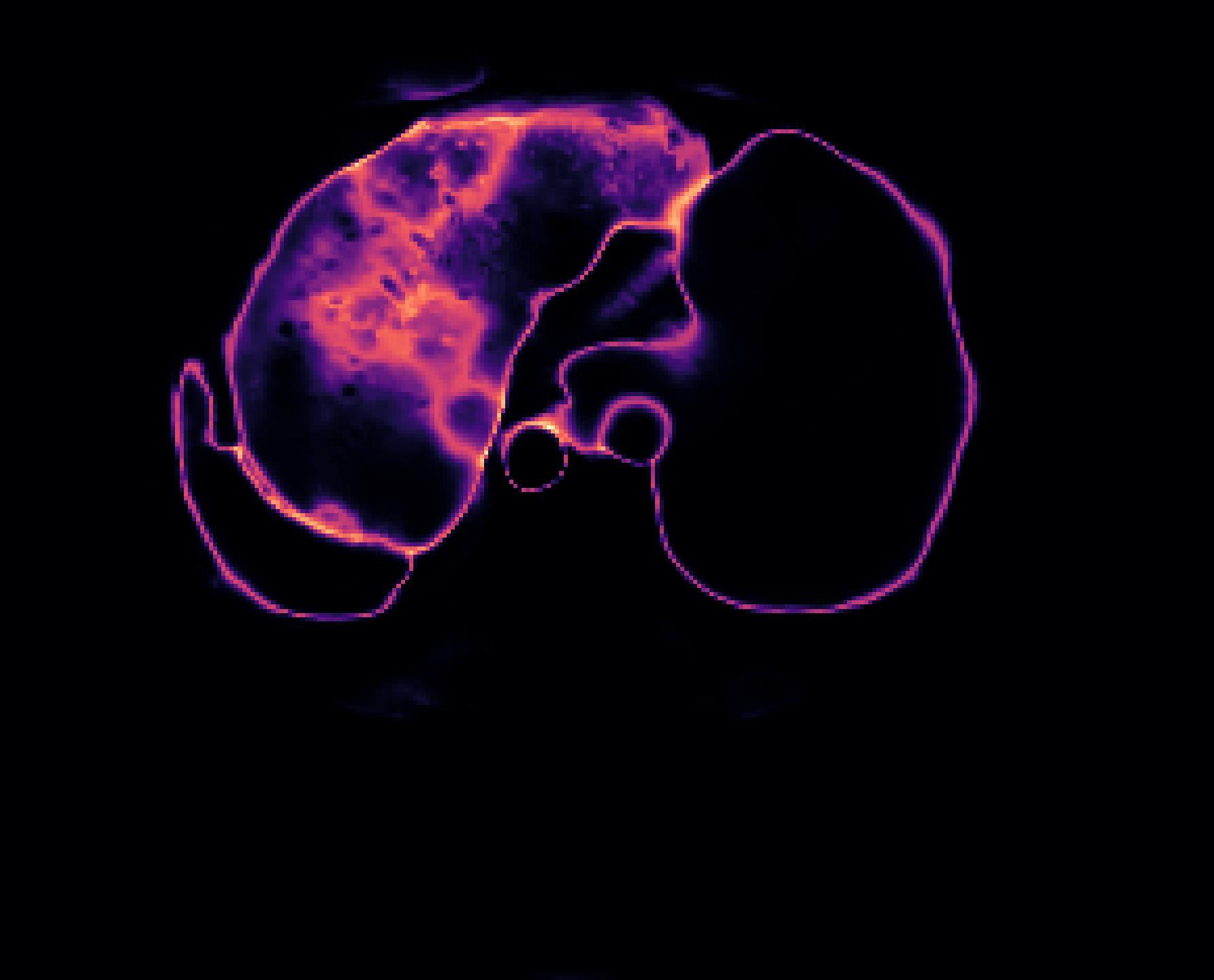} &
\includegraphics[width=0.185\textwidth]{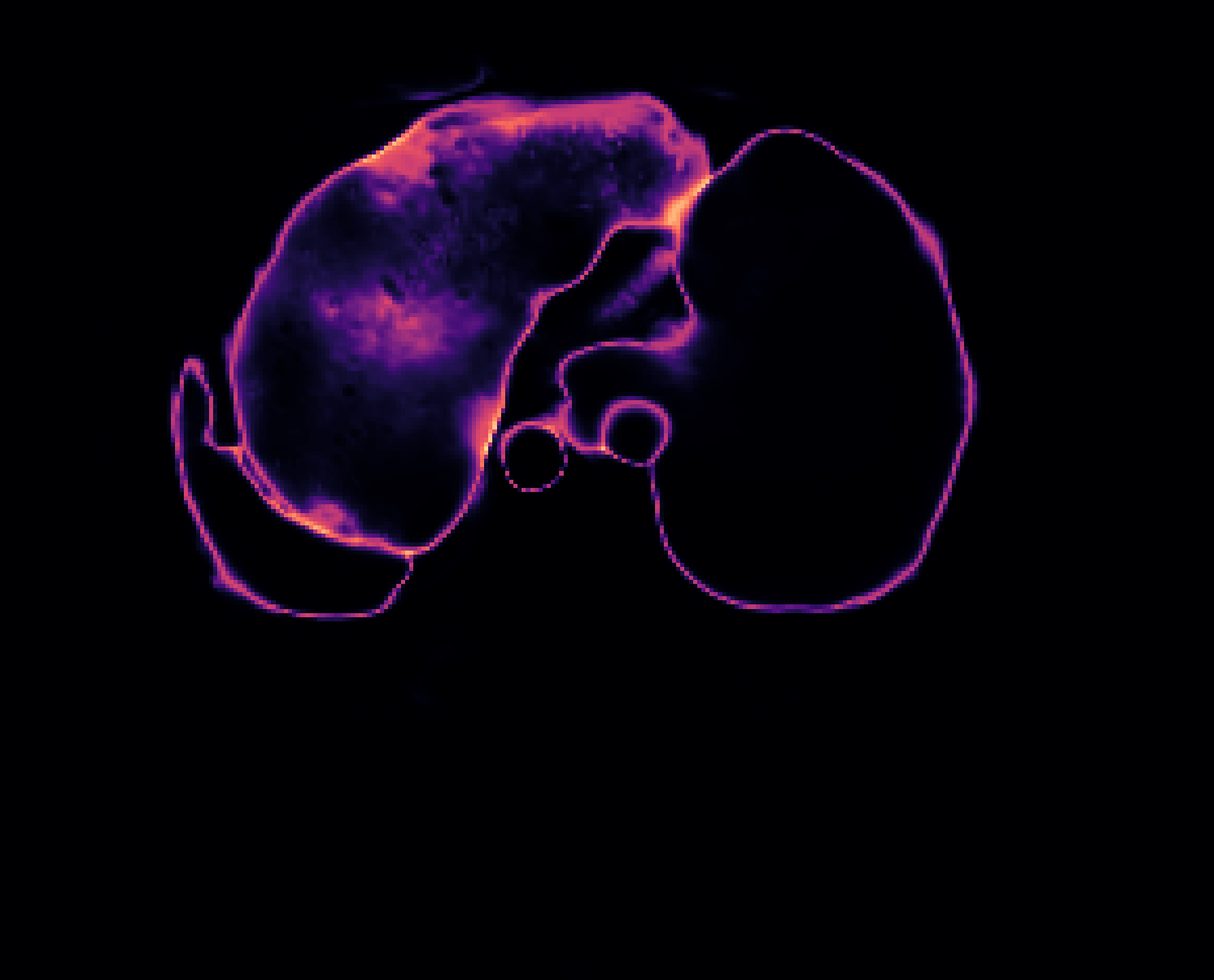}
\end{tabular}
\caption{Additional qualitative AMOS22 example using SegResNet, case 9, slice 217.}
\label{fig:supp_amos22_case9_z217}
\end{figure}

\clearpage
\section{Additional Qualitative Results}
\label{sec:supp_qualitative_results}
Figures~\hyperref[fig:supp_amos22_case4_z124]{\ref*{fig:supp_amos22_case4_z124}},
\hyperref[fig:supp_amos22_case9_z217]{\ref*{fig:supp_amos22_case9_z217}}, and
\hyperref[fig:supp_amos22_case15_z202]{\ref*{fig:supp_amos22_case15_z202}}
provide additional AMOS22 examples comparing the FP32, INT4, and SGAV-MP segmentation masks together with their corresponding uncertainty maps. Across these examples, the SGAV-MP masks remain visually close to the FP32 masks, while the INT4 masks exhibit more visible differences in the predicted segmentation regions. In Figure~\hyperref[fig:supp_amos22_case4_z124]{\ref*{fig:supp_amos22_case4_z124}}, the SGAV-MP prediction follows the FP32 prediction closely relative to the displayed ground truth, whereas the INT4 prediction shows a visibly different region. The uncertainty maps also show that SGAV-MP retains a spatial pattern closer to the FP32 uncertainty map than the INT4 result in the displayed example. Similar visual behavior is present in Figure~\hyperref[fig:supp_amos22_case9_z217]{\ref*{fig:supp_amos22_case9_z217}} and Figure~\hyperref[fig:supp_amos22_case15_z202]{\ref*{fig:supp_amos22_case15_z202}}, where the SGAV-MP segmentation remains visually similar to FP32 and its uncertainty structure is closer to the FP32 map than the corresponding INT4 map. These examples provide visual evidence consistent with the quantitative differences in uncertainty correlation reported for the compression configurations.

Figures~\hyperref[fig:supp_brats_case3_z76]{\ref*{fig:supp_brats_case3_z76}},
\hyperref[fig:supp_brats_case8_z59]{\ref*{fig:supp_brats_case8_z59}}, and
\hyperref[fig:supp_brats_case10_z82]{\ref*{fig:supp_brats_case10_z82}}
provide additional qualitative examples from BraTS. In Figure~\hyperref[fig:supp_brats_case3_z76]{\ref*{fig:supp_brats_case3_z76}}, the SGAV-MP mask is visually close to the FP32 mask, whereas the INT4 mask shows a more noticeable difference in the segmented region. The corresponding INT4 uncertainty map also exhibits a visibly different spatial pattern from the FP32 and SGAV-MP maps. Figure~\hyperref[fig:supp_brats_case8_z59]{\ref*{fig:supp_brats_case8_z59}} similarly shows close visual correspondence between the FP32 and SGAV-MP masks, while the INT4 prediction differs in portions of the segmented region. In Figure~\hyperref[fig:supp_brats_case10_z82]{\ref*{fig:supp_brats_case10_z82}}, the FP32 and SGAV-MP masks again appear visually similar, while the INT4 uncertainty map contains a different spatial distribution of highlighted regions. Across these displayed cases, the qualitative examples therefore show that the SGAV-MP predictions and uncertainty maps remain visually closer to the FP32 outputs than the INT4 outputs. These observations are qualitative and are intended to complement, rather than replace, the quantitative uncertainty and segmentation measurements reported in Tables~\hyperref[tab:supp_3d_full]{\ref*{tab:supp_3d_full}} and~\hyperref[tab:supp_promise_full]{\ref*{tab:supp_promise_full}}.




\begin{figure}[tb!]
\centering
\setlength{\tabcolsep}{1.5pt}
\begin{tabular}{ccccc}
\scriptsize Input & \scriptsize Ground truth & \scriptsize FP32 mask & \scriptsize INT4 mask & \scriptsize SGAV-MP mask \\
\includegraphics[width=0.185\textwidth]{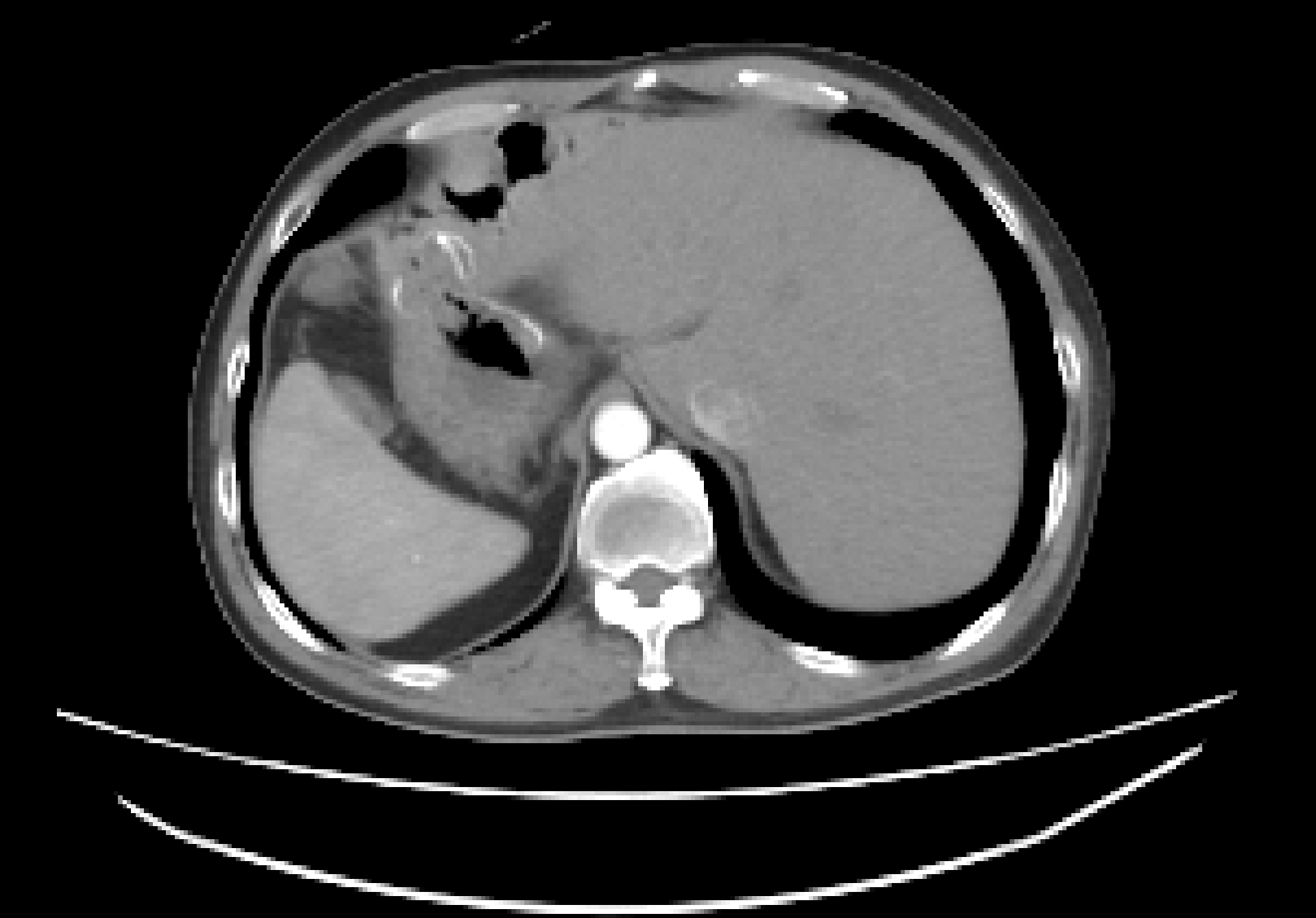} &
\includegraphics[width=0.185\textwidth]{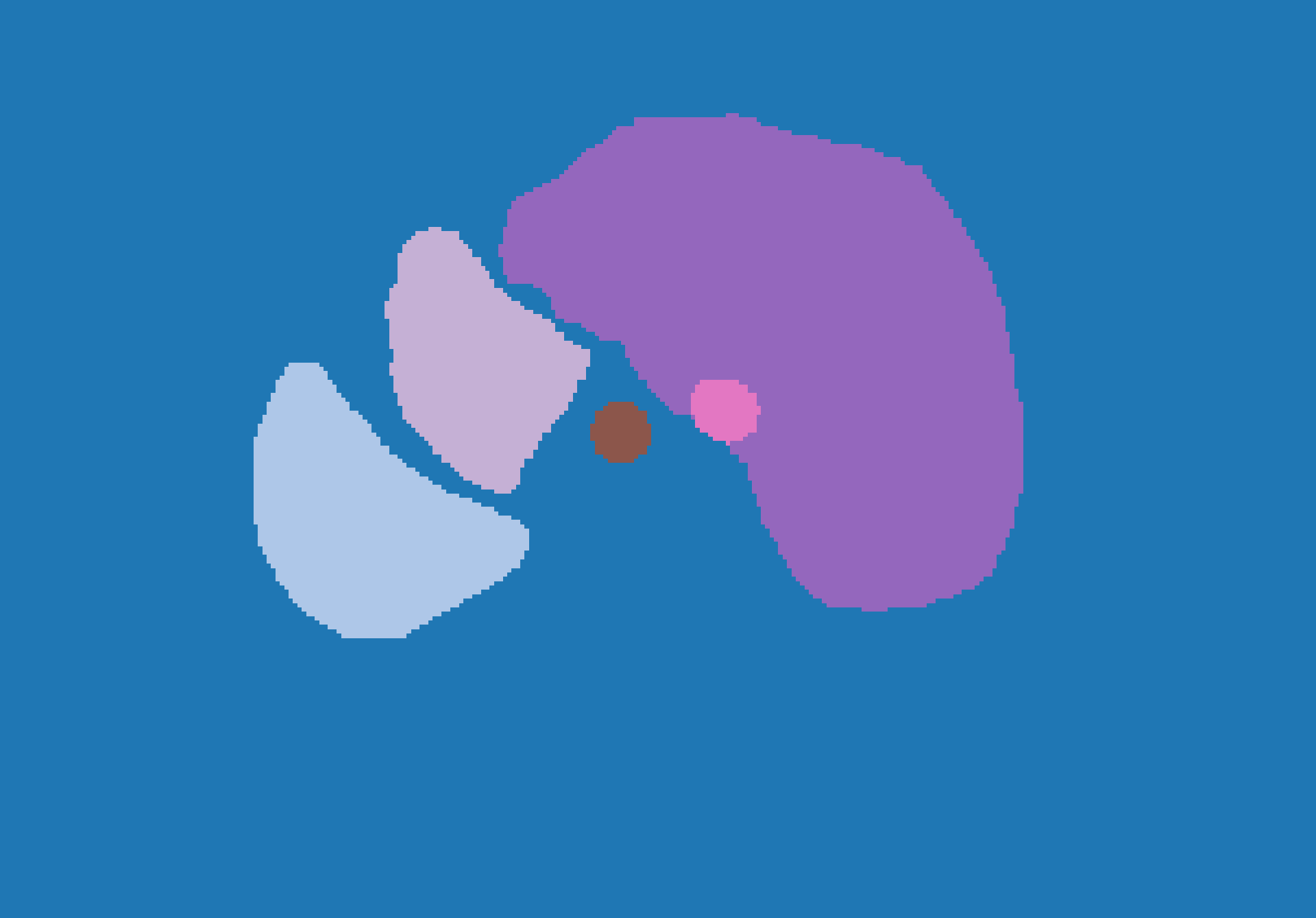} &
\includegraphics[width=0.185\textwidth]{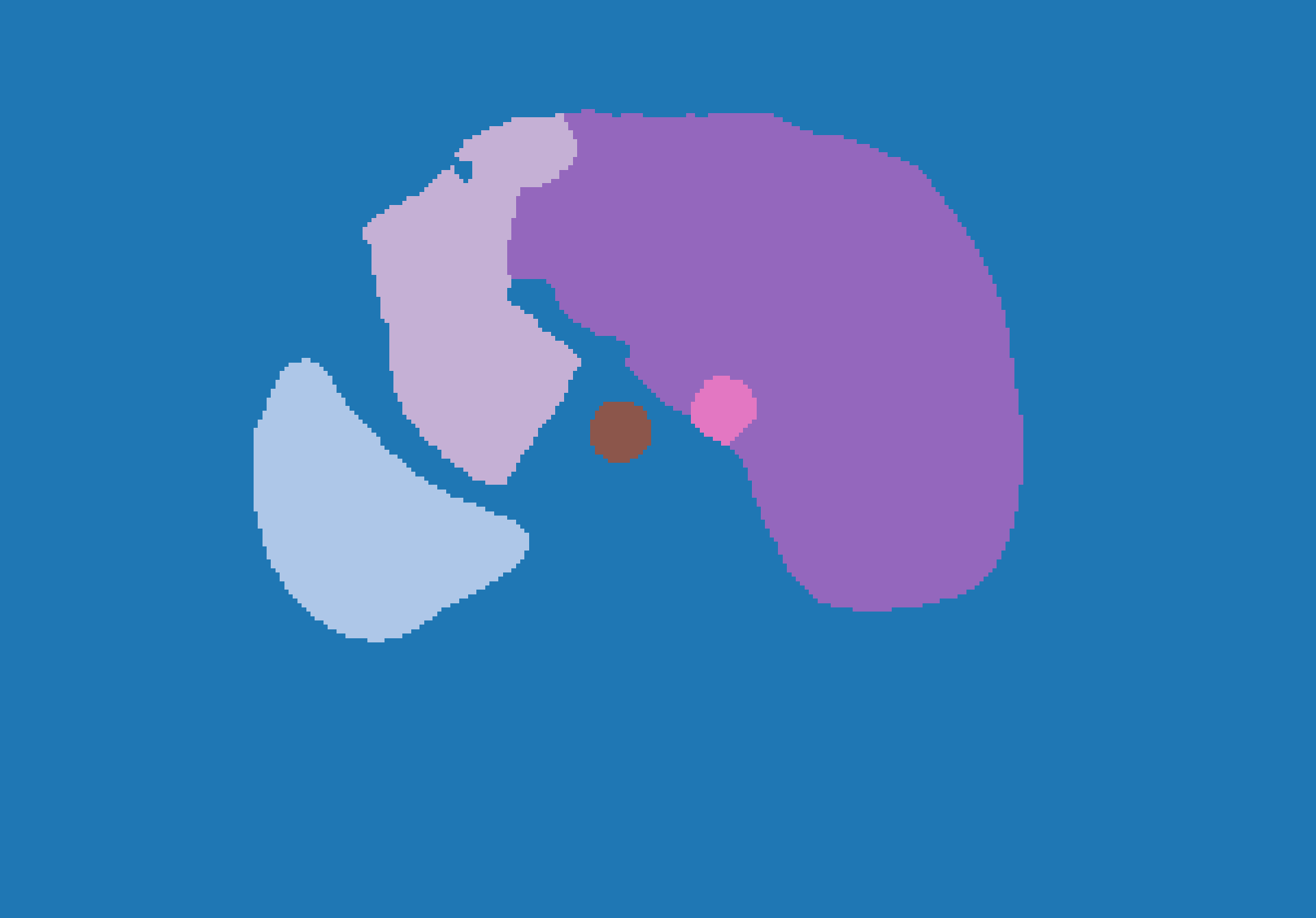} &
\includegraphics[width=0.185\textwidth]{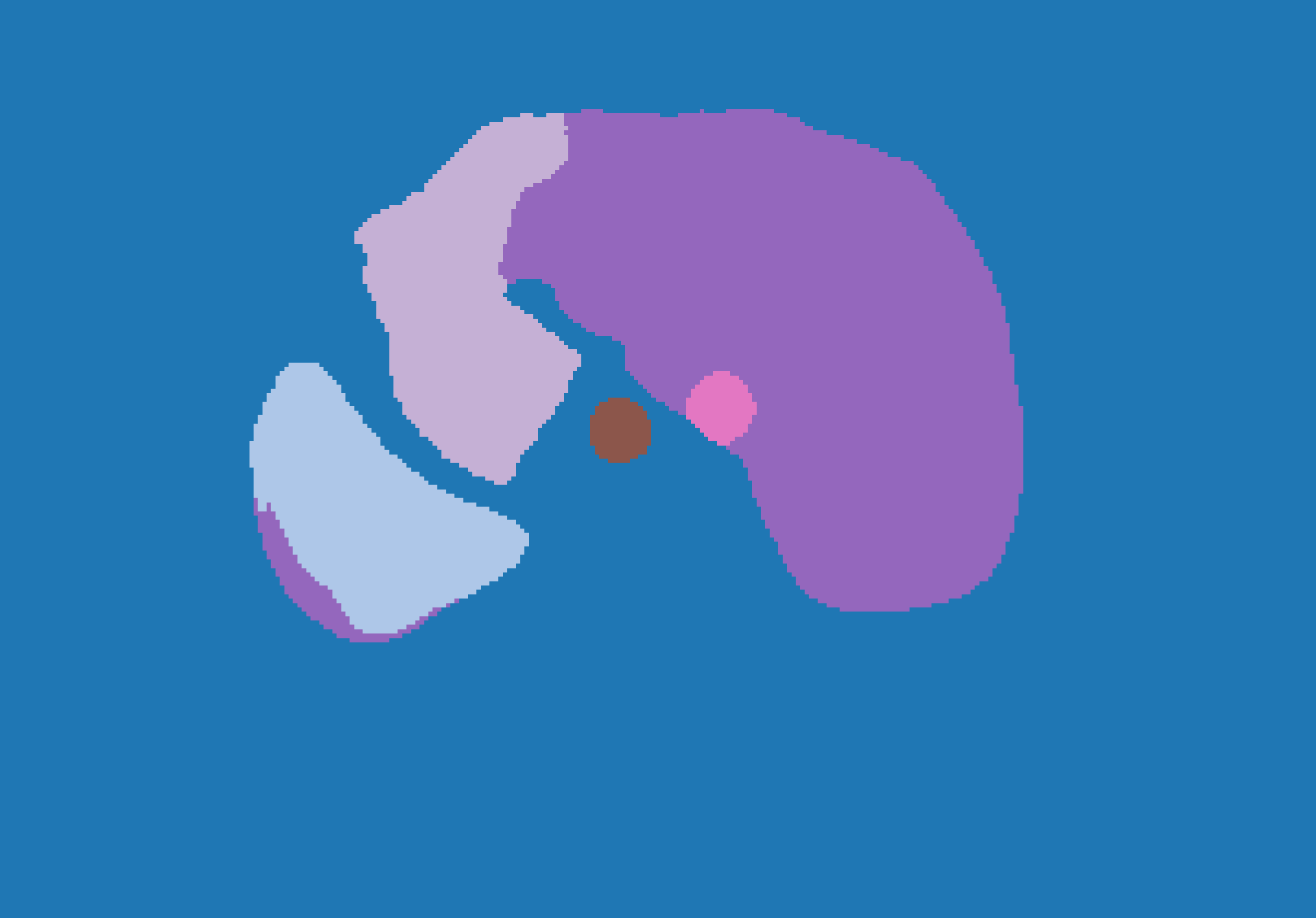} &
\includegraphics[width=0.185\textwidth]{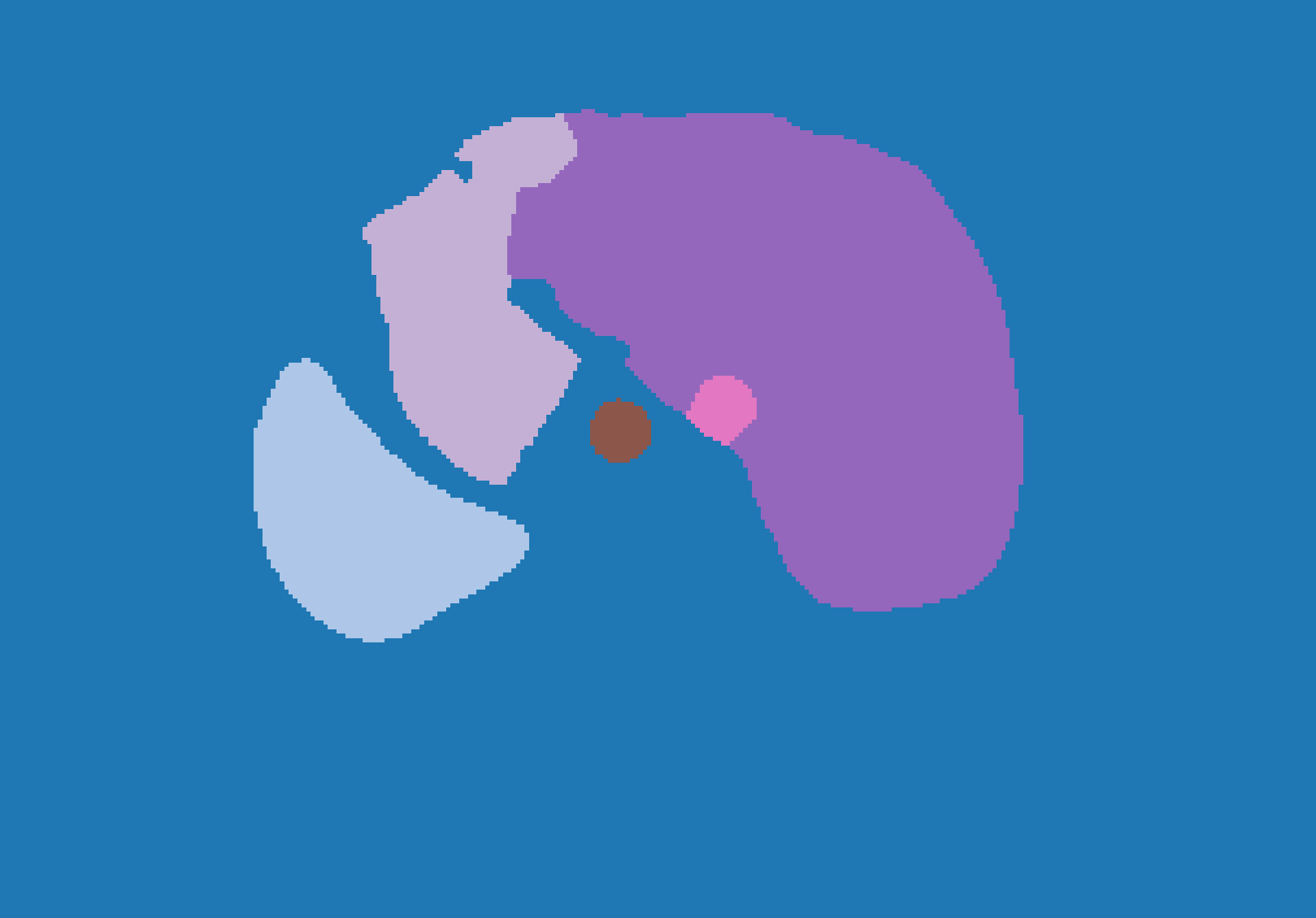} \\
\\[-0.45em]
\scriptsize SGAV-MP error & & \scriptsize FP32 uncertainty & \scriptsize INT4 uncertainty & \scriptsize SGAV-MP uncertainty \\
\includegraphics[width=0.185\textwidth]{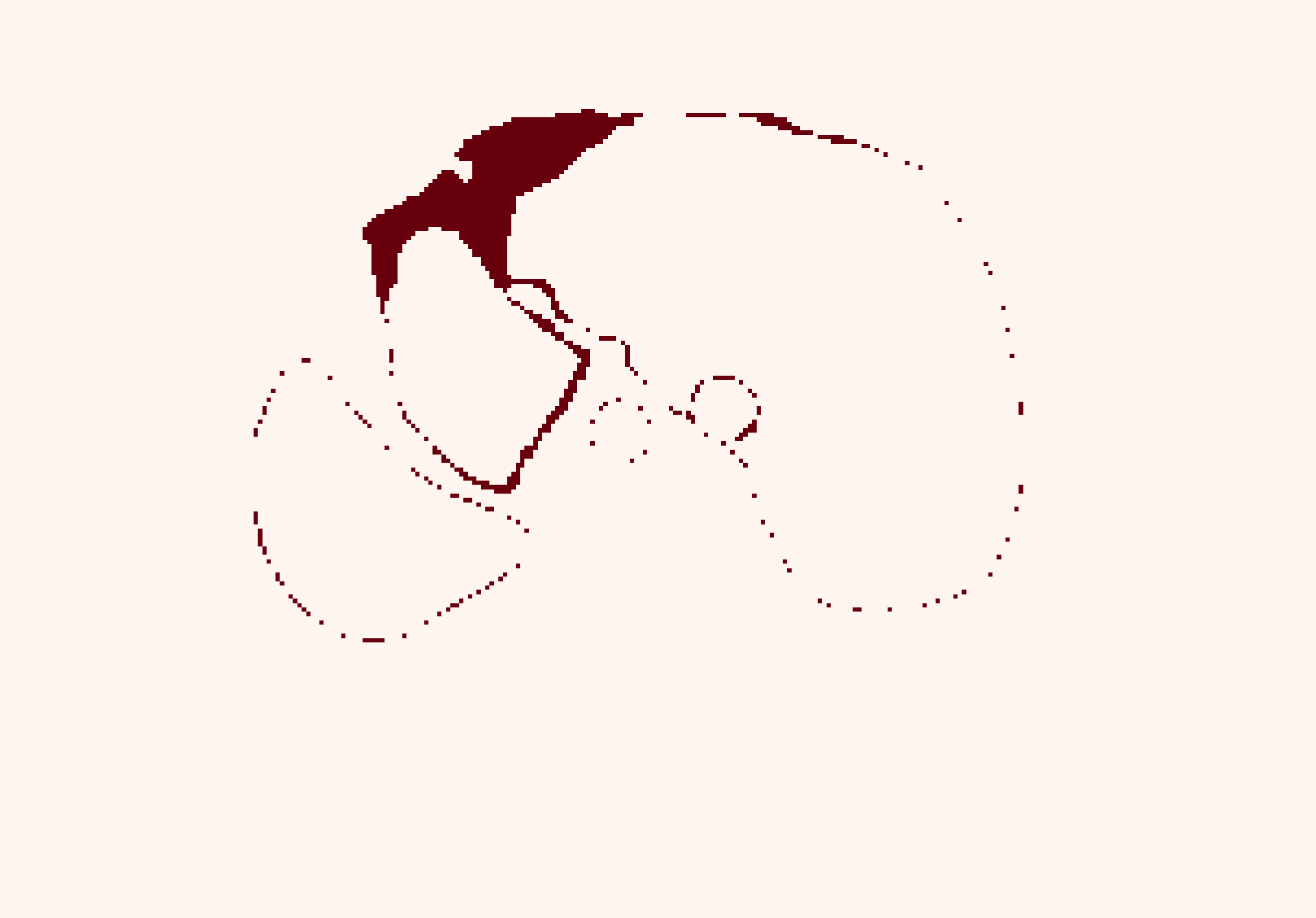} &
&
\includegraphics[width=0.185\textwidth]{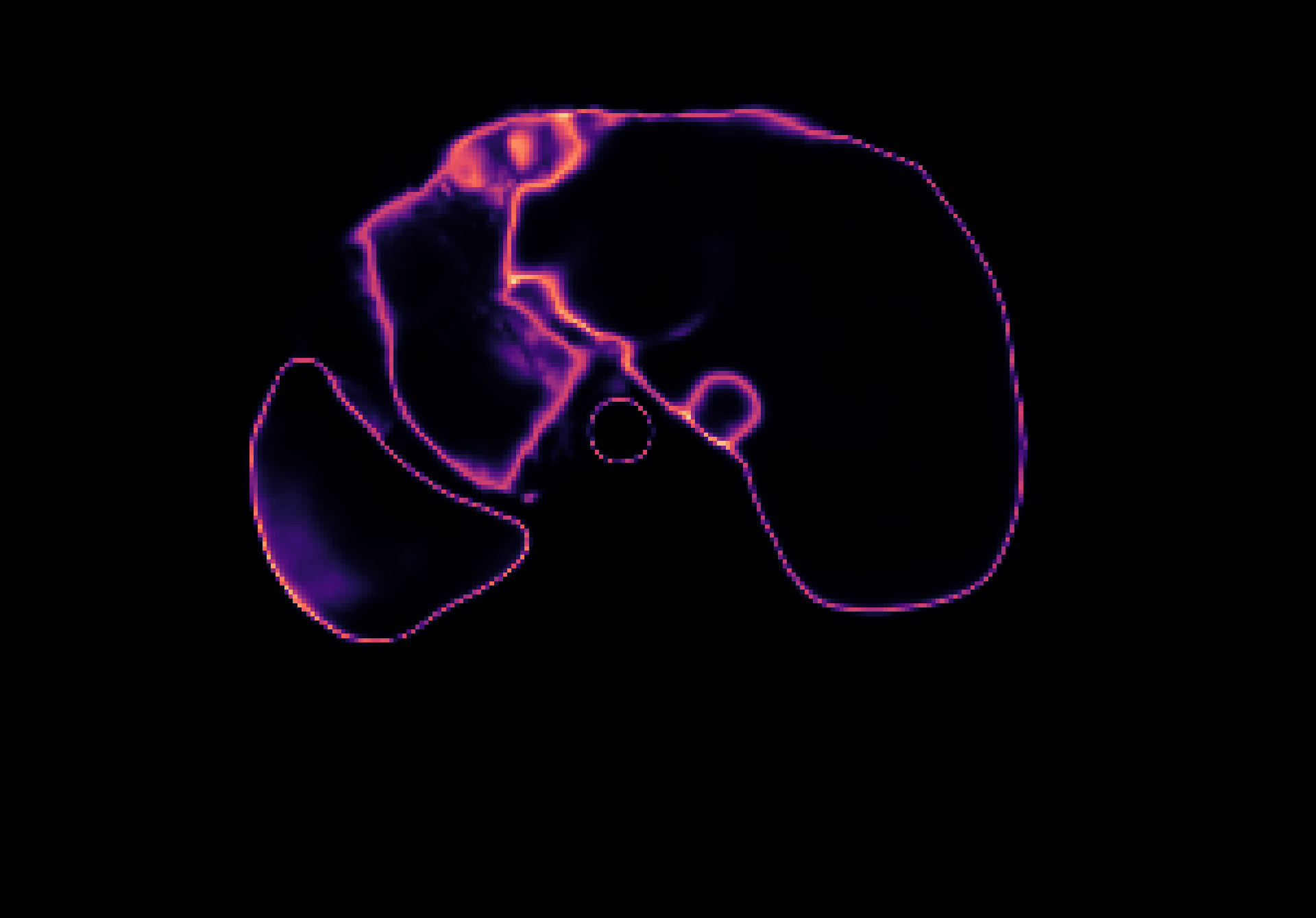} &
\includegraphics[width=0.185\textwidth]{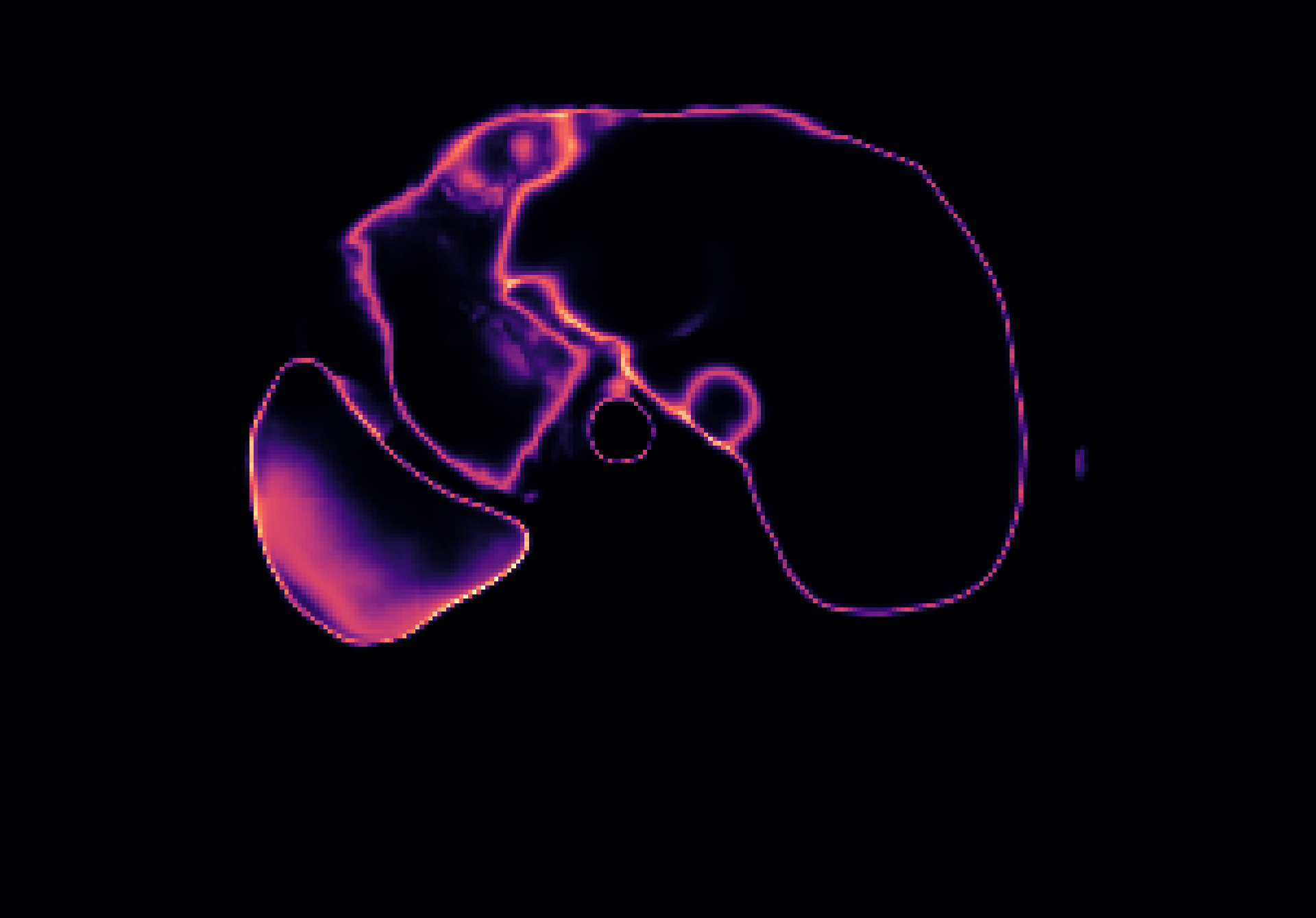} &
\includegraphics[width=0.185\textwidth]{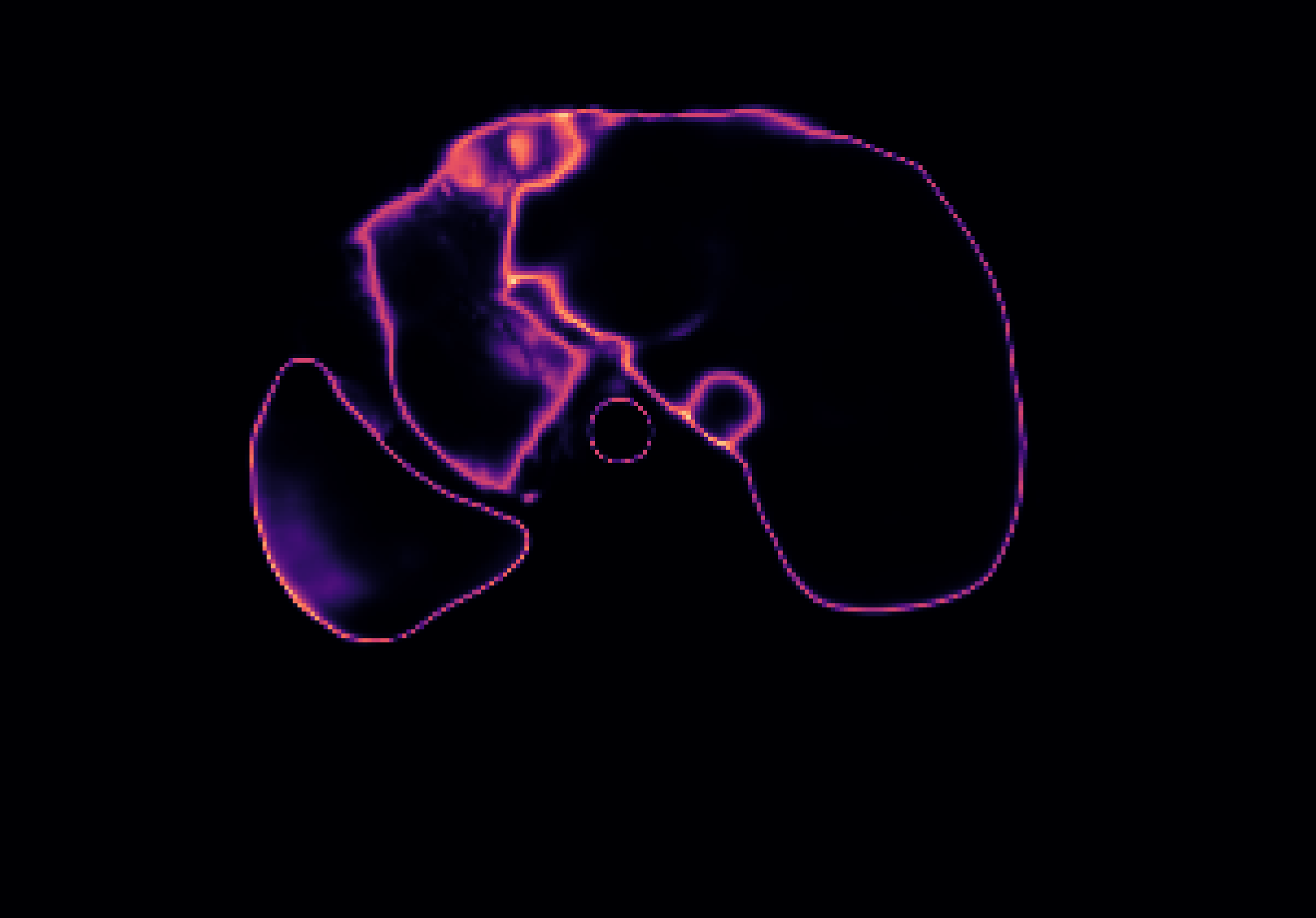}
\end{tabular}
\caption{Additional qualitative AMOS22 example using SegResNet, case 15, slice 202.}
\label{fig:supp_amos22_case15_z202}
\end{figure}

\begin{figure}[tb!]
\centering
\setlength{\tabcolsep}{1.5pt}
\begin{tabular}{ccccc}
\scriptsize Input & \scriptsize Ground truth & \scriptsize FP32 mask & \scriptsize INT4 mask & \scriptsize SGAV-MP mask \\
\includegraphics[width=0.185\textwidth]{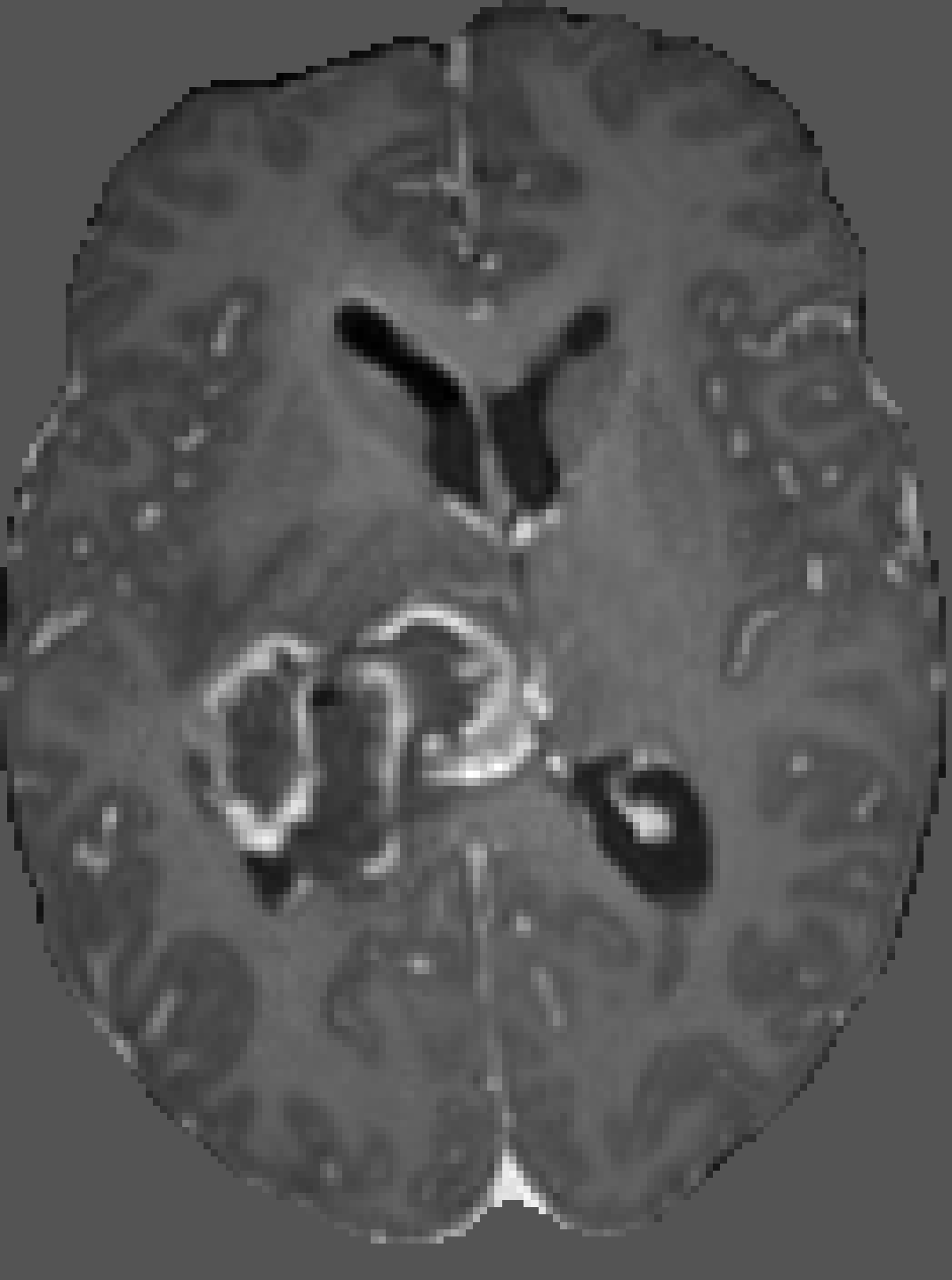} &
\includegraphics[width=0.185\textwidth]{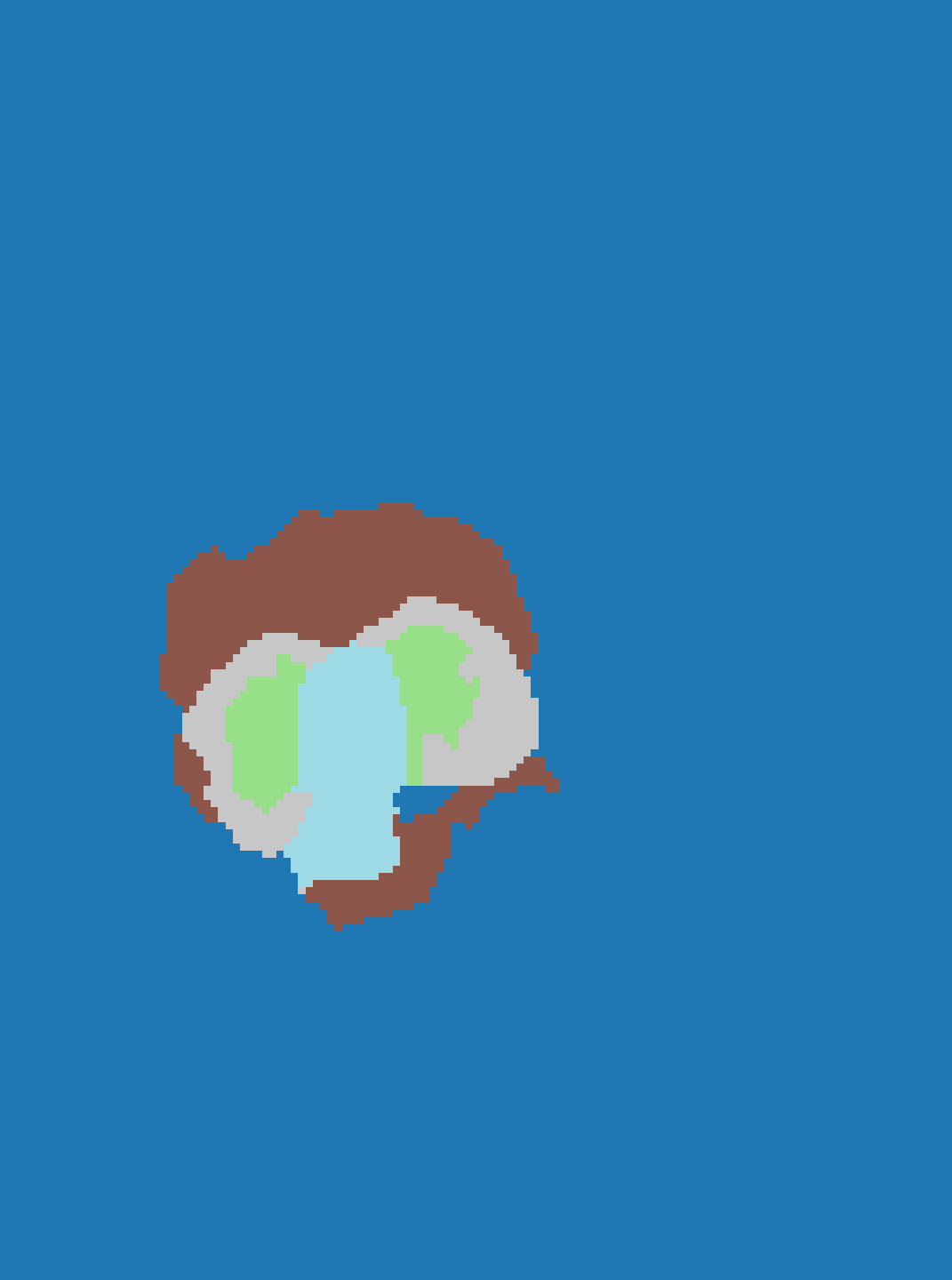} &
\includegraphics[width=0.185\textwidth]{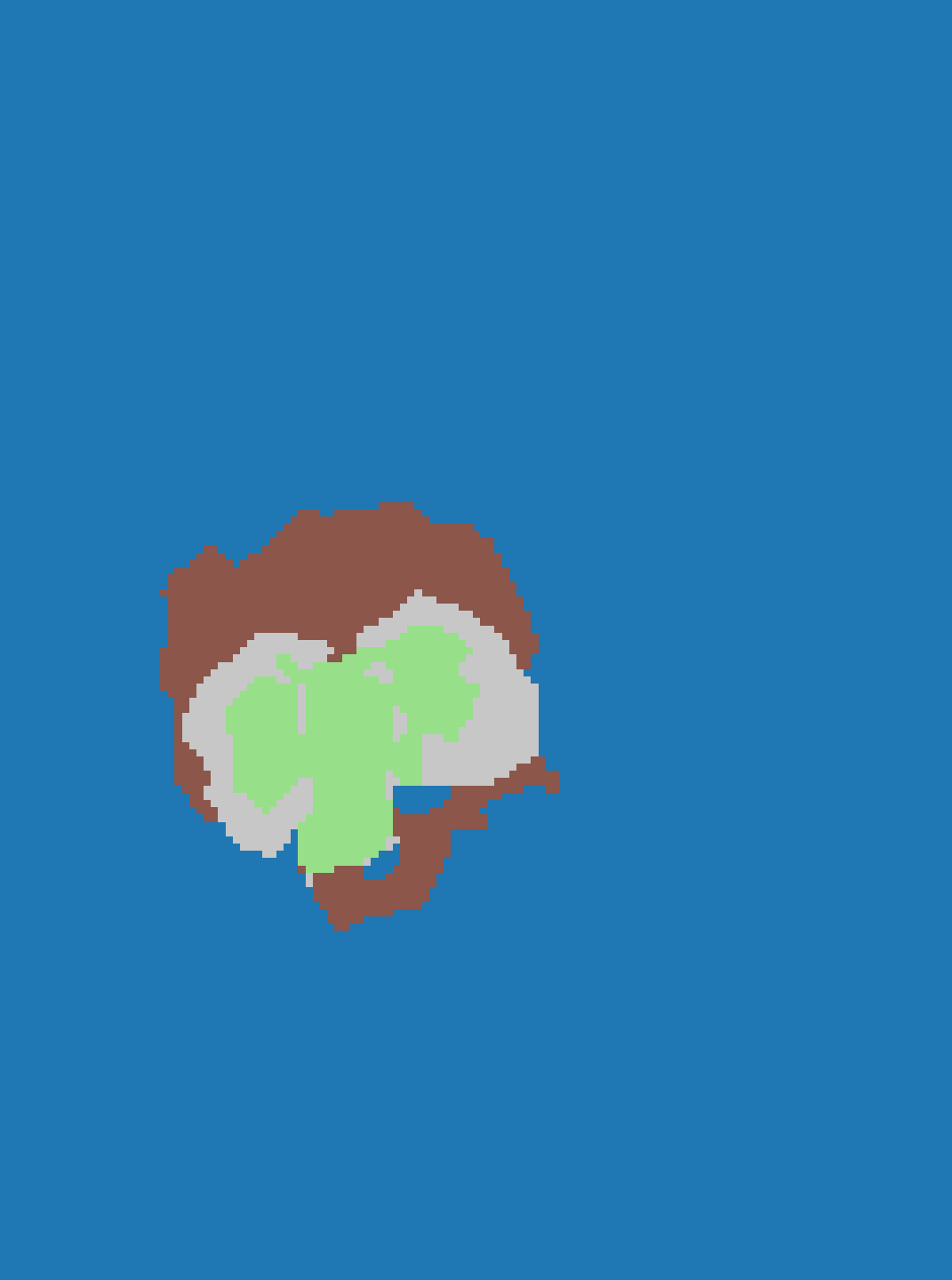} &
\includegraphics[width=0.185\textwidth]{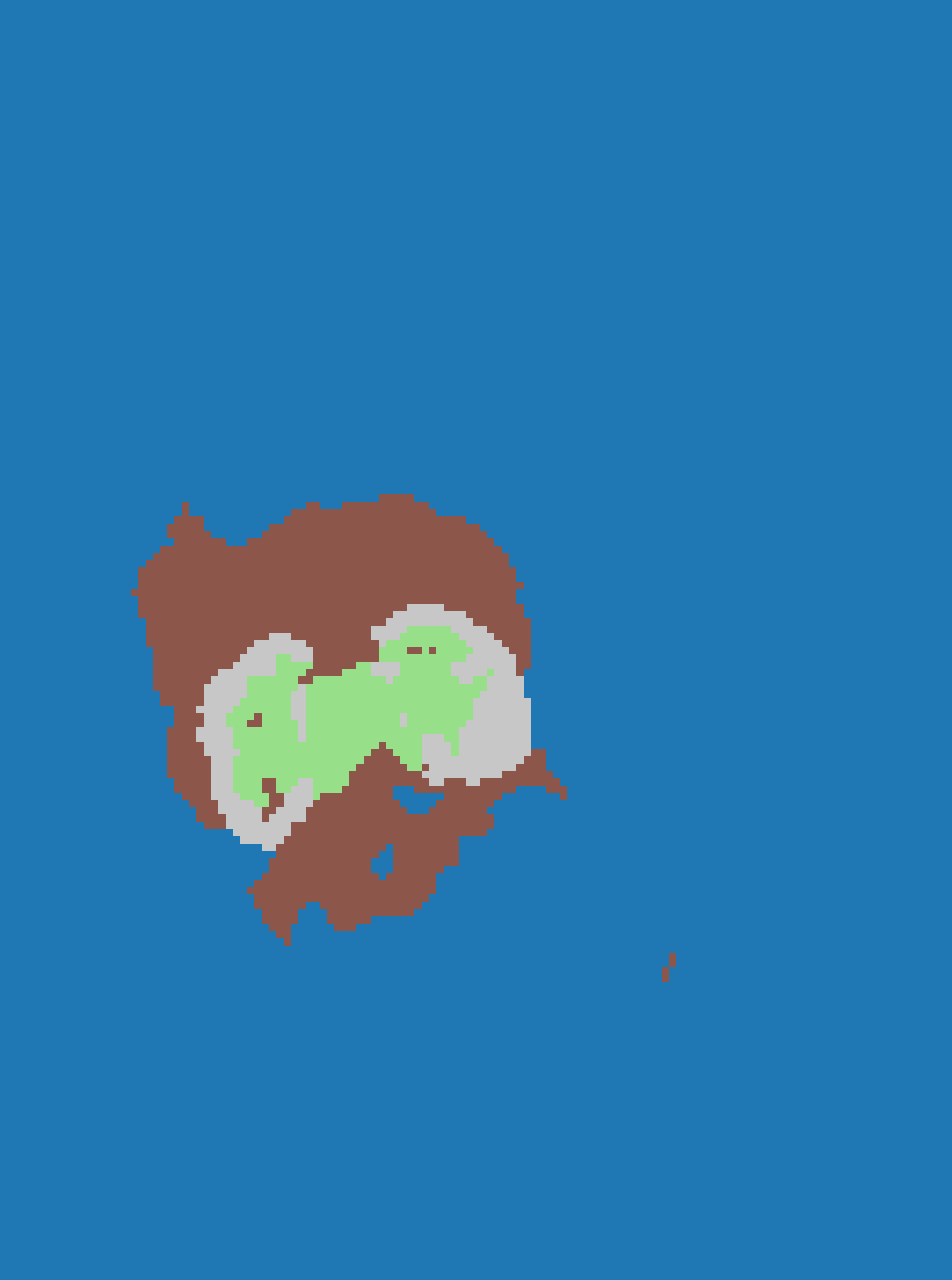} &
\includegraphics[width=0.185\textwidth]{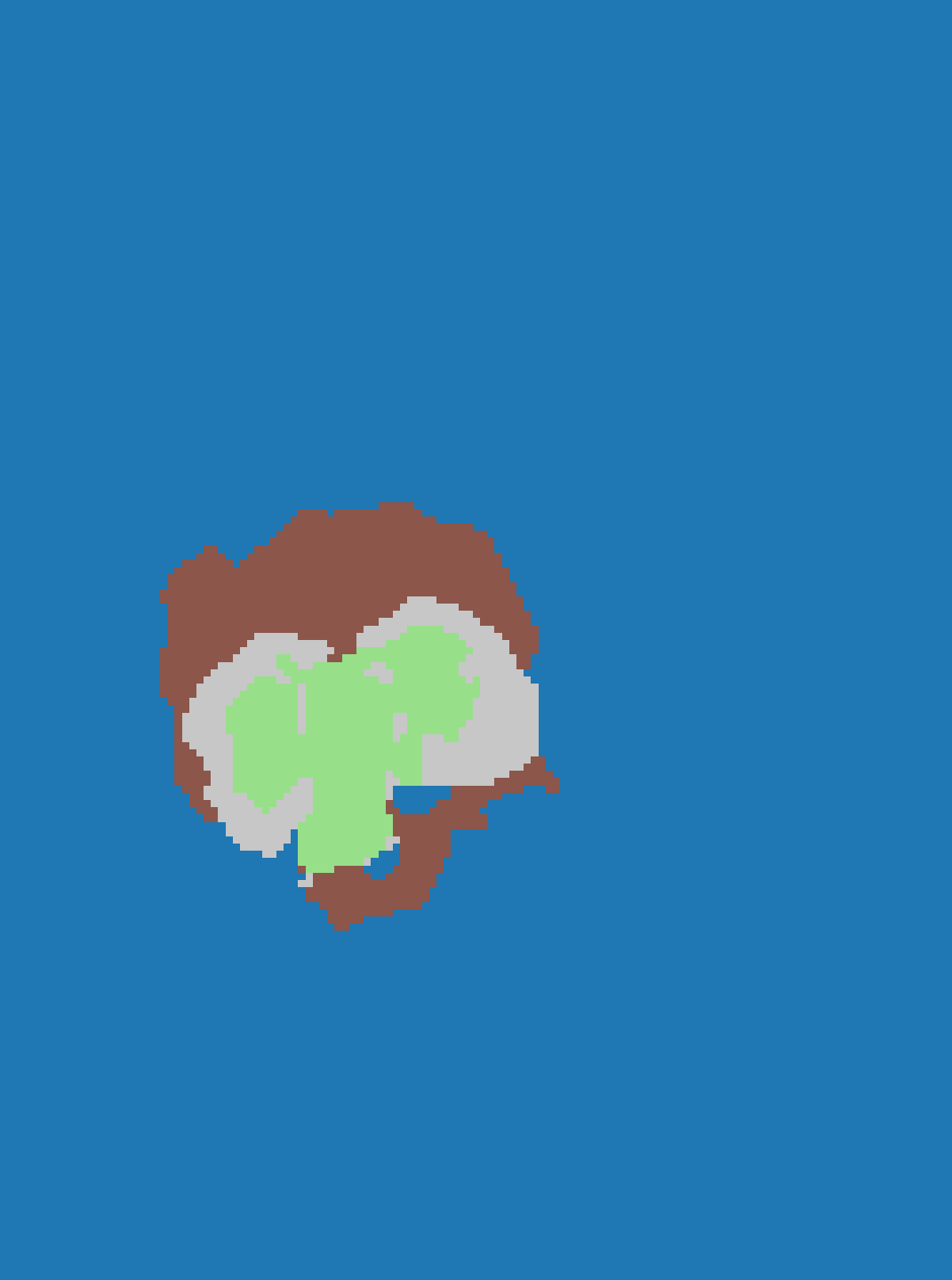} \\
\\[-0.45em]
\scriptsize SGAV-MP error & & \scriptsize FP32 uncertainty & \scriptsize INT4 uncertainty & \scriptsize SGAV-MP uncertainty \\
\includegraphics[width=0.185\textwidth]{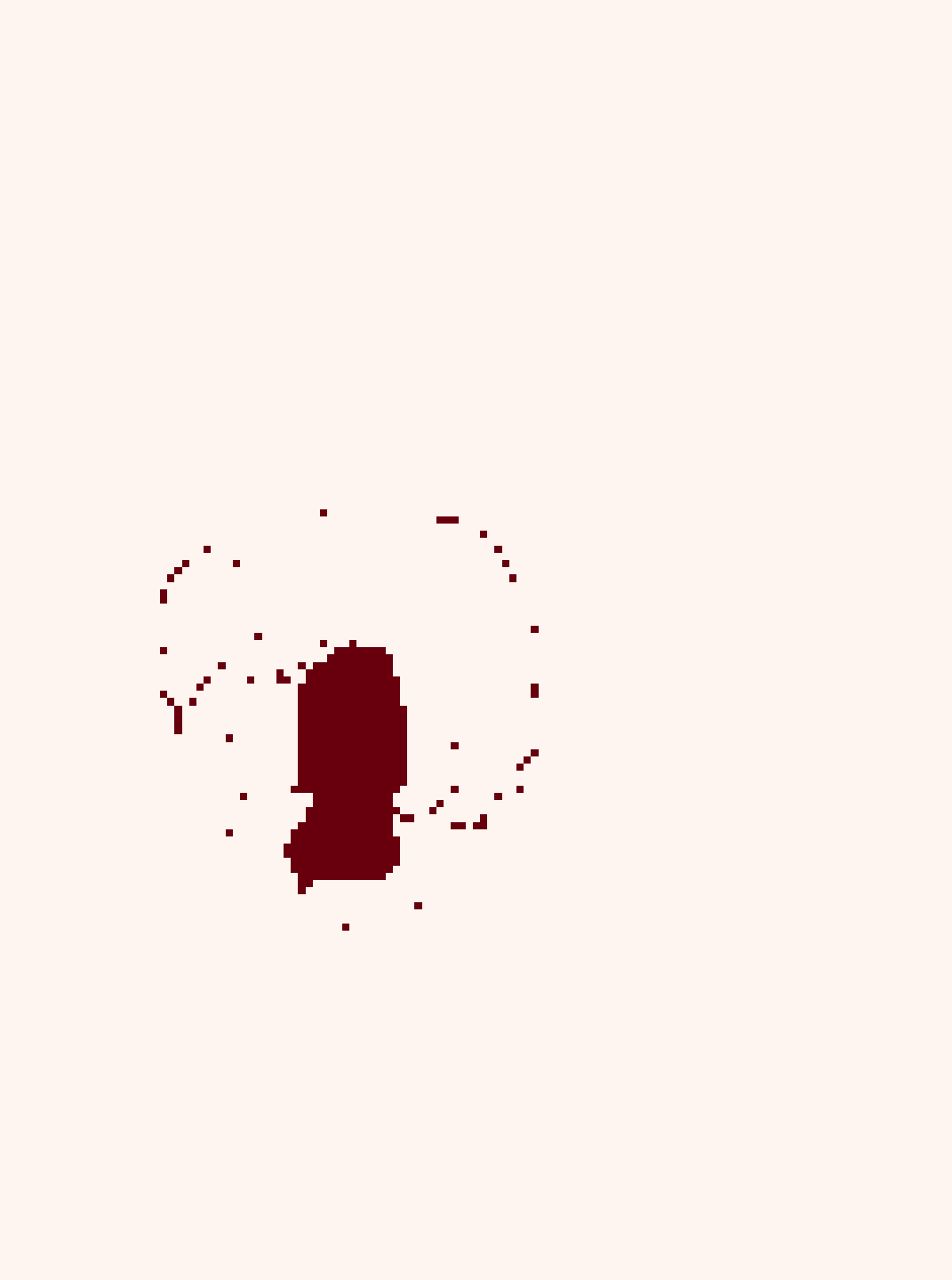} &
&
\includegraphics[width=0.185\textwidth]{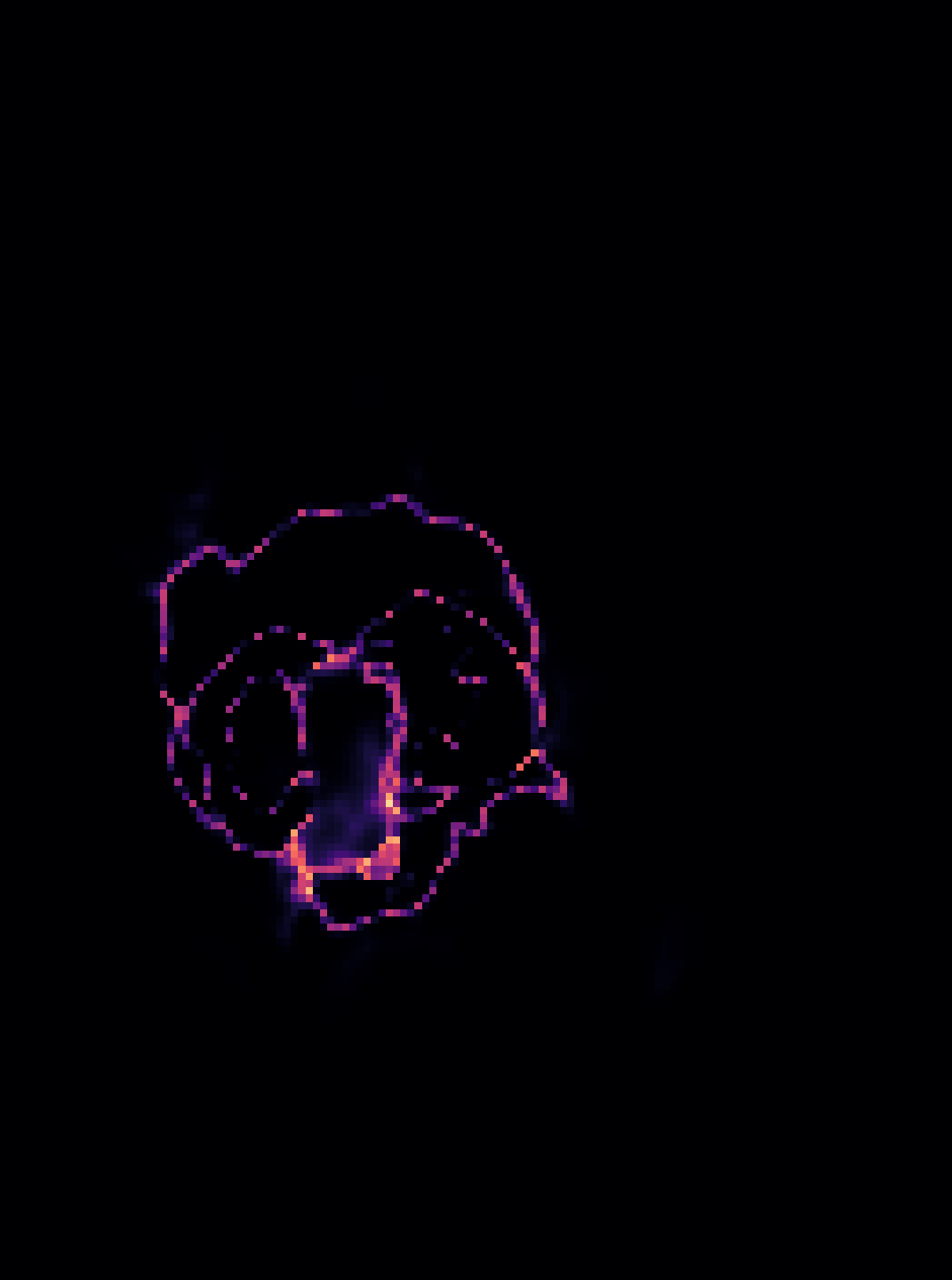} &
\includegraphics[width=0.185\textwidth]{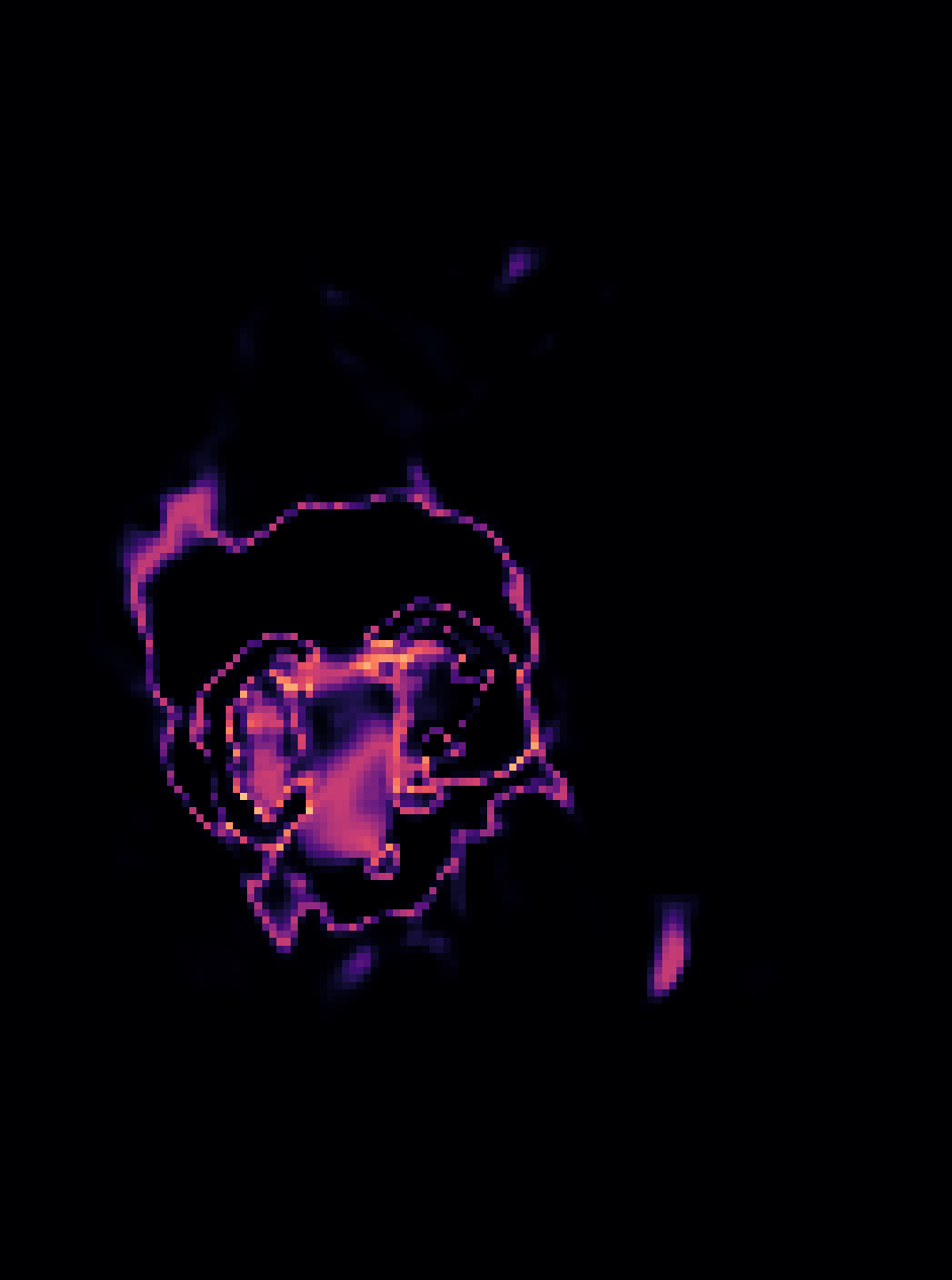} &
\includegraphics[width=0.185\textwidth]{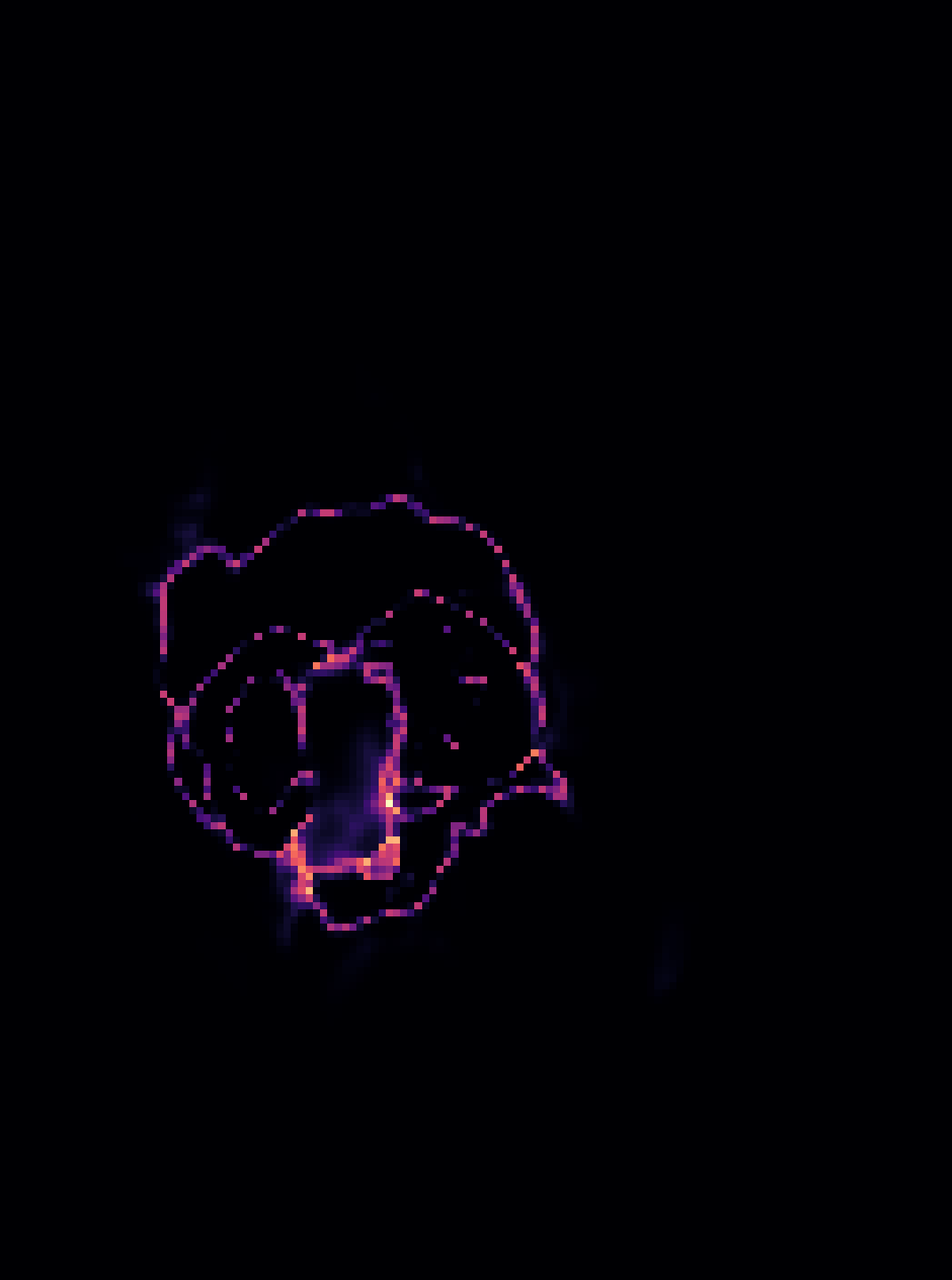}
\end{tabular}
\caption{Additional qualitative BraTS example using SegResNet, case 3, slice 76.}
\label{fig:supp_brats_case3_z76}
\end{figure}

\begin{figure}[tb!]
\centering
\setlength{\tabcolsep}{1.5pt}
\begin{tabular}{ccccc}
\scriptsize Input & \scriptsize Ground truth & \scriptsize FP32 mask & \scriptsize INT4 mask & \scriptsize SGAV-MP mask \\
\includegraphics[width=0.185\textwidth]{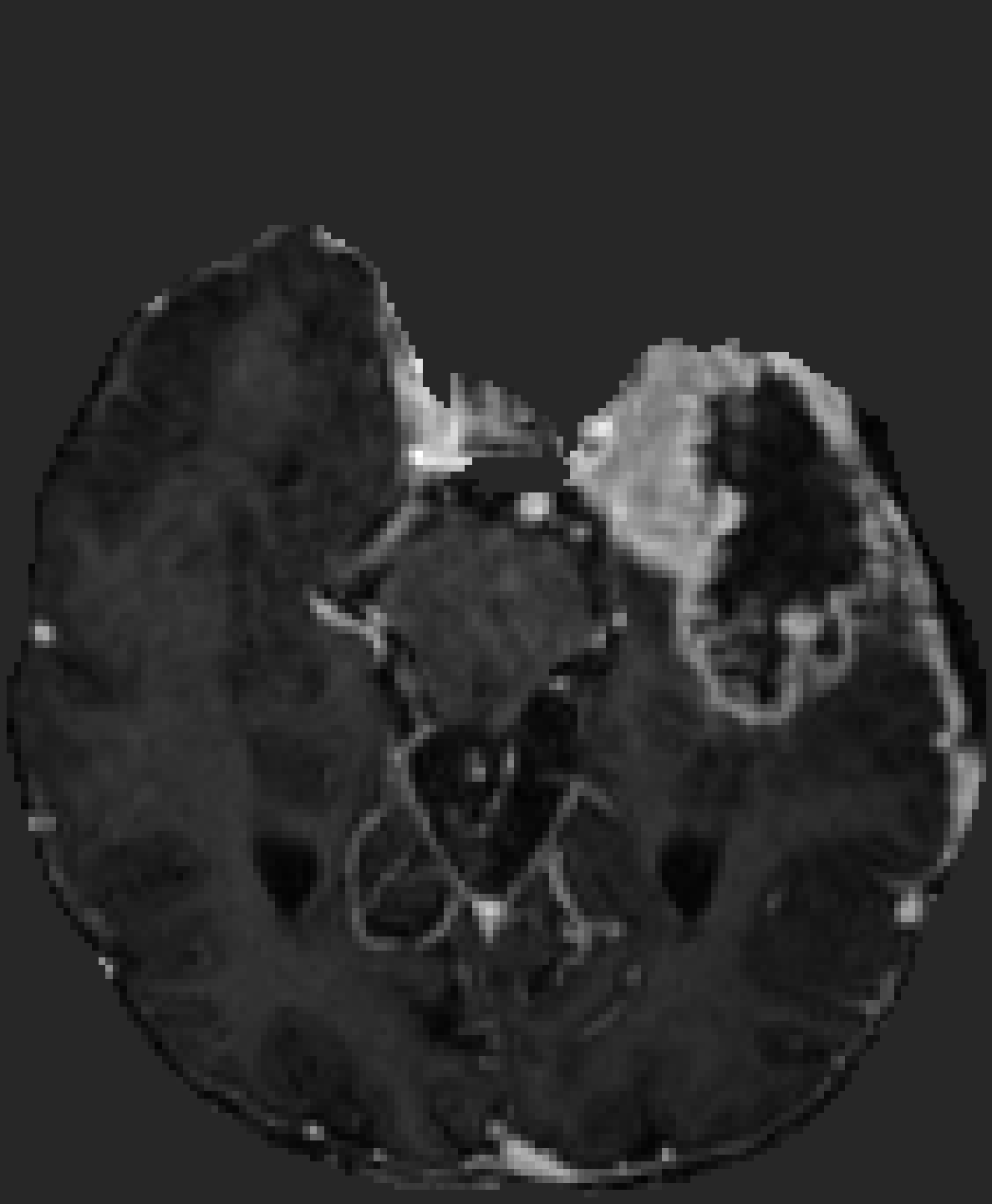} &
\includegraphics[width=0.185\textwidth]{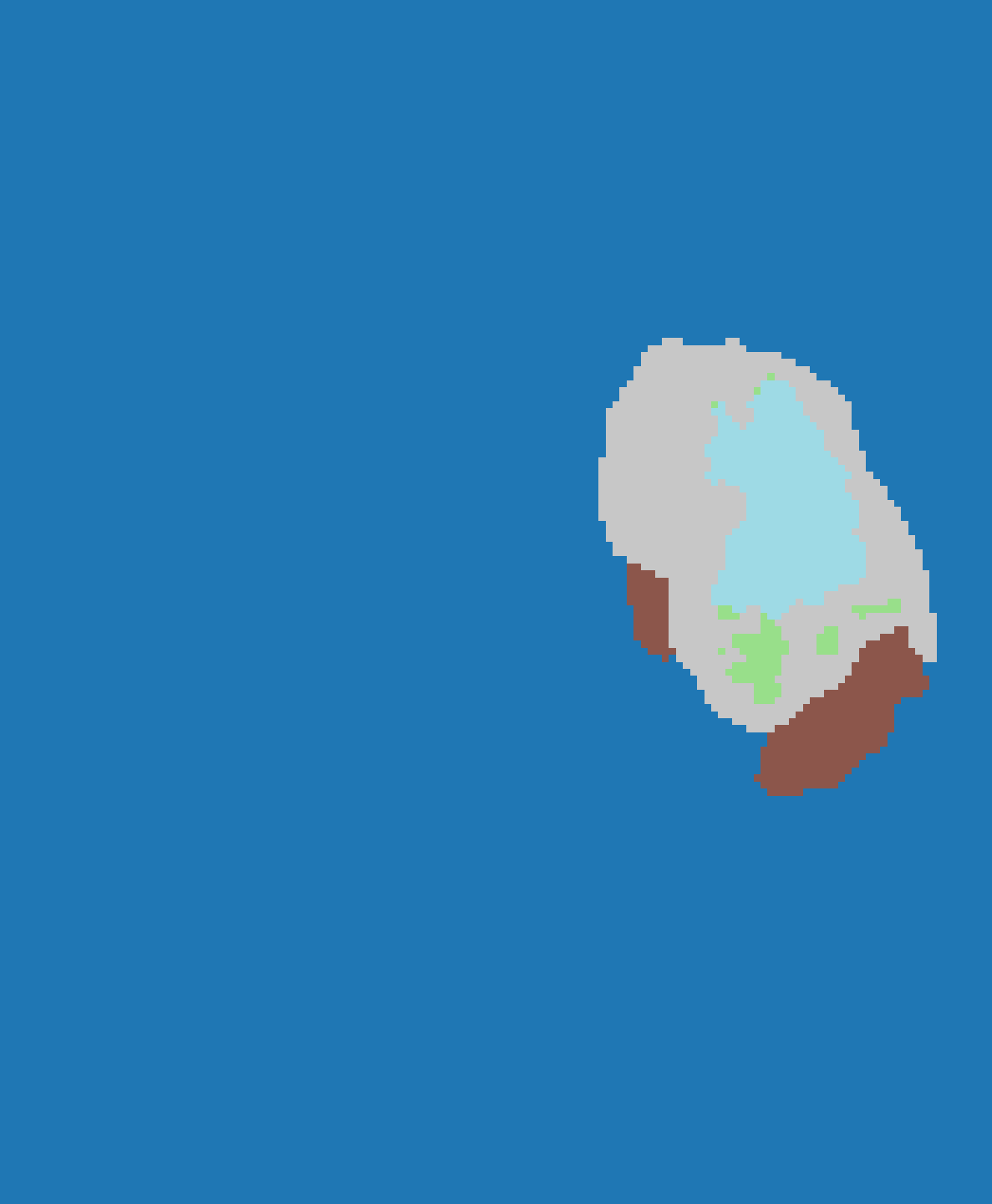} &
\includegraphics[width=0.185\textwidth]{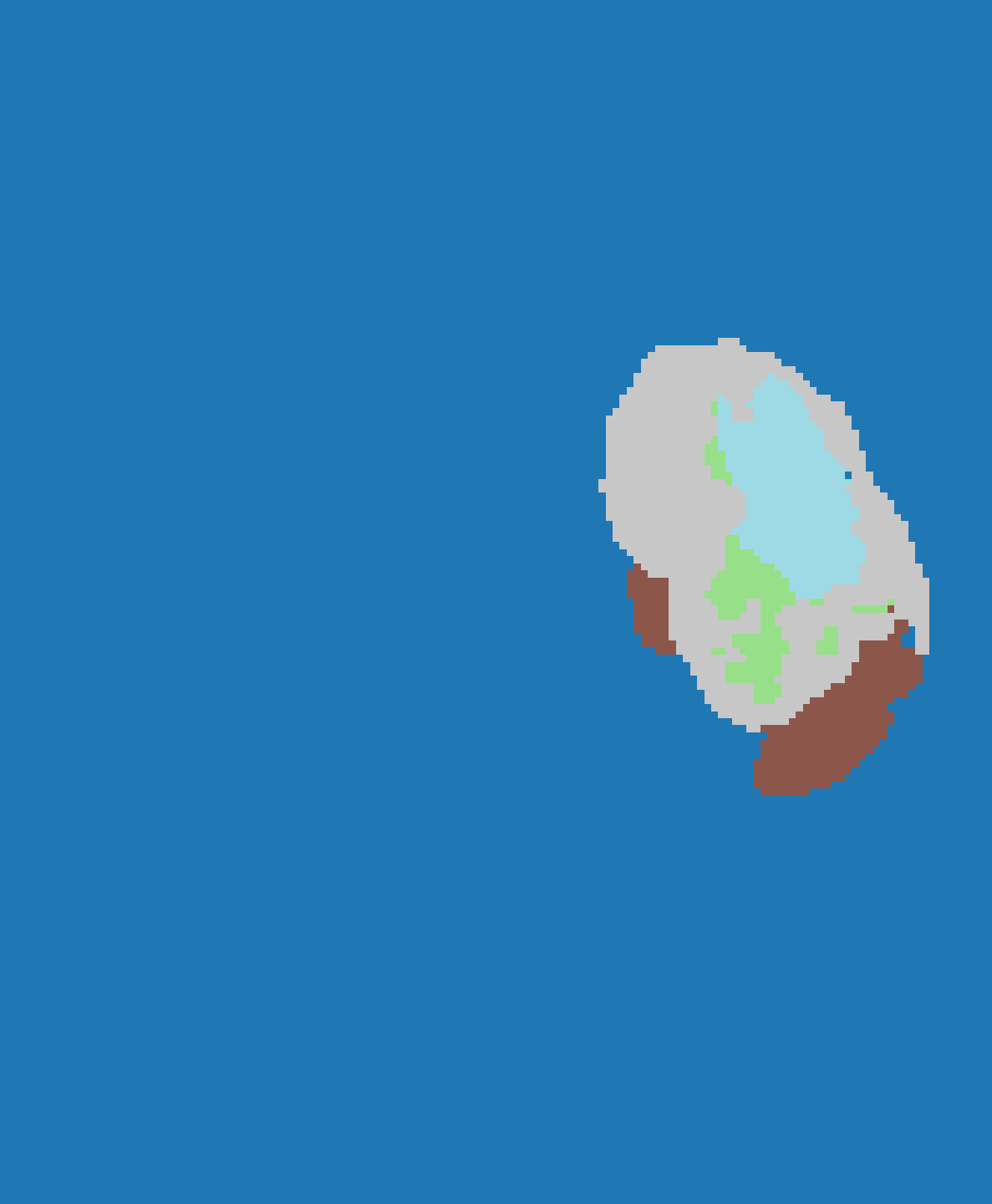} &
\includegraphics[width=0.185\textwidth]{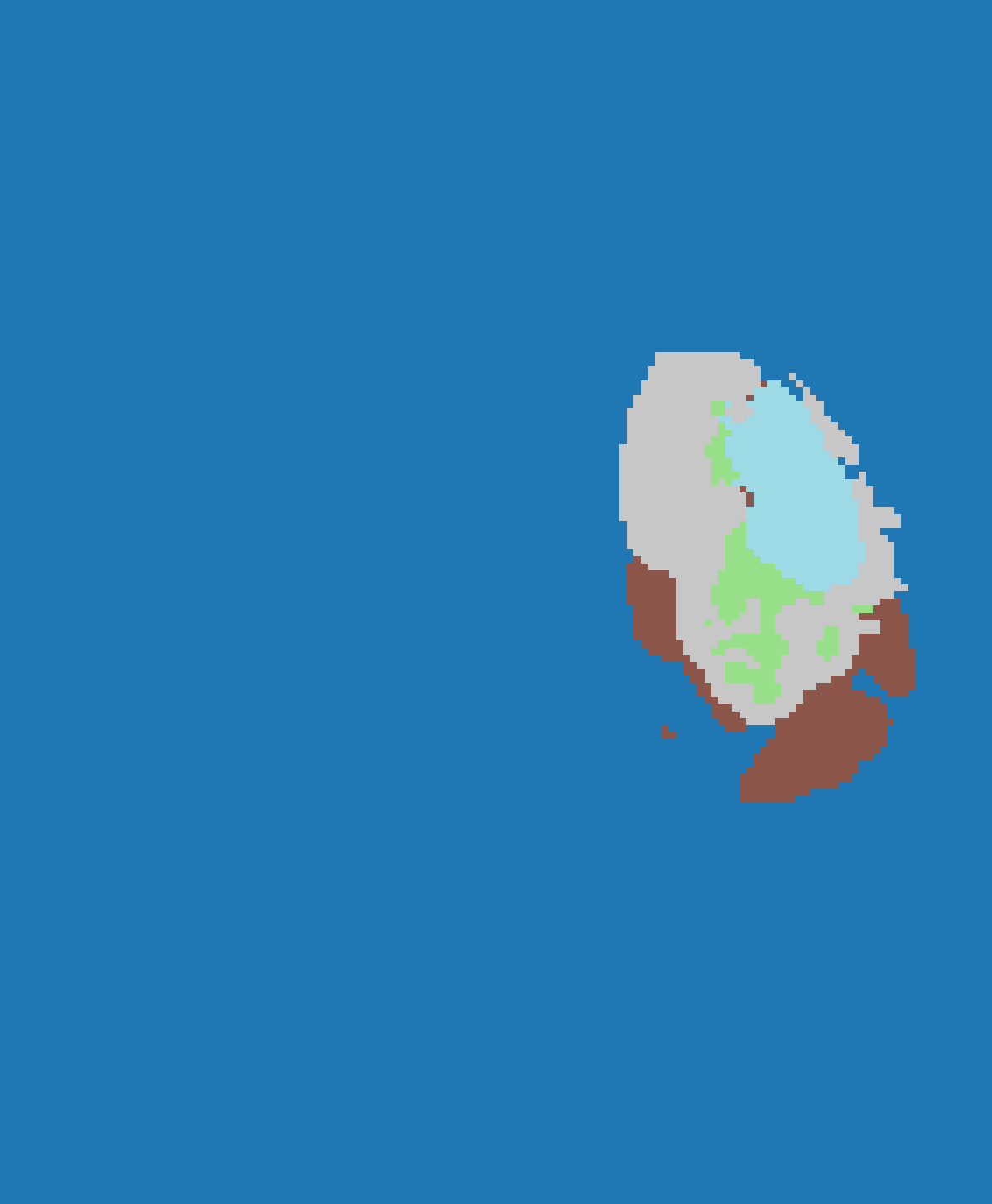} &
\includegraphics[width=0.185\textwidth]{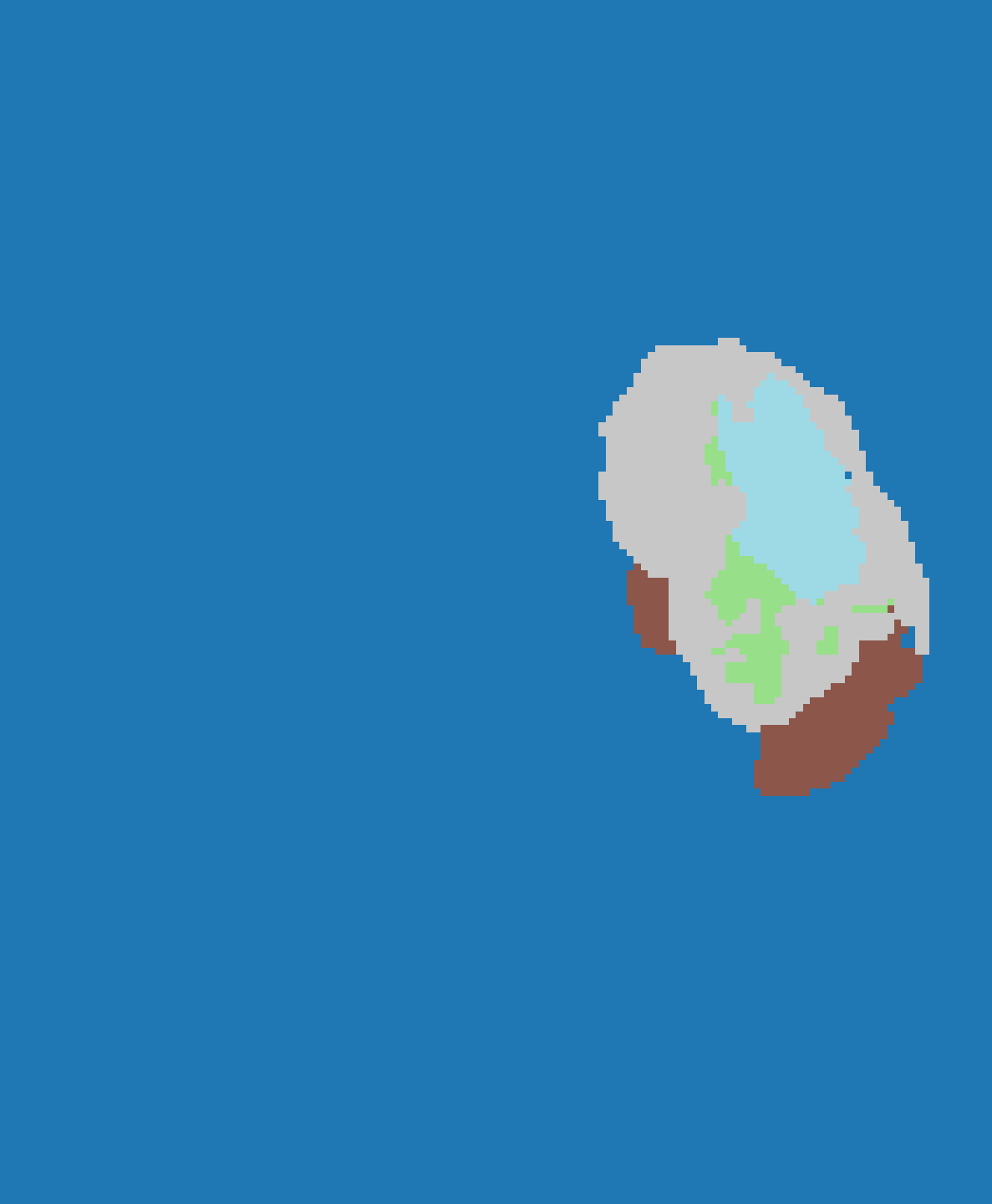} \\
\\[-0.45em]
\scriptsize SGAV-MP error & & \scriptsize FP32 uncertainty & \scriptsize INT4 uncertainty & \scriptsize SGAV-MP uncertainty \\
\includegraphics[width=0.185\textwidth]{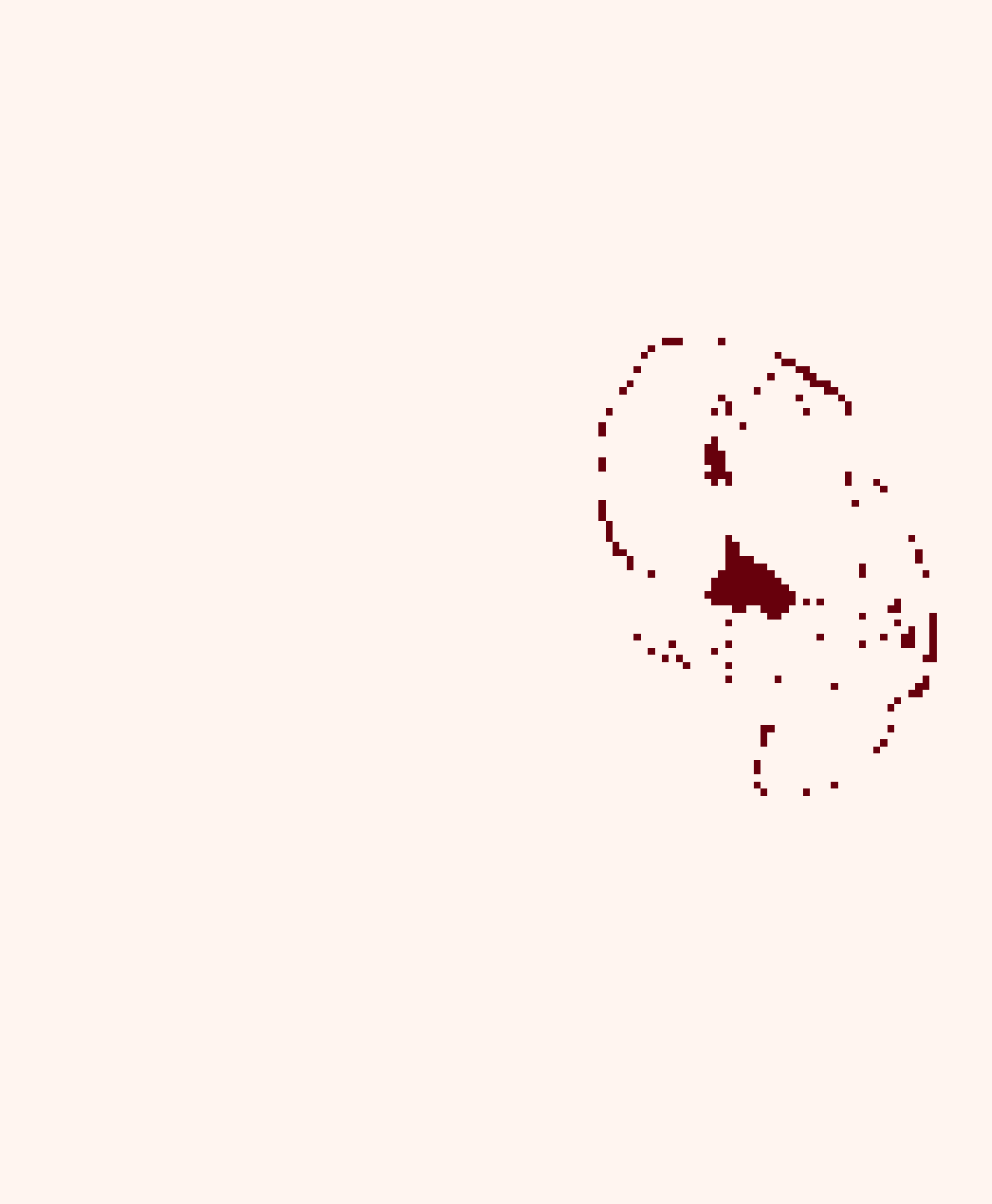} &
&
\includegraphics[width=0.185\textwidth]{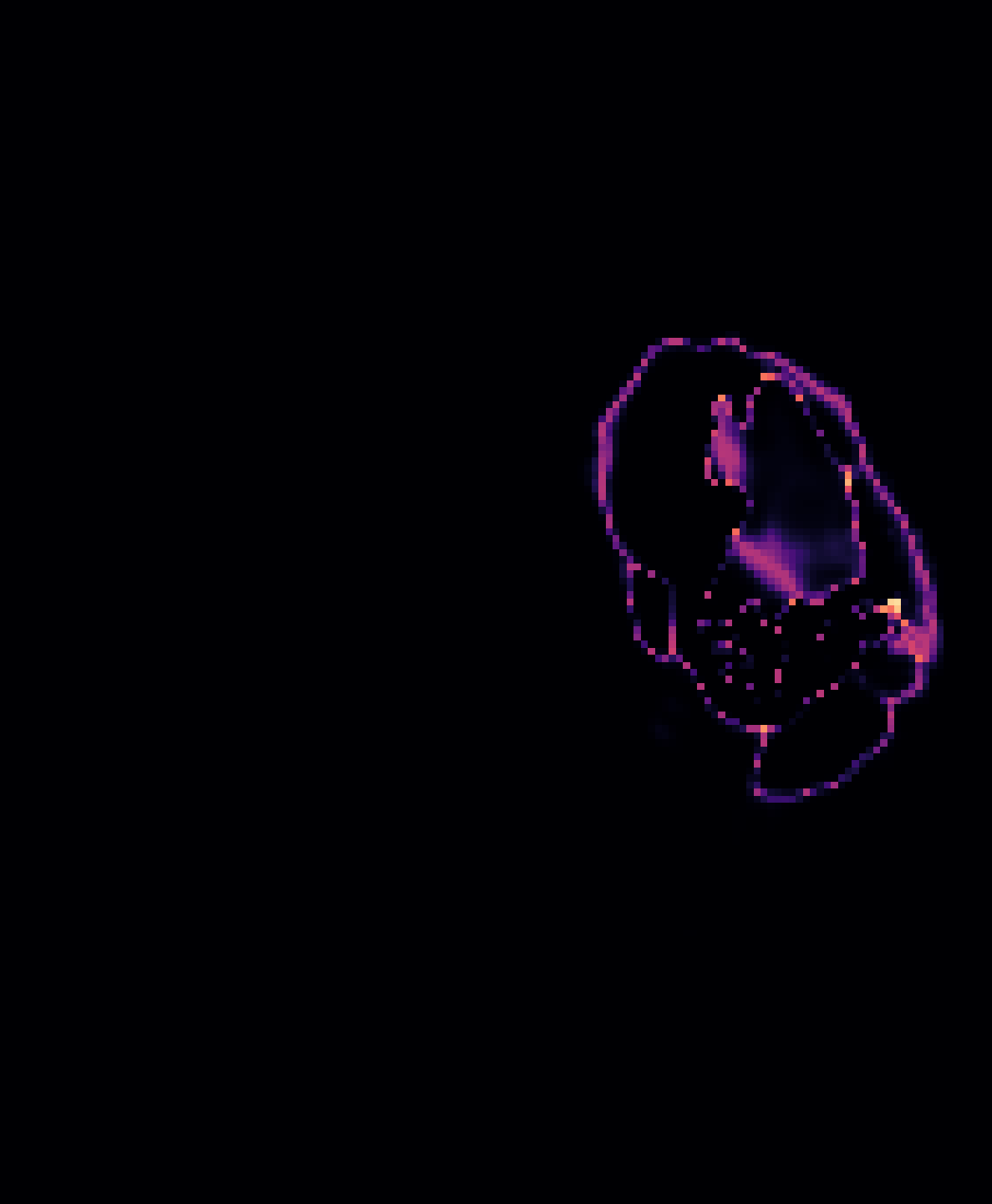} &
\includegraphics[width=0.185\textwidth]{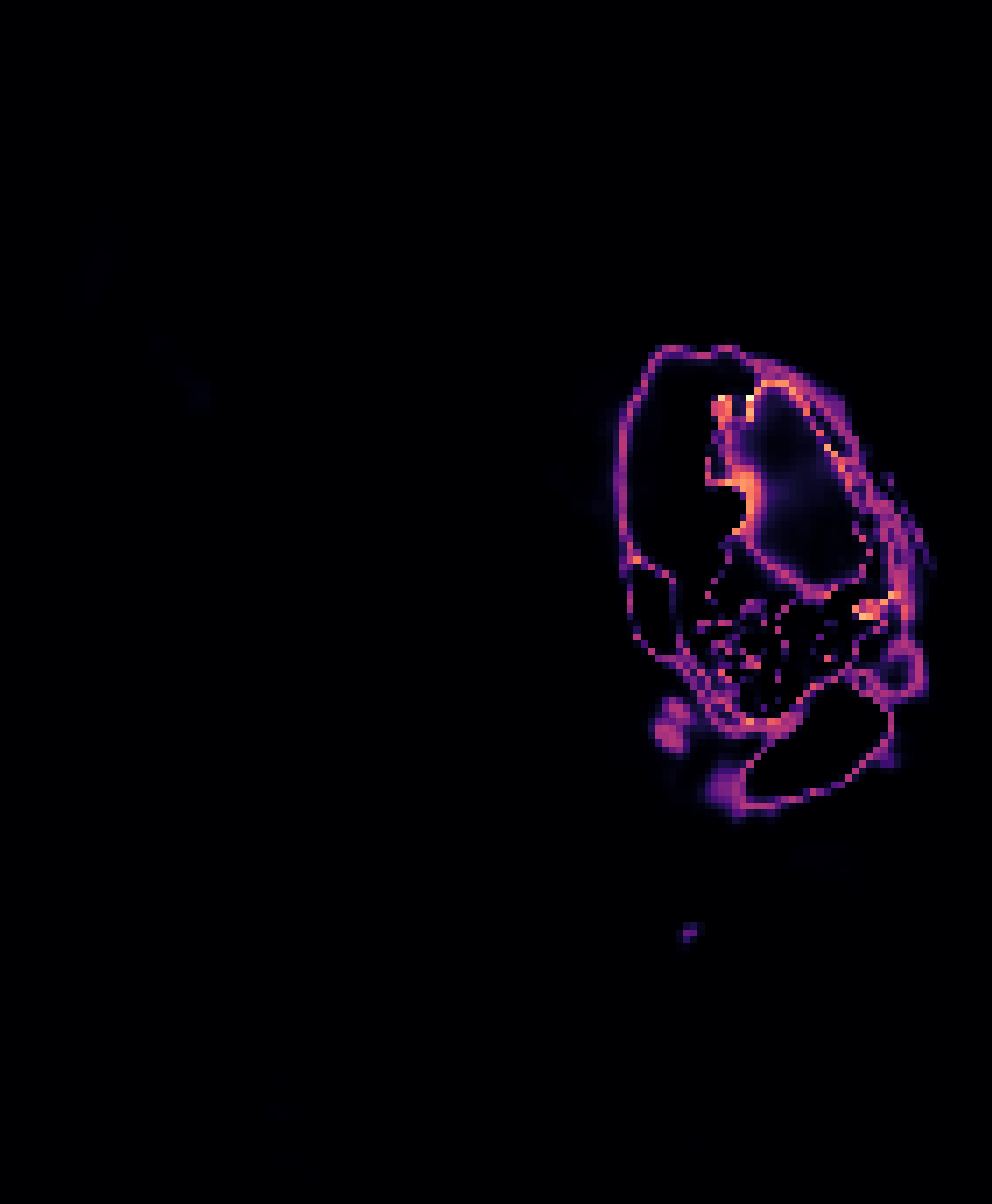} &
\includegraphics[width=0.185\textwidth]{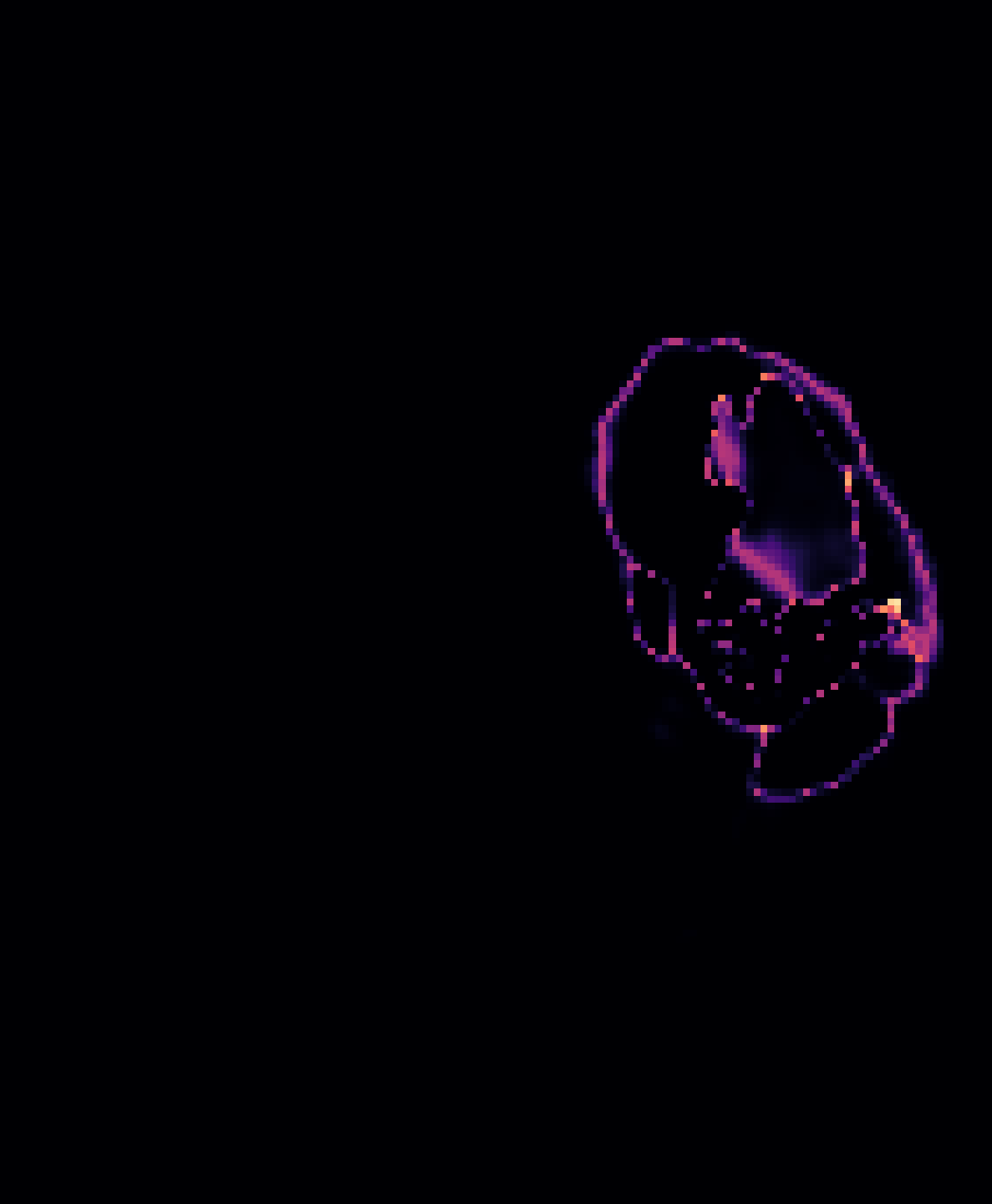}
\end{tabular}
\caption{Additional qualitative BraTS example using SegResNet, case 8, slice 59.}
\label{fig:supp_brats_case8_z59}
\end{figure}

\begin{figure}[tb!]
\centering
\setlength{\tabcolsep}{1.5pt}
\begin{tabular}{ccccc}
\scriptsize Input & \scriptsize Ground truth & \scriptsize FP32 mask & \scriptsize INT4 mask & \scriptsize SGAV-MP mask \\
\includegraphics[width=0.185\textwidth]{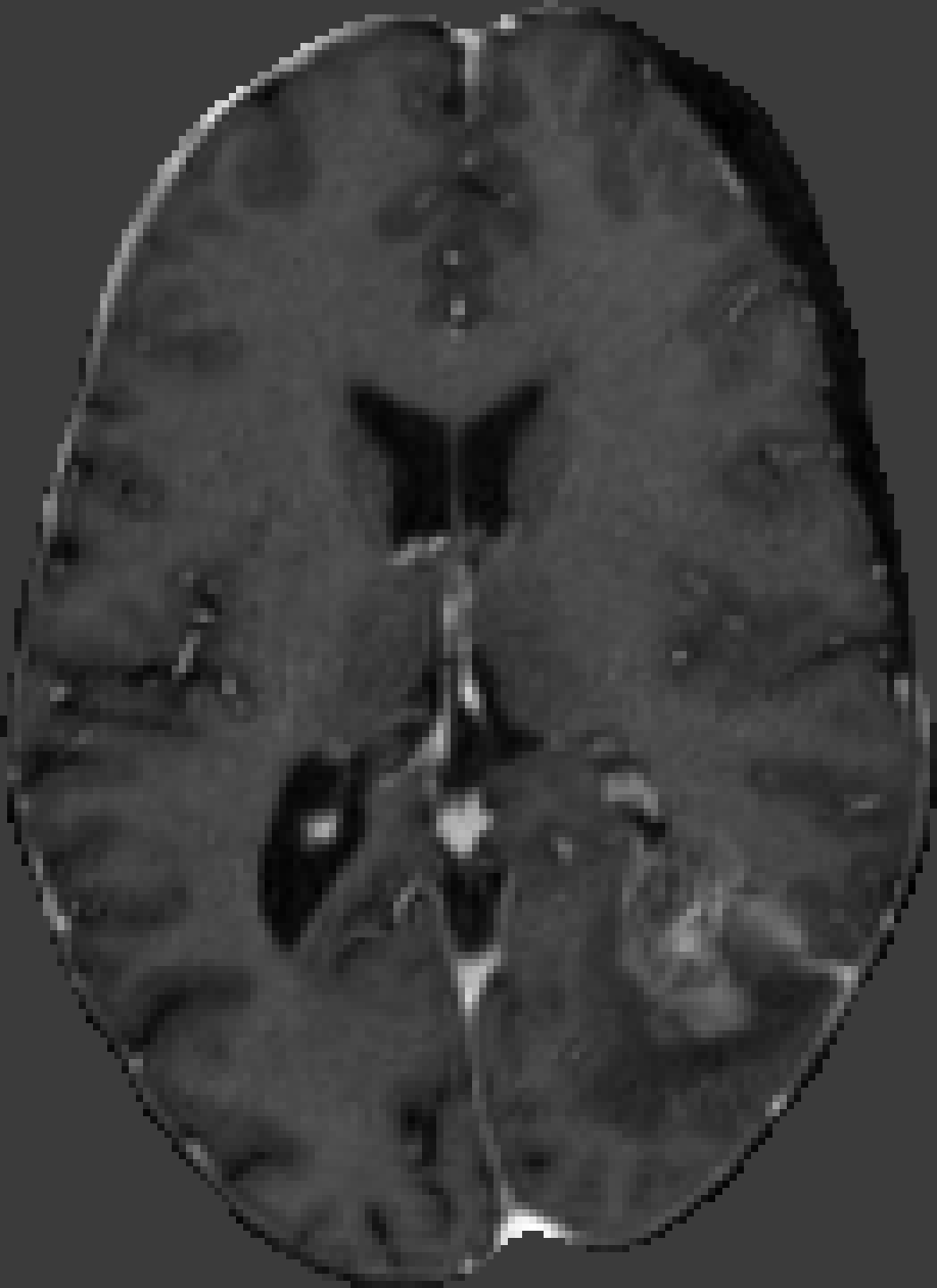} &
\includegraphics[width=0.185\textwidth]{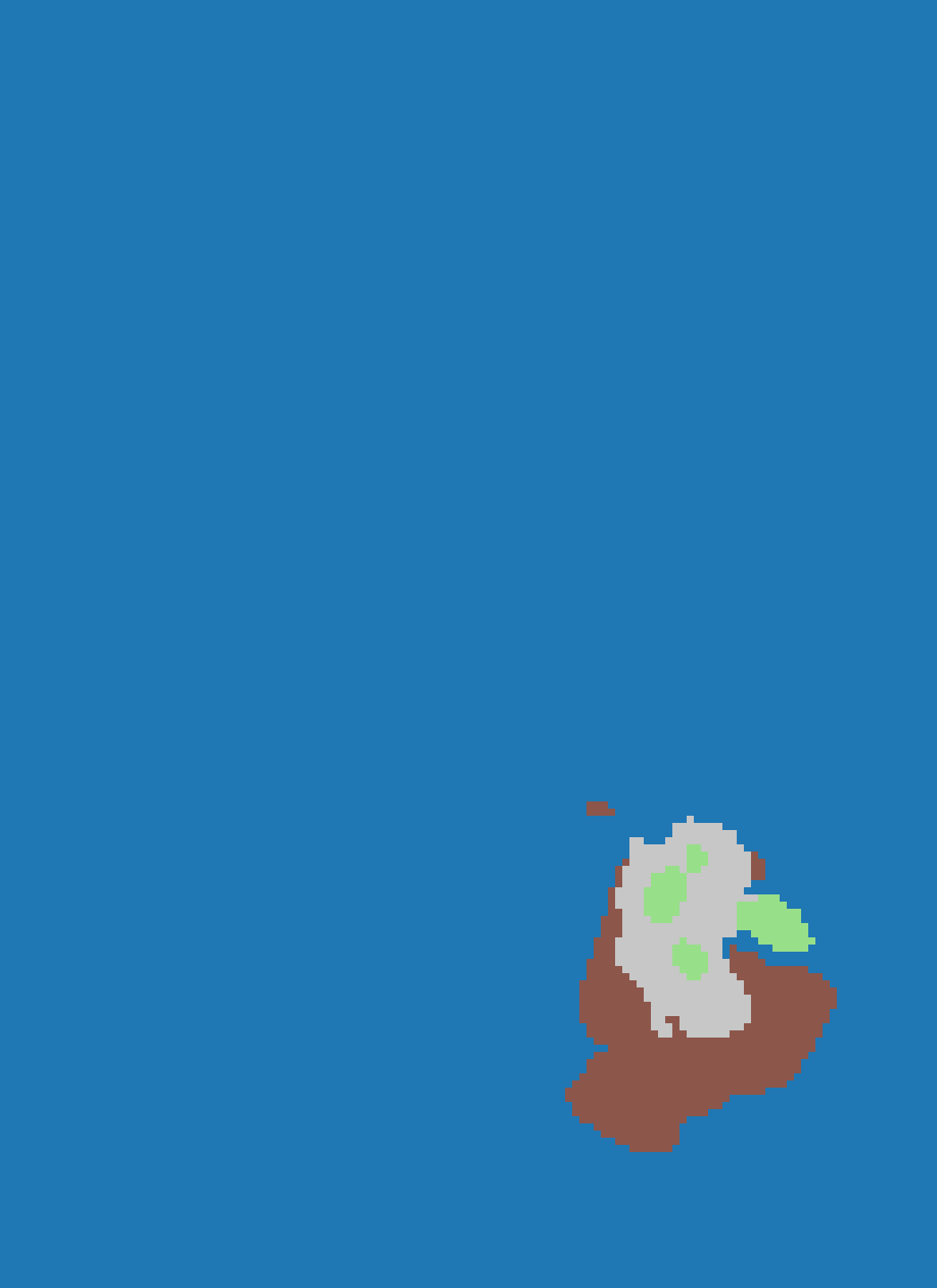} &
\includegraphics[width=0.185\textwidth]{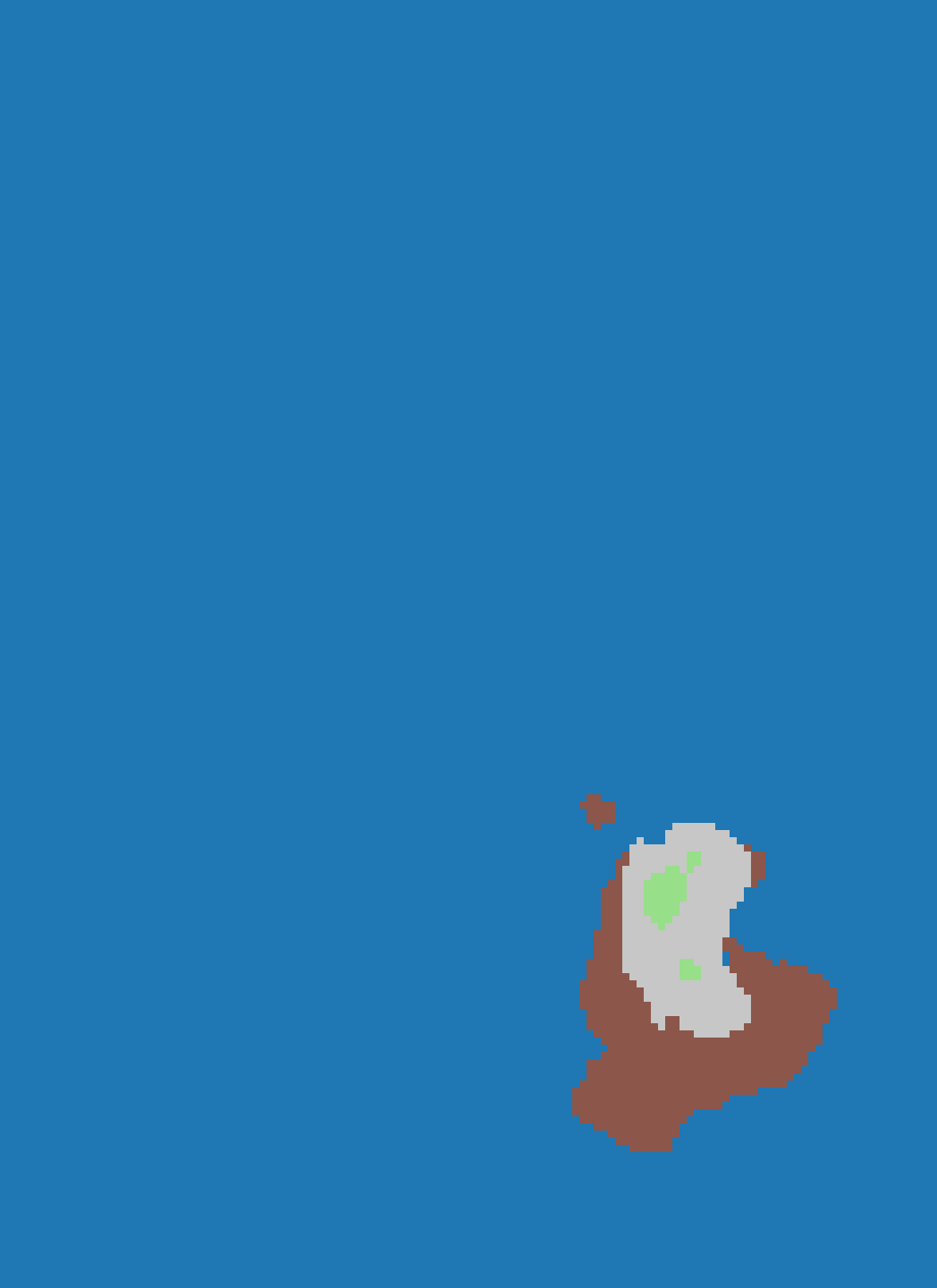} &
\includegraphics[width=0.185\textwidth]{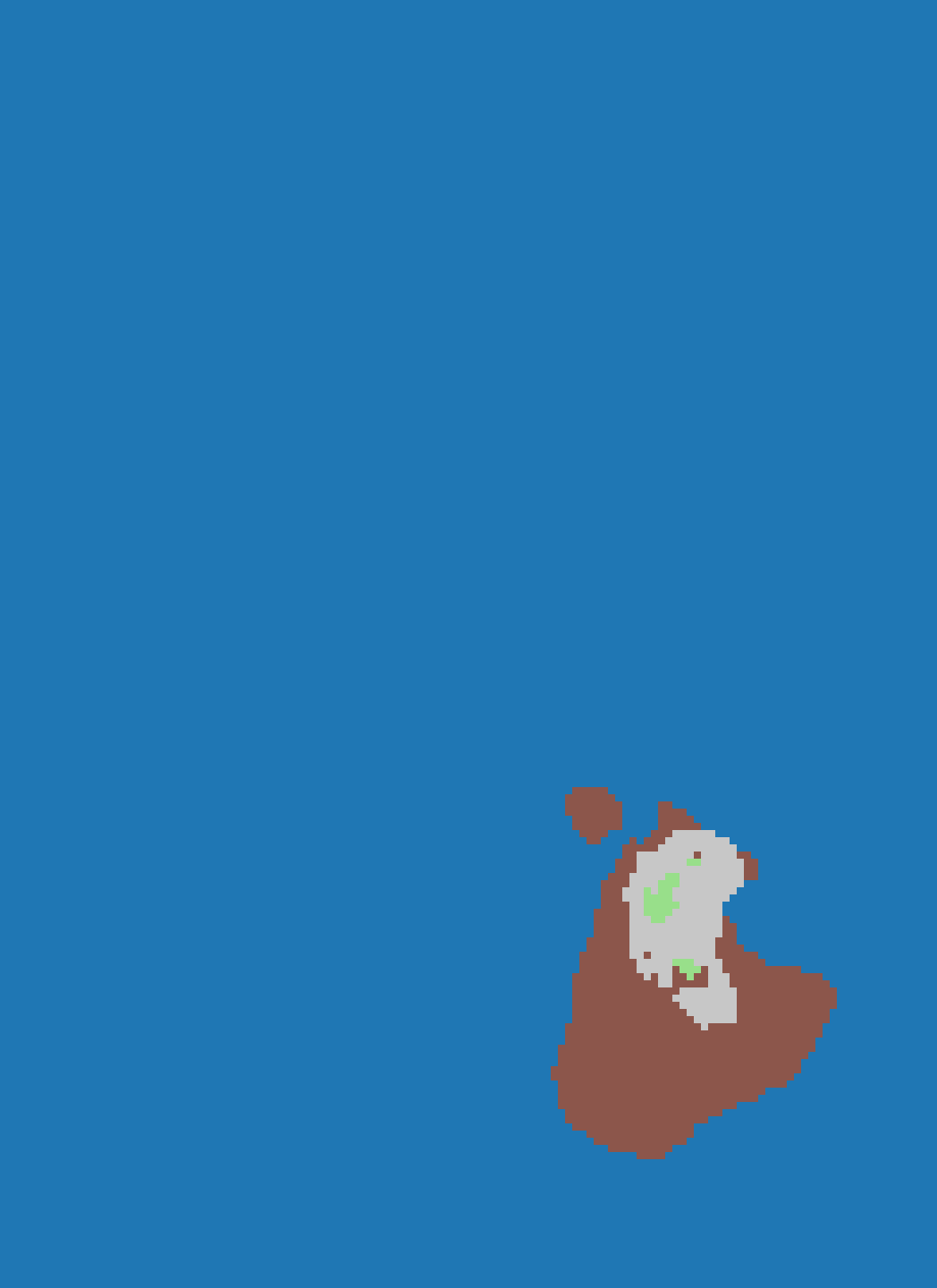} &
\includegraphics[width=0.185\textwidth]{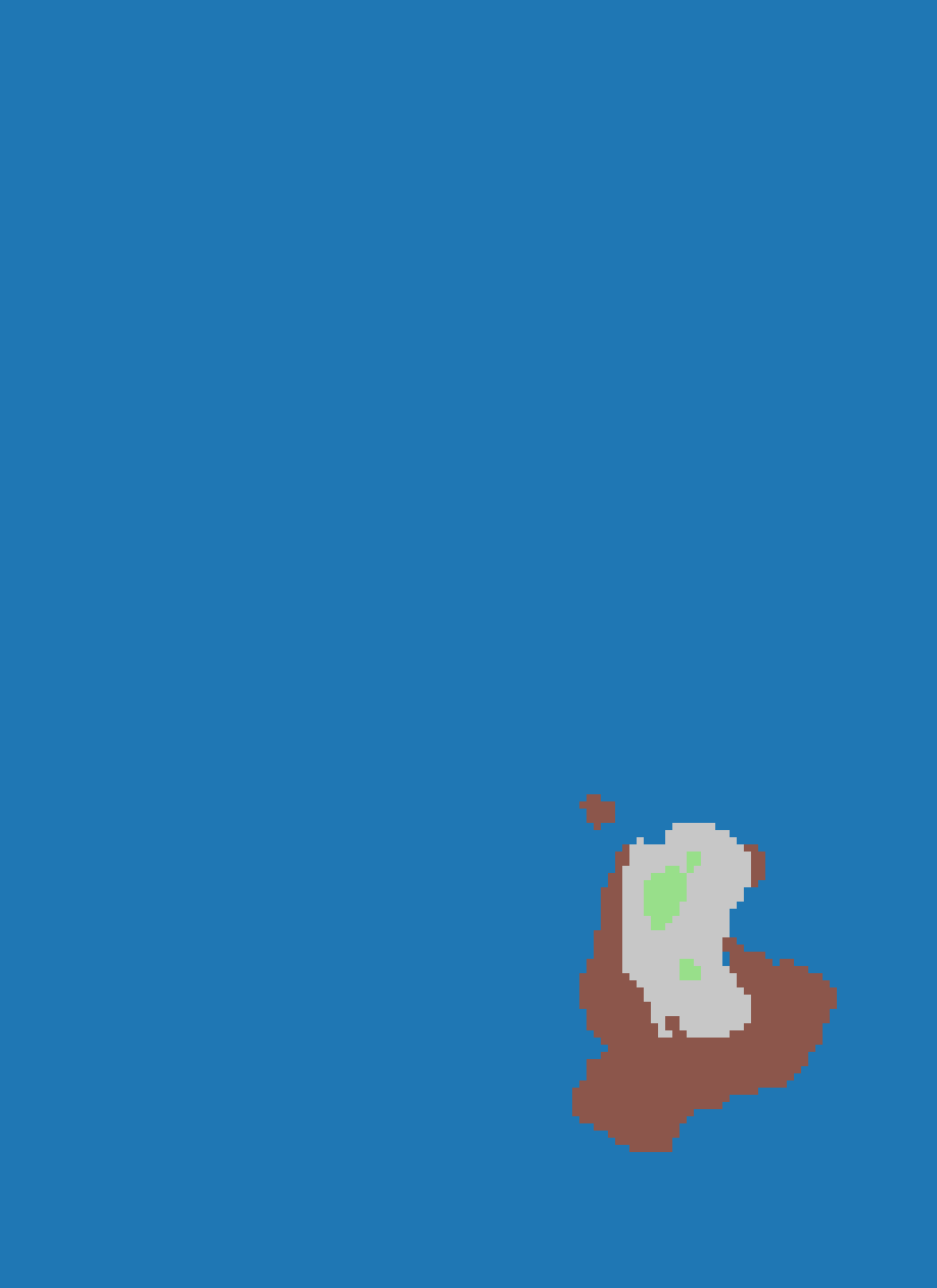} \\
\\[-0.45em]
\scriptsize SGAV-MP error & & \scriptsize FP32 uncertainty & \scriptsize INT4 uncertainty & \scriptsize SGAV-MP uncertainty \\
\includegraphics[width=0.185\textwidth]{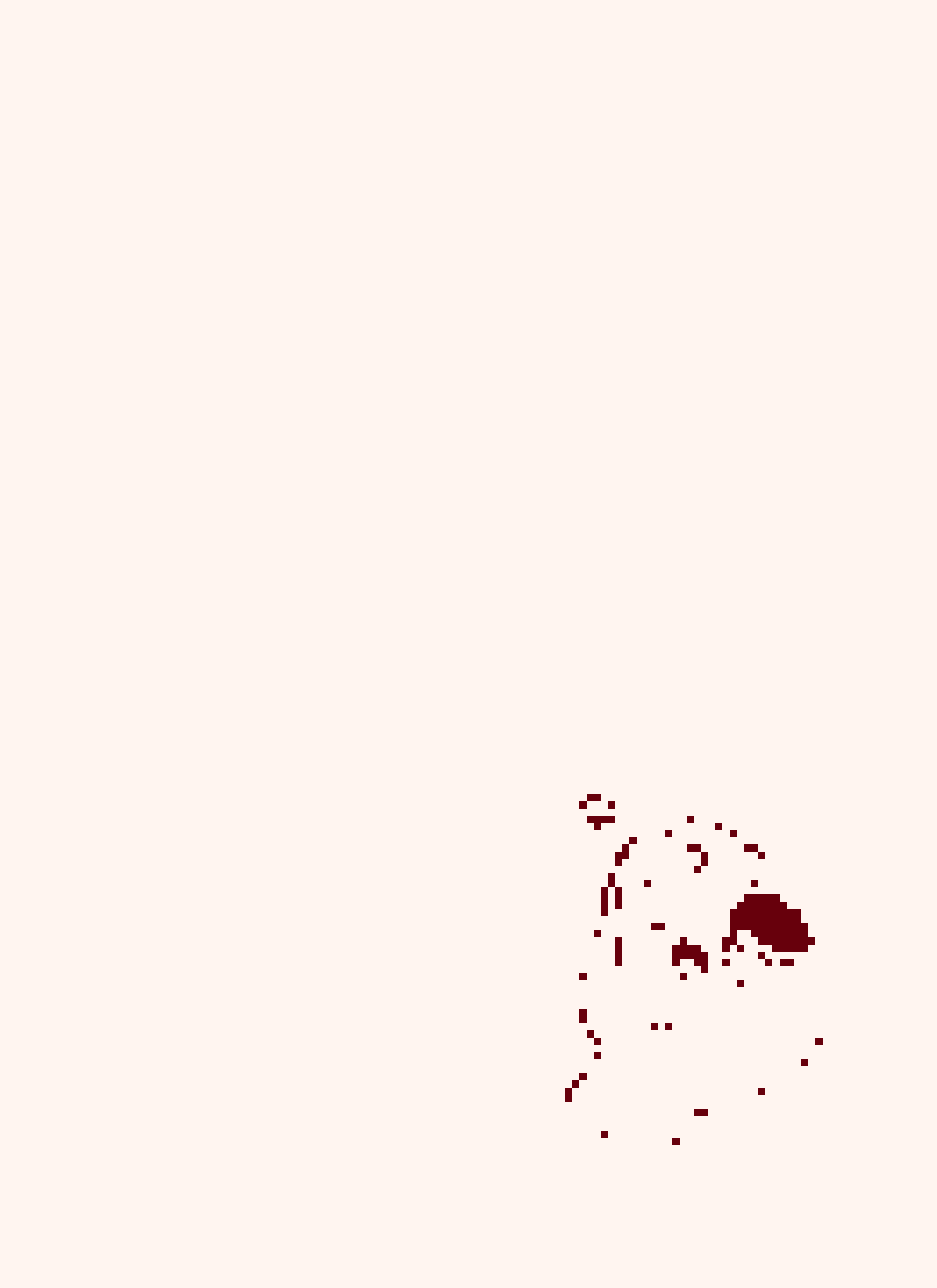} &
&
\includegraphics[width=0.185\textwidth]{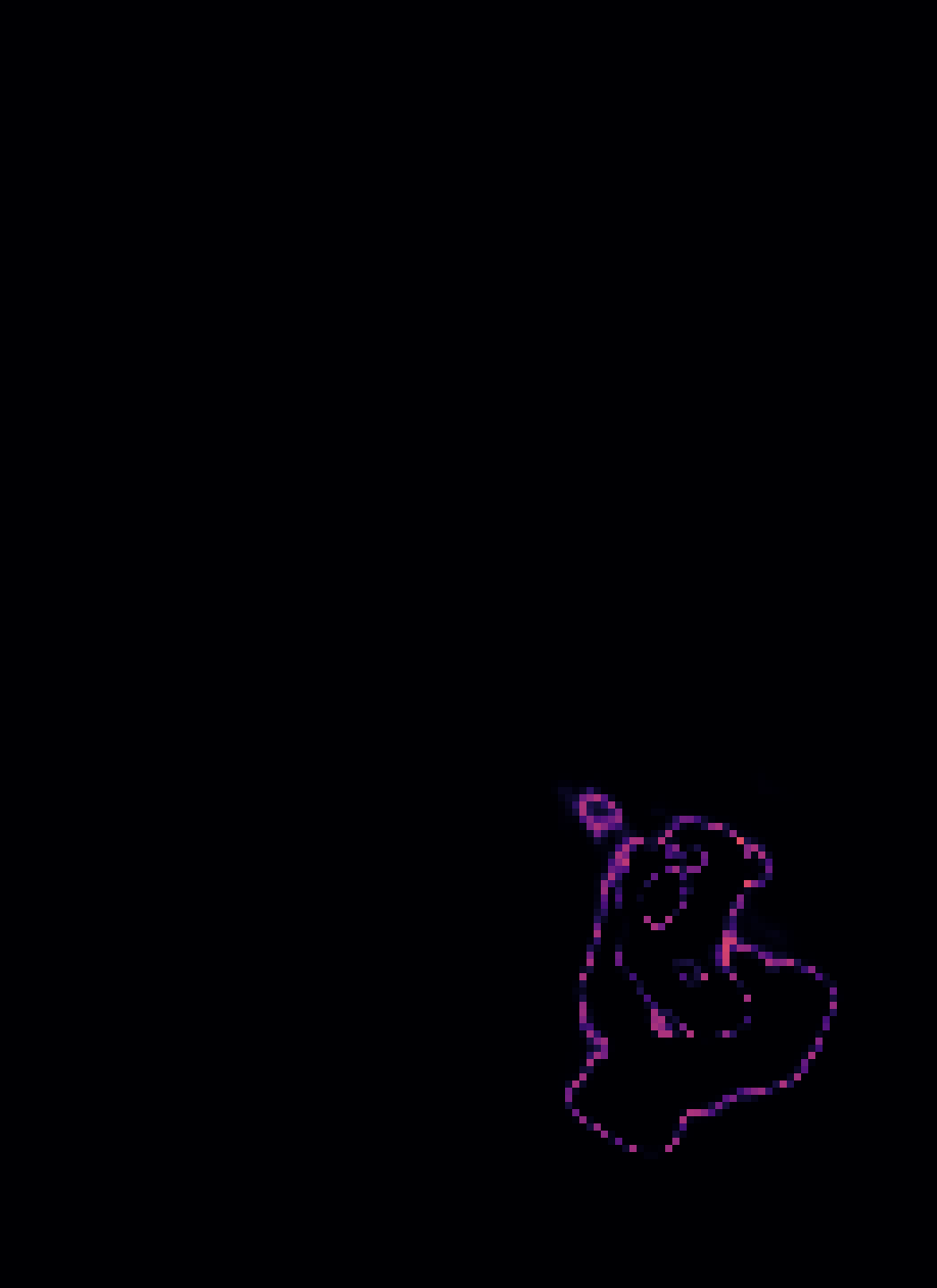} &
\includegraphics[width=0.185\textwidth]{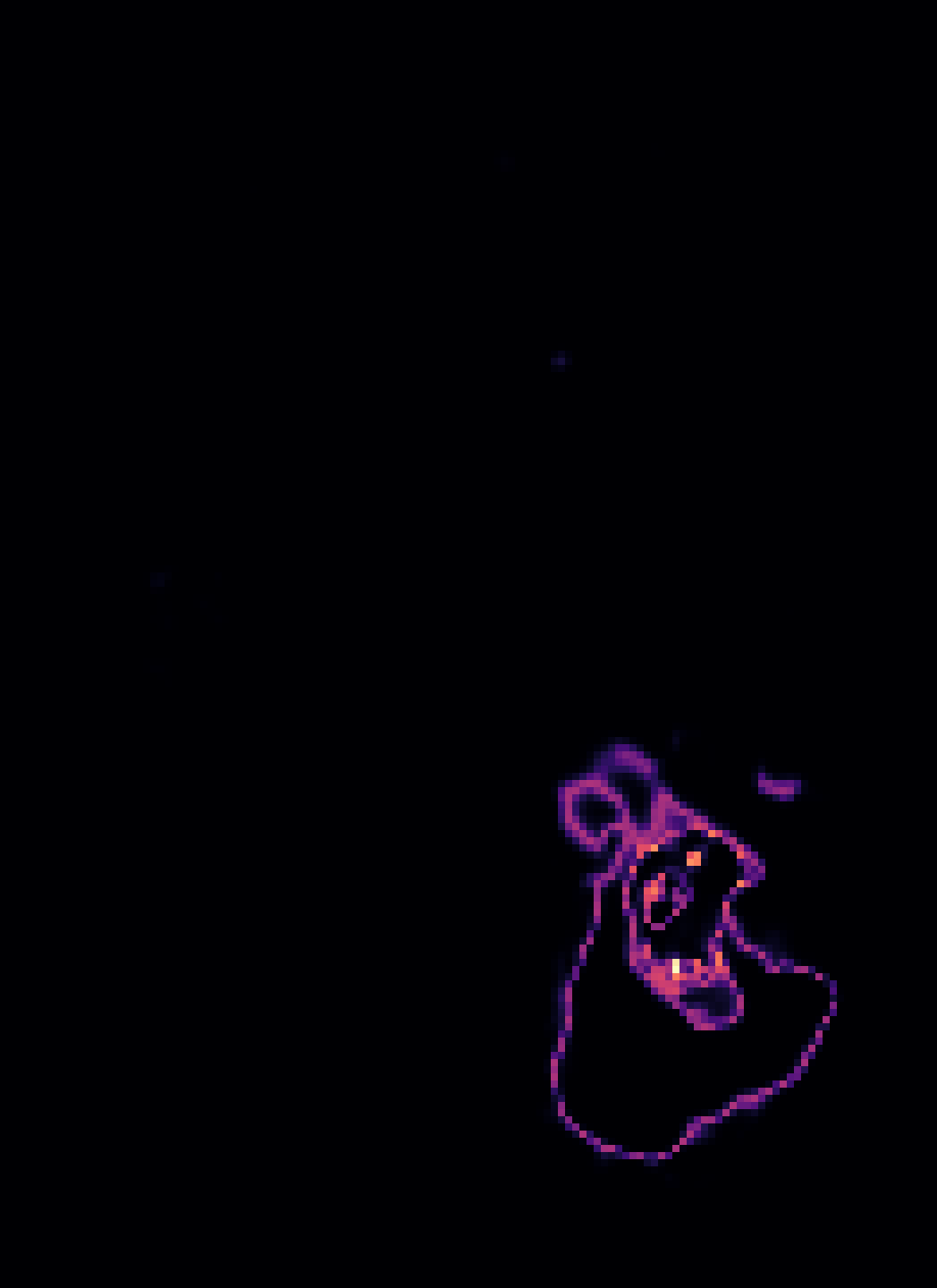} &
\includegraphics[width=0.185\textwidth]{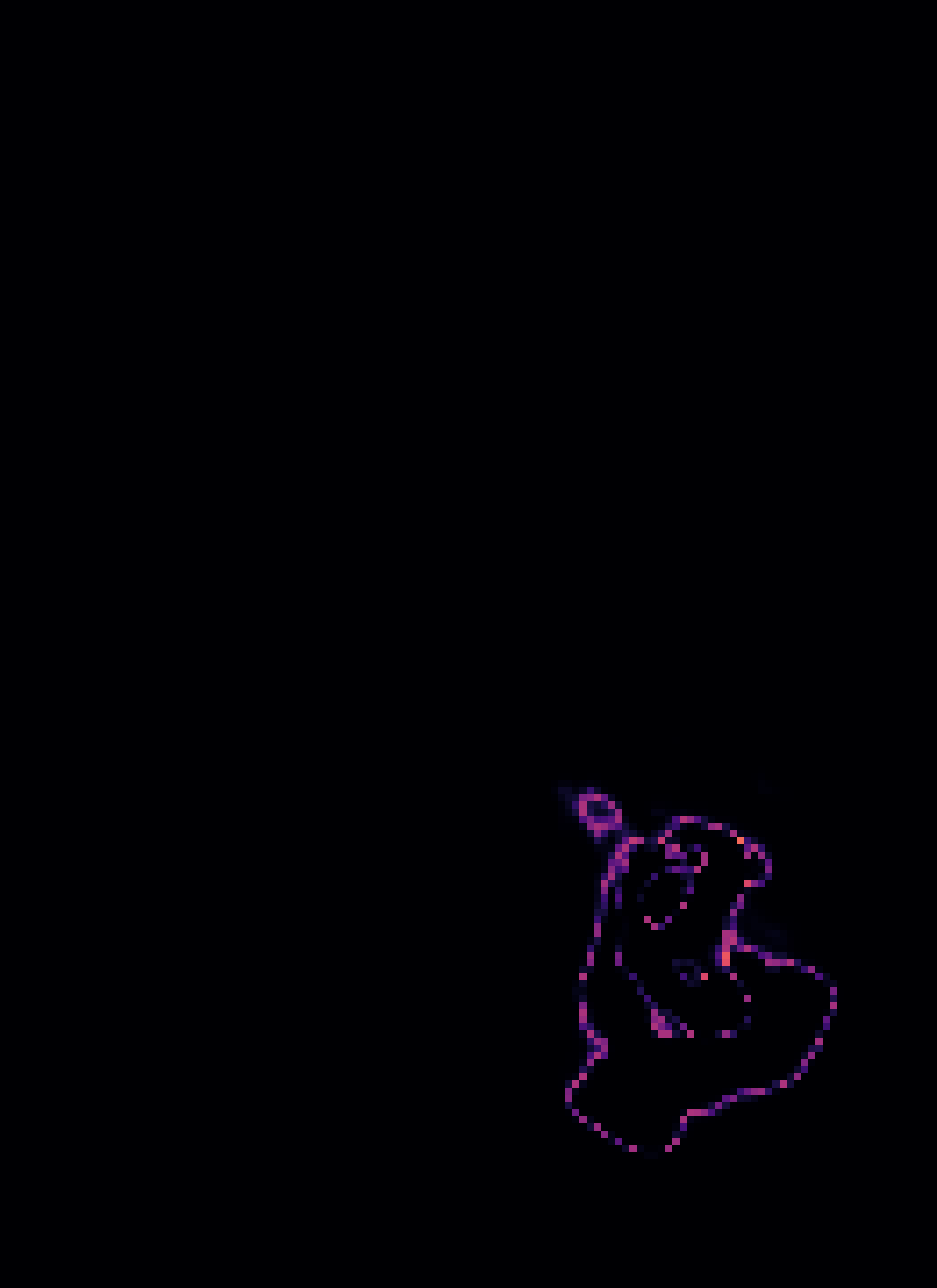}
\end{tabular}
\caption{Additional qualitative BraTS example using SegResNet, case 10, slice 82.}
\label{fig:supp_brats_case10_z82}
\end{figure}
